\documentclass[letterpaper]{scrartcl} 

\usepackage[]{brandl}

\newcommand{\nash}[1][]{\ifthenelse{\equal{#1}{}}{\ensuremath{\mathit{NE}}\xspace}{\ensuremath{\mathit{NE}(#1)}\xspace}}
\newcommand{\nashpure}[1][]{\ifthenelse{\equal{#1}{}}{\ensuremath{\mathit{NE}_{\mathrm{pure}}}\xspace}{\ensuremath{\mathit{NE}_{\mathrm{pure}}(#1)}\xspace}}
\newcommand{\nashfs}[1][]{\ifthenelse{\equal{#1}{}}{\ensuremath{\mathit{NE}_{\mathrm{fs}}}\xspace}{\ensuremath{\mathit{NE}_{\mathrm{fs}}(#1)}\xspace}}
\newcommand{\nashq}[1][]{\ifthenelse{\equal{#1}{}}{\ensuremath{\mathit{NE}_{\mathbb{Q}}}\xspace}{\ensuremath{\mathit{NE}_{\mathbb{Q}}(#1)}\xspace}}
\newcommand{\ce}[1][]{\ifthenelse{\equal{#1}{}}{\ensuremath{\mathit{CE}}\xspace}{\ensuremath{\mathit{CE}(#1)}\xspace}}
\newcommand{\ceconn}[1][]{\ifthenelse{\equal{#1}{}}{\ensuremath{\mathit{CE}_{\mathrm{conn}}}\xspace}{\ensuremath{\mathit{CE}_{\mathrm{conn}}(#1)}\xspace}}
\newcommand{\cce}[1][]{\ifthenelse{\equal{#1}{}}{\ensuremath{\mathit{CCE}}\xspace}{\ensuremath{\mathit{CCE}(#1)}\xspace}}
\newcommand{\eps}{\varepsilon}

\newcommand{\uni}{\mathrm{uni}}
\newcommand{\setm}{\setminus}
\newcommand{\timesdots}{\times\dots\times}

\title{Axioms for Correlated Equilibrium}

\author{Florian Brandl\thanks{\texttt{florian.brandl@uni-bonn.de}}\\ \large Institute for Microeconomics, University of Bonn}

\date{\today}

\begin{document}
	
\maketitle
\begin{abstract}
We characterize correlated equilibrium in finite normal-form games. Interpreting correlated strategies as action recommendations, we show that correlated equilibrium is the unique solution concept that never recommends a pure-strategy dominated action, treats payoff-equivalent actions interchangeably, and respects the sure-thing principle under uncertainty about payoffs and the correlation device. A parallel characterization identifies coarse correlated equilibrium among solution concepts that recommend dominant actions whenever they exist and treat payoff-equivalent actions as strongly interchangeable.
\end{abstract}

\section{Introduction}\label{sec:introduction}

Public randomization in strategic games induces correlation among the players' strategies.
Such correlated strategies enable outcomes that are infeasible when players randomize independently.
Game-theoretic analysis can thus draw from a larger pool of solution concepts when describing, predicting, or explaining outcomes of games. 
To guide the choice of the solution concept from among this wide range of possibilities, we take an axiomatic approach: we formulate axioms that require correlated strategies to be selected coherently across games and determine which solution concepts satisfy those.
We thereby obtain characterizations of correlated equilibrium and coarse correlated equilibrium.
The characterizations clarify which assumptions underlie each solution concept and can guide which one to use.

In our model, the set of players is fixed, and we consider normal-form games with a finite but variable set of actions for each player.
A correlated strategy is a distribution over action profiles, interpreted as a correlation device that recommends an action to each player.
A solution concept assigns to each game a set of correlated strategies.
Nash equilibrium, correlated equilibrium, and coarse correlated equilibrium are all examples of solution concepts.
The axioms we impose require that correlated strategies are selected coherently across games and that they respect rationality.

\emph{Consistency} addresses payoff uncertainty. Nature draws payoffs, and each draw yields a deterministic game.
If a correlated strategy is selected for every realization, consistency requires it to be selected for the lottery over realizations.
Rather than model lotteries explicitly, we identify them with convex combinations of the realized games, replacing random payoffs by their expectations.
In this sense, consistency applies the sure-thing principle of \citet{Sava54a} to strategic games.

\emph{Consequentialism} specifies how the solution concept responds when an action is duplicated.
Two actions of a player are \emph{clones} if every player's utility is the same under either action, regardless of the other players' actions.
In a two-player game, clones correspond to rows or columns that are identical in both players' payoff matrices.
When we add a clone, consequentialism allows the probability of any action profile involving the original action to be split in a fixed ratio between that profile and the profile in which the original action is replaced by its clone; all other action-profile probabilities stay fixed.
Requiring the same ratio across all affected profiles ensures that recommendations of either clone convey the same information about the other players' actions.\footnote{For strategy profiles with independent randomization, consequentialism is weaker than the invariance axiom of \citet{KoMe86a}, which also allows introducing actions that are convex combinations of existing ones.}

\emph{Rationality} rules out recommendations of strictly dominated actions: an action profile receives probability $0$ whenever some player's action is strictly dominated in pure strategies.
We also require the solution concept to be continuous and to return a nonempty, convex set of correlated strategies in each game.
Convex-valuedness keeps mixtures of plausible correlation devices plausible, so the solution concept does not exclude outcomes that arise from uncertainty about which device is used.
Like consistency, this is another instance of the sure-thing principle.\footnote{\Cref{sec:discussion} discusses the behavioral implications of the axioms in detail.}

In correlated equilibrium, following the recommendation is optimal for a Bayesian expected utility maximizer.
Our main result (\Cref{thm:ce}) shows that correlated equilibrium is the only solution concept satisfying these axioms.
Equivalently, these axioms are exactly the coherence requirements that pin down correlated equilibrium.
In particular, they encode Bayesian expected-utility maximization.

The second characterization strengthens consequentialism and weakens rationality.
Strong consequentialism drops the fixed-ratio requirement: when an action is cloned, probability can be split arbitrarily between the original and the clone.
Clones can therefore carry different information about the other players' strategies.
Correlated equilibrium violates strong consequentialism because changing the information carried by a recommendation can make it suboptimal to follow either clone when recommended.
Combined with our other axioms, this yields an impossibility.
Coarse correlated equilibrium, introduced by \citet{MoVi78a}, requires that each player's expected utility from always following the recommendation is at least as high as the utility from committing to a fixed action before observing the recommendation.
Because this incentive constraint does not depend on how probability is split between clones, coarse correlated equilibrium respects strong consequentialism.
It violates rationality, however, because it rules out only unconditional deviations and need not make following the recommendation optimal after it is observed.
On the other hand, it is optimal to commit to dominant actions whenever they exist, so coarse correlated equilibrium satisfies the weaker form of rationality that requires dominant actions to be recommended.
Our second result (\Cref{thm:cce}) characterizes coarse correlated equilibrium as the unique continuous and convex-valued solution concept that satisfies consistency, strong consequentialism, and weak rationality, and that selects at least one correlated strategy for each game.

Both results contribute to the recent literature on axiomatic characterizations of solution concepts. 
In addition to showing which solution concepts are forced upon us when committing to a set of axioms, their necessity parts clarify which properties one commits to when using the characterized solution concept.

\section{Related Work}\label{sec:related-work}

Most axiomatic characterizations for normal-form games treat only uncorrelated strategies, i.e., solution concepts that return profiles of independent strategies.
The closest work to ours is \citet{BrBr23a}, who show that Nash equilibrium is the only total solution concept satisfying consistency, consequentialism, and weak rationality.
Once correlation is allowed, these axioms no longer pin down a unique concept: Nash equilibrium, correlated equilibrium, and coarse correlated equilibrium all satisfy them.
Their proofs rely on independent randomization and do not extend to correlated strategies.
\citet{SSTW25a} introduce a narrow bracketing axiom: if two strategy profiles are selected in two independent games, then their product profile is selected when the games are played simultaneously.\footnote{As noted by \citet{SSTW25a}, consistency and consequentialism imply narrow bracketing for uncorrelated strategies.}
They show that any total solution concept satisfying narrow bracketing, anonymity, rationality, and monotonicity in expected payoffs is a refinement of Nash equilibrium.

Building on \citet{PeTi96a}, \citet{NPRV96a} characterize Nash equilibrium using utility maximization in one-player games and a consistency condition that varies the set of players.
Their consistency requires that any strategy profile returned for an $n$-player game is also returned for the $(n-k)$-player game obtained by fixing the strategies of $k$ players.
Because our model keeps the player set fixed and allows correlated strategies, their characterization does not apply here.
Other axiomatic work on Nash equilibrium includes characterizations of pure Nash equilibrium \citep{Voor19a}, Nash equilibrium for games with quasiconcave utility functions \citep{Salo92a}, and a choice-theoretic characterization \citep{Cres26a}.

\citet{BrBr17c} consider solution concepts for two-player zero-sum games that return a set of strategies for a single player instead of strategy profiles or correlated strategies.
They show that returning all maximin strategies is the coarsest such solution concept that satisfies consistency, consequentialism, and rationality.
While we consider two-player zero-sum games in our proofs, there is no methodological connection with their work.

Epistemic game theory studies what players need to know about other players to justify Nash equilibrium.
The players' knowledge is modeled using Bayesian belief hierarchies, which consist of a game and a set of types for each player; a type includes the action played by this type and a belief about the types of the other players \citep{Hars67a}. 
That is, players have probabilistic beliefs about other players' types, but play a deterministic action conditional on their type.
A rational player maximizes their expected payoff given their type.
\citet{AuBr95a} show that for two-player games, the beliefs of every pair of types constitute a Nash equilibrium if their beliefs and rationality are mutually known.
This result extends to games with more than two players if the beliefs are commonly known and admit a common prior. 
\citet{Bare09a}, \citet{Hell13a}, and \citet{BaTs14a} show that the results of \citet{AuBr95a} still hold under weaker common knowledge assumptions.
\citet{AuDr08a} study the payoff a player can expect in a game with common knowledge of rationality and a common prior. 
They characterize rational expectations via correlated equilibria of an extended game: it is rational to expect a given payoff precisely when it is a recommendation-contingent expected payoff in some correlated equilibrium of the doubled game, where each of the player's actions is replaced by two clones.

The literature on equilibrium refinements takes equilibrium play as given and singles out equilibria with additional properties.
It focuses on equilibria that are strategically stable and robust with respect to the representation of a game.
The closest connection to our work is that the invariance axiom of \citet{KoMe86a}---the equilibrium selection only depends on the reduced normal form of a game---implies consequentialism.
The invariance to embedding condition of \citet{GoWi12a} is a further strengthening that also considers variable sets of players.

\section{Preliminaries}\label{sec:prelims}

Let $U$ be an infinite universal set of actions, and let $\mathcal F(U)$ be the collection of nonempty finite subsets of $U$.
Fix the player set $N = \{1,\dots,n\}$.
For action sets $A_1,\dots,A_n\in\mathcal F(U)$, write $A = A_1\timesdots A_n$ for the set of action profiles.
An $n$-player normal-form game on $A$ is a function $G\colon A\to\mathbb R^n$; player $i$'s payoff at $a\in A$ is $G_i(a)$.
A correlated strategy is a probability distribution on $A$, and the set of correlated strategies is denoted by $\Delta(A)$.
Interpret a correlated strategy $p$ as a signal structure: a joint signal $a$ is drawn from $p$, and player $i$ observes $a_i$ as a recommendation to play $a_i$.
Throughout, \emph{action} means a pure strategy, \emph{strategy} a mixed strategy, and \emph{profile} a player-indexed vector.
 
We say that $G\colon A\to \mathbb R^n$ is a blow-up of $G'\colon A'\to\mathbb R^n$ if $G$ is obtained from $G'$ by replacing each action with one or more payoff-equivalent actions.
That is, there is a surjection $\phi = (\phi_1,\dots,\phi_n)\colon A\to A'$ with $\phi_i\colon A_i\rightarrow A_i'$ such that $G = G'\circ\phi$.
Thus, actions in $\phi_i^{-1}(a_i')$ are payoff-equivalent copies, called \emph{clones}, of $a_i'\in A_i'$, and $G$ is obtained from $G'$ by replacing $a_i'$ by $|\phi_i^{-1}(a_i')|$ clones.
A correlated strategy $p\in\Delta(A)$ induces a correlated strategy $\phi_*(p)= p\circ\phi^{-1}\in\Delta(A')$ with $\phi_*(p)(a') = \sum_{a\in \phi^{-1}(a')} p(a)$ for $a'\in A'$.\footnote{%
Blow-ups of games also appear in work on equilibrium refinements.
The invariance axiom of \citet{KoMe86a} uses a more permissive notion of blowing up.
There, $G'$ is a reduced form of $G$ if $G'$ is obtained from $G$ by deleting actions that are convex combinations of other actions.
\citet{ChGo06a} connect two refinements of correlated equilibrium---perfect correlated equilibrium and perfect direct correlated equilibrium (introduced by \citet{DhMe96a}) through blow-ups: every perfect correlated equilibrium is induced by a perfect direct correlated equilibrium in a blown-up game.
}

A solution concept $f$ maps every game $G$ to a set of correlated strategies $f(G)$ on the action profiles of $G$.
Call $f$ \emph{total} if $f(G)\neq \emptyset$ for each $G$, \emph{continuous} if it is an upper hemi-continuous correspondence, and \emph{convex-valued} if $f(G)$ is convex for each $G$.\footnote{
$f$ is upper hemi-continuous if for each $A\in \mathcal F(U)^n$, each sequence of games $(G^\ell)$ on $A$ converging to $G$, each sequence of correlated strategies $(p^\ell)$ with $p^\ell \in f(G^\ell)$ converging to $p$, we have $p\in f(G)$. 
}

Three common solution concepts are Nash equilibrium, correlated equilibrium, and coarse correlated equilibrium.
A \emph{Nash equilibrium} is a correlated strategy $p\in\Delta(A)$ that is a product distribution of $p_i\in\Delta(A_i)$, $i\in N$, such that $(p_1,\dots,p_n)$ is a Nash equilibrium of $G$ in the usual sense.\footnote{Since we work with correlated strategies throughout, it is more convenient to view a Nash equilibrium as a correlated strategy that is the product of independent strategies rather than as a strategy profile consisting of a strategy for each player.}

A \emph{correlated equilibrium} is a correlated strategy $p\in \Delta(A)$ such that for each $i \in N$ and $a_i,b_i\in A_i$,
\begin{align*}
	\sum_{a_{-i} \in A_{-i}} p(a_i,a_{-i}) \left( G_i(a_i,a_{-i}) - G_i(b_i,a_{-i})\right) \ge 0
\end{align*}
Thus, in a correlated equilibrium, it is optimal for each player to follow their recommendation when conditioning the distribution of other players' action profiles on the recommendation.
Nash equilibria are correlated equilibria where the players' recommendations are independent, so that conditioning on a player's recommendation does not change the distribution of other players' action profiles.

A \emph{coarse correlated equilibrium} is a correlated strategy $p\in\Delta(A)$ such that for each $i\in N$ and $b_i \in A_i$,
\begin{align*}
	\sum_{a \in A} p(a_i,a_{-i}) \left(G_i(a_i,a_{-i}) - G_i(b_i,a_{-i})\right) \ge 0
\end{align*}
Each player is weakly better off always following the recommendation than committing ex ante to a fixed action.
Each correlated equilibrium is a coarse correlated equilibrium since the former demands that deviations are not even profitable after observing the recommendation.

We denote by $\nash$, $\ce$, and $\cce$ the solution concepts that return the set of all Nash equilibria, correlated equilibria, and coarse correlated equilibria, respectively, for each game.

\section{The axioms}\label{sec:axioms}

Consistency requires that any correlated strategy returned in two games with the same action sets is also returned in any convex combination of the two games.
\begin{definition}[Consistency]\label{def:consistency}
	A solution concept $f$ satisfies consistency if for any two games $G,G'$ on $A$ and any $\lambda\in[0,1]$,
	\begin{align*}
		f(G)\cap f(G')\subseteq f(\lambda G + (1-\lambda) G').
	\end{align*}
\end{definition}
Consistency constrains how a solution concept responds to payoff uncertainty.
If a correlated strategy $p$ is returned for both $G$ and $G'$, then $p$ is also returned for a lottery that chooses between $G$ and $G'$.
We represent such lotteries by convex combinations, replacing random payoffs by their expectations.
In this sense, consistency applies the sure-thing principle of \citet{Sava54a} to strategic games.
\Cref{sec:discussion} analyzes the behavioral content of this axiom and the others.

In the game theory literature, consistency appears in characterizations of Nash equilibrium \citep{BrBr23a,KaKa24a} and of maximin strategies in two-player zero-sum games \citep{BrBr17c}.\footnote{\citet{KaKa24a} call the axiom the sure thing principle instead of consistency.}
Variants of consistency have been considered for single-player decision problems with uncertainty about the state of nature \citep[see, e.g.,][]{Cher54a,Miln54a,GiSc03a}.
Analogs of consistency feature prominently in axiomatic characterizations in social choice theory, where it relates the choices for different sets of voters to each other \citep[see, e.g.,][]{Smit73a,Youn75a,YoLe78a,Myer95b,Bran13a,LaSk21a}.
\citet{HaSe72a} use a variant of consistency (their Axiom 8) to characterize an extension of Nash's bargaining solution to bargaining under uncertainty.
The characterization of the Shapley value by \citet{Shap53c} also 
involves an additivity axiom (which he calls law of aggregation) that is similar in spirit to consistency.

Consequentialism imposes invariance to cloning actions.
When an action is replaced by two payoff-equivalent clones, probability mass can move only between action profiles that differ by which clone is used; all other action-profile probabilities remain unchanged.
This invariance has two natural interpretations, which lead to different definitions of consequentialism.
Consider the following two games and corresponding correlated strategies.
\[
G' =
\begin{pmatrix}
1,0 & 0,0\\
0,0 & 1,0
\end{pmatrix}
\quad
p' =
\begin{bmatrix}
	\frac12 & \frac12\\
	0 & 0
\end{bmatrix}
\quad\quad
G =
\begin{pmatrix}
1,0 & 0,0\\
1,0 & 0,0\\
0,0 & 1,0
\end{pmatrix}
\quad
p =
\begin{bmatrix}
\frac13        & \frac13\\
\frac16        & \frac16 \\
0        & 0
\end{bmatrix}
\quad
\tilde p =
\begin{bmatrix}
\frac12        & \frac16\\
0        & \frac13 \\
0        & 0
\end{bmatrix}
\]
Here, $G$ is obtained from $G'$ by cloning the row player's first action.
Both $p$ and $\tilde p$ induce $p'$ because each assigns probability $\frac12$ to the two clones of the action profiles in the first row of $G'$.
However, $p$ and $\tilde p$ distribute probability differently across the two clones.
Under $p$, a recommendation of either clone induces the same distribution over columns, and therefore the same belief about the column player's recommendation; under $\tilde p$, the induced beliefs differ.
Under the weaker notion, $p$ is returned for $G$ exactly when $p'$ is returned for $G'$.
Under the stronger notion, the same equivalence must also hold for $\tilde p$.

Suppose $G$ is a blow-up of $G'$ with surjection $\phi\colon A\to A'$. A correlated strategy $p\in\Delta(A)$ is \emph{clone-symmetric} if for every $i\in N$ and $a_i\in A_i$ there exists $\alpha_i(a_i) \in [0,1]$ such that for all $a_{-i} \in A_{-i}$, $p(a_i,a_{-i}) = \alpha_i(a_i) \sum_{b_i\in A_i\colon \phi_i(b_i)=\phi_i(a_i)} p(b_i,a_{-i})$.
Equivalently, any two clones of the same action in $G'$ either induce the same conditional distribution on $A_{-i}$ or receive probability $0$.
In the example, $p$ is clone-symmetric, whereas $\tilde p$ is not.

\begin{definition}[(Strong) consequentialism]\label{def:consequentialism}
	A solution concept $f$ satisfies consequentialism if for all games $G,G'$ such that $G$ is a blow-up of $G'$ with surjection $\phi$ and for each clone-symmetric $p\in\Delta(A)$,
	\begin{align*}
		p \in f(G) \text{ if and only if } \phi_*(p) \in f(G').
	\end{align*}
	Moreover, $f$ satisfies strong consequentialism if for all games $G,G'$ such that $G$ is a blow-up of $G'$ with surjection $\phi$ and all $p\in\Delta(A)$,
	\begin{align*}
		p \in f(G) \text{ if and only if } \phi_*(p) \in f(G').
	\end{align*}
\end{definition}
Consequentialism requires that a clone-symmetric correlated strategy is returned in $G$ if and only if its induced correlated strategy is returned in $G'$.
Equivalently, when adding a clone $a_i'$ of $a_i$ of player $i$, (i) for any $\alpha\in[0,1]$, the probability of any action profile involving $a_i$ can be split among its two clones so that the first clone receives an $\alpha$ fraction of the probability, and (ii) the probability of any action profile not involving $a_i$ remains the same.
Strong consequentialism requires that a correlated strategy is returned in $G$ if and only if its induced correlated strategy is returned in $G'$.
Equivalently, (i') the probability of an action profile including $a_i$ may be split arbitrarily among its two clones, and (ii) holds.
Under strong consequentialism, the two clones may induce different distributions over the action profiles of the players $N\setminus \{i\}$.
Strong consequentialism implies consequentialism, and consequentialism implies equivariance: relabeling a player's actions relabels the returned correlated strategies in the same way (see the appendix).
When restricting to strategies with independent randomization, both notions of consequentialism agree since such strategies are automatically clone-symmetric.

(Strong) consequentialism appears in the characterizations of \citet{BrBr17c,BrBr23a}; \citet{KaKa24a} uses the term strategy anonymity.
The invariance axioms of \citet{KoMe86a} and \citet{GoWi12a} strengthen consequentialism by also allowing convex combinations of existing actions.
Decision theory uses closely related invariance conditions as well.
For example, consequentialism corresponds to Postulate~6 (cloning a player's actions) together with Postulate~9 (cloning Nature's states, i.e., opponents' actions) in \citet{Cher54a}. Postulate~9 also appears as column duplication \citep{Miln54a} and as deletion of repetitious states \citep{ArHu72a,Mask79a}.
In social choice theory, \citet{Tide87a} introduced independence of clones, a parallel condition for cloning social alternatives \citep[see also][]{ZaTi89a,Bran13a}.

Rationality prescribes that actions that are dominated in pure strategies are never played. 
Weak rationality is weaker: it requires only that a dominant action is played with positive probability.
Formally, for $a_i,a_i'\in A_i$, $a_i$ \emph{dominates} $a_i'$ if $G_i(a_i,a_{-i}) > G_i(a_i',a_{-i})$ for each $a_{-i}\in A_{-i}$; in that case, $a_i'$ is \emph{dominated} by $a_i$. 
Action $a_i$ is \emph{dominant} if it dominates every other action in $A_i$.

\begin{definition}[(Weak) rationality]\label{def:rationality}
	A solution concept $f$ satisfies rationality if for each game $G$, each $p\in f(G)$, each $i\in N$, and each $a_i\in A_i$,
	\begin{align*}
		\text{if $a_i$ is dominated, then } p(a_i,a_{-i}) = 0 \text{ for each }a_{-i}\in A_{-i}.
	\end{align*}	
	A solution concept $f$ satisfies weak rationality if for each game $G$, each $p\in f(G)$, each $i\in N$, and each $a_i\in A_i$,
	\begin{align*}
		\text{if $a_i$ is dominant, then } p(a_i,a_{-i}) > 0 \text{ for some } a_{-i} \in A_{-i}.
	\end{align*}
\end{definition}
Both axioms refer only to pure-strategy dominance, so they do not constrain how players evaluate lotteries.
They also impose no assumptions about beliefs about opponents' rationality.
Rationality is equivalent to the strong domination axiom of \citet{Miln54a} and to Property~5 of \citet{Mask79a}; it is weaker than Postulate~2 of \citet{Cher54a}.

\section{Results}
\label{sec:results}

Correlated equilibrium satisfies consistency, consequentialism, and rationality.
In a correlated equilibrium, each player maximizes expected utility given the recommendation, and these axioms capture that behavior.
Among total, continuous, and convex-valued solution concepts, only \ce satisfies all three axioms.

\begin{theorem}
	\label{thm:ce}
	$\ce$ is the only total, continuous, and convex-valued solution concept that satisfies consistency, consequentialism, and rationality.
\end{theorem}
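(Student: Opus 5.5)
The plan has three parts: (a) check that $\ce$ has all the listed properties, (b) show $f(G)\subseteq\ce(G)$ for every admissible $f$, and (c) show $\ce(G)\subseteq f(G)$.

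\emph{Part (a).} Each property is a short computation.
\begin{itemize}
\item Totality is the existence of correlated equilibria, which follows from Nash existence.
\item Continuity and convexity hold because $\ce(G)$ is cut out by finitely many weak linear inequalities whose coefficients are continuous in $G$.
\item Consistency holds because these inequalities are linear in $G$ for fixed $p$.
\item Rationality holds because if $b_i$ dominates $a_i$ and $p(a_i,\cdot)\neq 0$, then the constraint for the pair $(a_i,b_i)$ is strictly violated.
\item For consequentialism, take a clone-symmetric $p$. For a clone $a_i$ of $a_i'$, the constraint for $(a_i,b_i)$ in $G$ equals $\alpha_i(a_i)$ times the constraint for $(a_i',\phi_i(b_i))$ in $G'$, after summing over the clones of the opponents' actions. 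Every constraint of $G'$ arises this way from some clone with positive weight, unless the recommendation has probability zero, in which case the constraint is trivial.
\end{itemize}

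\emph{Part (b): $f(G)\subseteq\ce(G)$.} This is the step I expect to be hardest, because it cannot be done by mixing $G$ with a single auxiliary game. If the constraint for $(a_i,b_i)$ is violated at $p$, no game $H$ with $p\in\ce(H)$ can make $b_i$ dominate $a_i$ in $\frac12(G+H)$, since the relevant inequalities have the wrong sign. The tool has to be symmetrization.

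Take $p\in f(G)$ and suppose, towards a contradiction, that
\[\sum_{a_{-i}}p(a_i,a_{-i})\bigl(G_i(a_i,a_{-i})-G_i(b_i,a_{-i})\bigr)<0.\]
By continuity (upper hemi-continuity together with perturbing $G$), it suffices to treat $p$ with rational probabilities. Using consequentialism, I pass to a blow-up $\hat G$ in which each opponent action is replaced by sufficiently many clones. The lifted strategy $\hat p$ is chosen clone-symmetric, and it makes the conditional distribution of $a_{-i}$ given the recommendation $a_i$ uniform over a set $S$ of opponent clone-profiles.

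Next I consider permutations $\pi$ of the opponents' clone labels that preserve $\hat p$; these exist because $\hat p$ is uniform on $S$ conditional on $a_i$ and the clones are interchangeable in the blow-up. The game $\pi\hat G$ is again a game on the same action set, and by equivariance (implied by consequentialism) $\hat p\in f(\pi\hat G)$. Averaging over a group of such $\pi$ and applying consistency repeatedly gives $\hat p\in f(\bar G)$. In the averaged game, player $i$'s payoff from $a_i$ and from $b_i$ against every profile in $S$ equals the conditional expectation of that payoff under $p$ given the recommendation $a_i$.

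The delicate part is arranging the blow-up so that the group acts transitively on $S$ while preserving $\hat p$, and controlling profiles outside the support. For the latter I add, via consistency, a game in which $p$ is returned and in which player $i$ strongly prefers $b_i$ off the support; such a game comes from part (c) applied to games in which $p$ is an equilibrium. In the resulting game $b_i$ dominates $a_i$, yet $\hat p$ puts positive mass on $a_i$, which contradicts rationality.

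\emph{Part (c): $\ce(G)\subseteq f(G)$.} First I show that $f(Z)=\Delta(A)$ whenever no player's payoff depends on their own action. Totality gives some point of $f(Z)$. Equivariance under relabelings, convexity and consistency then give the uniform distribution, and clone-symmetric blow-ups of the uniform distribution together with continuity give every rational correlated strategy. Reaching genuinely correlated points through clone-symmetric lifts is the technical point here.

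For a general $p\in\ce(G)$, I decompose $G$, via consistency, into a convex combination of such zero-incentive games and games in which every recommended action of each player is a best response; in the latter games $p$ is selected by rationality combined with totality. Part (b) is then used to rule out any other selections in these games.
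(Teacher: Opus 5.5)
Part (a) is fine. There is a genuine gap, though. Part (c), $\ce\subseteq f$, carries essentially all the weight of the theorem, and your sketch gives no mechanism for it.

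\textbf{Part (c) is asserted, not proved, and is circular with (b).} You claim that any $G$ with $p\in\ce(G)$ decomposes into zero-incentive games plus games ``in which $p$ is selected by rationality combined with totality,'' but you never construct this decomposition. Rationality and totality alone force a selection only when every player has a strictly dominant action. Adding (b) forces $p$ only in games where $p$ is the unique correlated equilibrium, and such games need not exist. For $p=\frac12\chi_{(1,1)}+\frac12\chi_{(2,2)}$ in a $2\times 2$ game, every $H$ with $p\in\ce(H)$ has both $(1,1)$ and $(2,2)$ as pure Nash equilibria. Worse, (c) invokes (b) to exclude other selections, while (b) invokes (c) for its off-support game, so the argument is circular as written.

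\textbf{How the paper proves $\ce\subseteq f$.} It does so first, with no appeal to $f\subseteq\ce$:
\begin{itemize}
\item Pure Nash equilibria are selected via \Cref{lem:support} (cloning and averaging over $p$-preserving permutations to enforce iterated restriction to dominant actions), a decomposition of quasi-strict pure equilibria into dominance-solvable games, and continuity.
\item The uniform distribution in matching pennies is selected via its symmetry, \Cref{lem:brbr23a}, and continuity.
\item A Birkhoff--von Neumann-type result (\Cref{lem:building-fully-stochastic-matrices}) extends this to bistochastic essentially two-player zero-sum games; cloning, symmetrization and continuity then give all their Nash equilibria.
\item \Cref{lem:decomposition} splits a single player's payoff, along paths in the support, into pieces on two adjacent profiles. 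Each piece is a zero-sum game with a product Nash equilibrium plus a game with pure equilibria.
\end{itemize}
Your sketch has no substitute for any of these steps.

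A smaller point: your derivation of $f(Z)=\Delta(A)$ via equivariance fails as stated, because relabeling player $i$'s actions changes the other players' payoffs in $Z$. The claim itself is easy. Add $\eps$ to each player's payoff at a fixed $a_i^*$, so that $a^*$ is strictly dominant, then use rationality, totality, upper hemi-continuity and convexity.

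\textbf{The premise for (b) is a sign error.} Correcting it shows how to break the circularity. If $b_i$ dominates $a_i$ in $\frac12(G+H)$, integrating against $p(a_i,\cdot)$ gives $\langle p(a_i,\cdot),G_i(b_i,\cdot)-G_i(a_i,\cdot)\rangle>\langle p(a_i,\cdot),H_i(a_i,\cdot)-H_i(b_i,\cdot)\rangle\ge 0$. This is exactly what a violated constraint permits, so a single auxiliary game suffices. \Cref{lem:f-subset-ce} does this:
\begin{itemize}
\item Clone $b_i$ so that $p(b_i,\cdot)\equiv 0$.
\item Set $H_i(c_i,\cdot)\equiv 0$ for recommendations $c_i$ with $\langle p(c_i,\cdot),G_i(a_i,\cdot)-G_i(b_i,\cdot)+\eps\mathbf 1\rangle<0$; this includes $a_i$.
\item Set $H_i(c_i,\cdot)=G_i(a_i,\cdot)-G_i(b_i,\cdot)+\eps\mathbf 1$ for the remaining $c_i$, including $b_i$, and give all other players payoff $0$.
\end{itemize}
Then $p\in\ce(H)$, hence $p\in f(H)$ by the already established single-payoff case, while $b_i$ dominates $a_i$ in $\frac12(G+H)$.

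\textbf{Further problems with the symmetrization route.} Upper hemi-continuity only passes membership to limits, so it cannot replace a given $p\in f(G)$ by rational points of nearby $f(G^\ell)$. Also, with clone-symmetric lifts, a permutation of opponents' labels that exchanges clones of distinct actions $x,y$ preserves $\hat p$ only if their per-clone weights agree under every recommendation of player $i$, not just $a_i$. So a group acting transitively on $S$ generally does not exist.
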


\Cref{thm:ce} makes two claims. First, any solution concept satisfying the axioms returns every correlated equilibrium. Second, any coarsening of \ce violates at least one axiom.
For example, \nash violates convex-valuedness and \cce violates rationality.
The first claim forces the selection of every correlated strategy consistent with Bayesian expected utility maximization, which defines correlated equilibrium.
The second claim rules out coarsenings that admit behavior that violates Bayesian rationality.
Equivalently, the theorem characterizes Bayesian expected utility maximization.
If consequentialism is replaced by strong consequentialism, no solution concept can satisfy the resulting list of axioms, because \ce fails strong consequentialism.

Coarse correlated equilibrium corresponds to players who either commit to a fixed action ex ante or always follow the recommendation.
Equivalently, players maximize expected utility without updating beliefs about opponents' strategies after receiving a recommendation.
Under either interpretation, strong consequentialism is natural: if a player commits before observing the recommendation, or ignores the information it contains, then it does not matter how probability is split between clones.
While coarse correlated equilibrium violates rationality, it satisfies weak rationality.\footnote{Consider the two-player game $G$ and correlated strategy $p$ below.
\[
G =
\begin{pmatrix}
4,0 & 0,0\\
2,0 & 2,0\\
3,0 & 3,0
\end{pmatrix}
\quad
p =
\begin{bmatrix}
\frac12 & 0\\
\frac14 & \frac14\\
0 & 0
\end{bmatrix}
\]
The row player's second action is dominated by the third, and $p$ is a coarse correlated equilibrium with positive probability on the second row.}
Committing to a dominant action ex ante is always profitable; hence any coarse correlated equilibrium plays dominant actions with probability $1$.
The second result characterizes \cce using strong consequentialism and weak rationality.

\begin{theorem}
	\label{thm:cce}
	$\cce$ is the only total, continuous, and convex-valued solution concept that satisfies consistency, strong consequentialism, and weak rationality.
\end{theorem}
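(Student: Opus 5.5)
The proof has two directions. First, \cce satisfies the axioms. Second, any solution concept $f$ satisfying them coincides with \cce.

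\textbf{Sufficiency.} Totality holds because \nash is nonempty and contained in \cce. Continuity and convex-valuedness hold because \cce is defined by finitely many weak linear inequalities in $p$, with coefficients continuous in $G$. Consistency holds because, for fixed $p$, each constraint is linear in $G$. Weak rationality holds as follows. If $a_i$ is dominant and $p$ puts zero mass on $a_i$, then the deviation $b_i=a_i$ is strictly profitable, which is impossible. For strong consequentialism, observe that under $G=G'\circ\phi$ player $i$'s expected payoff from following $p$ equals that from following $\phi_*(p)$ in $G'$. The payoff from committing to $b_i$ depends only on the marginal of $p$ on $A_{-i}$, which pushes forward to the marginal of $\phi_*(p)$. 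Deviations to clones in $G$ give the same payoffs as deviations to their images in $G'$. So the constraint sets correspond exactly.

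\textbf{Necessity, step (a): $f\subseteq\cce$.} Take $p\in f(G)\setminus\cce(G)$, violating the constraint of player $i$ against $b_i$. I would build an auxiliary game $H$ on the same action sets so that $G$ and $H$ can be mixed. The plan is to use strong consequentialism to blow up $G$ so that $p$ can be redistributed among clones of $b_i$ and of player $i$'s other actions, without changing $\phi_*(p)$. I would then add a game in which the relevant clone of $b_i$ becomes dominant, chosen so that the mixture is a game where that clone is dominant yet receives probability $0$. This contradicts weak rationality.

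To make the mixing work, I need $p$ (or its redistributed version) to lie in $f(H)$. I would pick $H$ to be a game in which player $i$'s payoff depends only on the opponents' actions, i.e.\ $i$ is indifferent among all own actions. The claim to prove is that in such games, and more generally in games where every player is indifferent, $f$ returns all of $\Delta(A)$. My route to this claim is strong consequentialism plus consistency and totality. First, in a game where all players are completely indifferent, totality gives some $q$. Blowing up and permuting clones, together with equivariance and convexity, should yield every distribution. Second, consistency with the constant game transfers this to games where only opponents' actions matter.

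Once that claim is in hand, the key computation is short. The violated inequality $\sum_a p(a)\bigl(G_i(a)-G_i(b_i,a_{-i})\bigr)<0$ says that the net gain of $b_i$ over following, integrated against $p$, is positive. Mixing $G$ with a suitable indifferent-for-$i$ game then makes $b_i$ strictly dominant while $p$ stays in both sets. The reason this works is that a game in which $i$'s payoff is $h(a_{-i})$ shifts payoffs without changing differences. I would therefore instead shift payoffs on clones: in the blow-up, a clone of $b_i$ receiving zero mass can have its payoff raised using only the marginal on $A_{-i}$. This step is the main obstacle, and the remaining inequality must be handled via a minimax or separating-hyperplane argument, in the style of the zero-sum games mentioned in related work.

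\textbf{Necessity, step (b): $\cce\subseteq f$.} Fix $G$ and $p\in\cce(G)$. The goal is to write $G$ as a convex combination (possibly after blow-up) of games in which $p$ is returned. I would use a minimax argument: the CCE inequalities for $p$ imply, by duality, that $G$ decomposes as $\lambda G^1+(1-\lambda)G^2$, or more generally as a limit of such decompositions, where in each component game $p$ is forced into $f$. That is, in each component either all players are indifferent, or $p$ is the unique element allowed by weak rationality combined with the part already proved. Continuity handles the boundary cases where the CCE inequalities are tight.
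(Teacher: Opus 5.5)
Your sufficiency argument is fine and matches \Cref{lem:cce-is-nice}. The necessity part, however, has a genuine gap, located exactly where you flag ``the main obstacle.''

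In step (a), the only auxiliary games $H$ for which you propose to know $p\in f(H)$ are games in which player $i$'s payoff depends only on $a_{-i}$. As you observe yourself, such an $H$ leaves every difference $G_i(b_i,a_{-i})-G_i(a_i,a_{-i})$ unchanged. So no mixture $\lambda G+(1-\lambda)H$ can make $b_i$, or any clone of it, dominant. Raising the payoff of a zero-mass clone of $b_i$ requires an $H$ in which $i$'s payoff depends on $i$'s own action. You then need an independent reason why (the blown-up) $p$ lies in $f(H)$, and an appeal to ``a minimax or separating-hyperplane argument'' does not supply one.

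Even your preliminary claim is unjustified. With $f(G^0)=\Delta(A)$ for the zero game $G^0$, consistency only yields $f(G)\subseteq f(\lambda G+(1-\lambda)G^0)$, not $f(H)=\Delta(A)$ for nonconstant $H$. Indeed, the welfare-maximizing solution concept from item (v) of the remark on independence of the axioms is total, consistent, and strongly consequentialist. Yet in the game where $i$ receives $h(a_{-i})$ and all others receive $0$, it returns only distributions supported on maximizers of $h$. So weak rationality must enter here; in the paper this is \Cref{prop:pure-nash}, which rests on the clone-averaging argument of \Cref{lem:support}. (For games where only $i$ is indifferent, the claim is simply false, since other players may have dominant actions.)

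The missing idea is that, before proving either inclusion, you must show that $f$ contains a rich family of correlated strategies. This is where almost all of the paper's work lies:
\begin{enumerate}
\item $f$ returns all pure Nash equilibria;
\item $f$ returns all Nash equilibria of essentially two-player zero-sum games, via matching pennies and \Cref{lem:building-fully-stochastic-matrices};
\item $\ce\subseteq f$, via the path decomposition of \Cref{lem:decomposition}.
\end{enumerate}
With this in hand, the paper proves $\cce\subseteq f$ \emph{first}. For rational payoffs and a vertex $p$, it blows $G$ up so that $p=\uni(D)$ on a diagonal $D$. It then writes $G=\tilde G+\hat G+\sum_{c\in B}G^c$, where each summand is a blow-up of a game in which the pushforward of $p$ is a correlated equilibrium; continuity handles the rest.

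Only then does it prove $f\subseteq\cce$. On the diagonal blow-up, take the game with $\hat G_i(b,\dots,b)=\tilde G_i(a_i^*,b,\dots,b)-\tilde G_i(b,\dots,b)-\eps$, $\hat G_i(a_i^*,\cdot)\equiv 0$, and $-C$ on all other profiles. This makes $a_i^*$ dominant in $\frac12\tilde G+\frac12\hat G$. Because the CCE constraint is violated only on average, $\tilde p$ is in general merely a coarse correlated equilibrium of $\hat G$, not a correlated equilibrium. So the step $\tilde p\in f(\hat G)$ uses $\cce\subseteq f$.

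Your reversed order (first $f\subseteq\cce$, then $\cce\subseteq f$ via games in which $p$ is the only admissible choice) therefore lacks a starting point. The decomposition in step (b), asserted ``by duality,'' is also never constructed.
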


Weak rationality rules out coarsenings of \cce, while refinements of \cce (such as \nash and \ce) tend to violate strong consequentialism.
Thus, \Cref{thm:cce} characterizes expected utility maximization when players can choose between committing to a fixed action ex ante and always following the recommendation.

\section{Proof outlines}\label{sec:proofs}

Throughout this section, fix a solution concept $f$ that is total, continuous, and convex-valued and satisfies consistency and either (i) strong consequentialism and weak rationality or (ii) consequentialism and rationality.
The proofs of \Cref{thm:ce} and \Cref{thm:cce} extensively use decomposition: write a game as a convex combination of simpler games with the same equilibrium; then use consistency to conclude that the equilibrium is selected in the original game.
Both theorems rely on two preliminary results.
\begin{enumerate}[label=(\roman*)]
	\item $f$ returns all pure Nash equilibria in all games (\Cref{prop:pure-nash}). 
	\item $f$ returns all Nash equilibria in all essentially two-player zero-sum games (\Cref{prop:zero-sum-two-player}).\footnote{\Cref{prop:zero-sum-two-player} proves this statement for games that are zero-sum after a positive affine transformation of one player's utility function, rather than honest zero-sum games. This difference is immaterial for the argument.}
\end{enumerate}
In an essentially two-player zero-sum game, two players play a zero-sum game; all other players are dummies, with constant utility functions and no effect on others' payoffs.

The proof of \Cref{prop:pure-nash} has three steps.
First, we show that only those action profiles can be played that survive iterated restriction to dominant actions: if a player has a dominant action, remove all their other actions, and recurse.
If only a single action profile survives, it is played with probability $1$.
Second, any game with a quasi-strict pure Nash equilibrium is a convex combination of games for which only this action profile survives.
The first part and consistency imply that quasi-strict pure Nash equilibria are played.
Third, we extend the claim to all pure Nash equilibria using continuity.

The proof of \Cref{prop:zero-sum-two-player} starts from two-player matching-pennies games.
They have enough symmetries to show that the unique mixed Nash equilibrium---the uniform distribution over all action profiles---is played. 
Any essentially two-player zero-sum game is a convex combination of blow-ups of matching pennies games.
Together, this gives that the uniform distribution over all action profiles is played in an essentially two-player zero-sum game whenever it is a Nash equilibrium.
The third step heavily uses consequentialism to extend this statement to Nash equilibria that are neither uniform nor of full support.

The third part shows that any solution concept that returns all Nash equilibria in essentially two-player zero-sum games has a stronger property: it returns all correlated equilibria in all games (\Cref{sec:nash-to-ce}).
At this point, we know that $f$ is a coarsening of \ce.

To prove \Cref{thm:ce}, assume consequentialism and rationality.
For any correlated strategy that is not a correlated equilibrium of a given game, we construct a second game with two properties: the same strategy is a correlated equilibrium, and some action that is played in equilibrium is dominated in a convex combination of both games.
As a coarsening of \ce, $f$ returns this strategy for the second game, and thus also for the convex combination by consistency.
This contradicts rationality, and shows that $f$ is a refinement of \ce.

To prove \Cref{thm:cce}, assume strong consequentialism and weak rationality.
The argument has two parts.
First, fix any coarse correlated equilibrium of any game. Strong consequentialism lets us reduce to a case in which the equilibrium has additional structure. We then express the game as a convex combination of games for which the same correlated strategy is a correlated equilibrium. Since $f$ is a coarsening of \ce, consistency implies that $f$ also returns the strategy in the original game, so $f$ is a coarsening of \cce.
Second, to show that $f$ is a refinement of \cce, we use the same type of construction as above, but now produce a dominant action that is never played, contradicting weak rationality.

\section{Discussion}\label{sec:discussion}

\begin{remark}[Independence of the axioms]\label{rem:independence}
	\Cref{thm:ce} and \Cref{thm:cce} cease to hold if any of totality, consistency, (strong) consequentialism, or (weak) rationality is omitted.
	Moreover, convex-valuedness is required for \Cref{thm:ce}.
	We show this with a series of examples.
	\begin{enumerate}[label=\textit{(\roman*)}]
	
		\item \emph{Totality:} 
		Return all correlated strategies that randomize only over pure Nash equilibria.
		This solution concept satisfies continuity, convex-valuedness, consistency, strong consequentialism, and rationality, but it is not total.
		
		\item \emph{Convex-valuedness:} $\nash$ satisfies totality, continuity, consistency, consequentialism, and rationality, but it violates convex-valuedness.
		Note that $\nash$ also violates strong consequentialism, and thus does not prove that convex-valuedness is required for \Cref{thm:cce}.

		\item \emph{Consistency:} Return all correlated strategies that randomize only over action profiles that survive iterated elimination of dominated strategies. This solution concept satisfies totality, continuity, convex-valuedness, strong consequentialism, and rationality, but it violates consistency.

		\item \emph{Consequentialism:} Return all correlated strategies that randomize only over action profiles for which each player's action is optimal against uniformly randomizing opponents. 
		This solution concept satisfies totality, continuity, convex-valuedness, consistency, and rationality, but it violates consequentialism.

		\item \emph{Weak rationality:} Return all correlated strategies that maximize the sum of the players' payoffs. 
		This solution concept satisfies totality, continuity, convex-valuedness, consistency, and strong consequentialism, but it violates weak rationality.
	
	\end{enumerate}
	It is open whether continuity is needed for either result, and whether convex-valuedness is needed for \Cref{thm:cce}.
	An intriguing open question is whether \nash and \ce are the only solution concepts that satisfy all axioms in \Cref{thm:ce} except for convex-valuedness.
\end{remark}

\begin{remark}[Equilibrium refinements]\label{rem:refinements}
	In line with Selten's trembling-hand perfection \citep{Selt75a}, various authors have proposed refinements of correlated equilibrium based on robustness to small trembles \citep{Myer86a,DhMe96a,LQS22a,HKP26a}.
	A starting point for future research is to use the axiomatic method to characterize refinements of correlated equilibrium. 
	In \Cref{thm:ce}, convex-valuedness and the consistency axiom force many correlated strategies into the solution sets, ruling out refinements.
	Thus, weakening or replacing convex-valuedness and consistency, and jointly strengthening other axioms (e.g., rationality to admissibility) is a promising direction. 
\end{remark}

\begin{remark}[Behavioral content of the axioms]\label{rem:behavioral-content}
	When interpreted as behavioral conditions, the axioms take a stance on how players react to recommendations.
	View the correlated strategies returned by a solution concept as those that a mediator can implement---each player follows their recommendation.
	Under this interpretation, the results characterize the implementable correlated strategies implied by the axioms.
	Since correlated equilibria are exactly the implementable outcomes when players are Bayesian expected utility maximizers, the axioms entail Bayesian expected utility maximization.
	
	\begin{enumerate}[label=(\roman*)]
		\item \emph{Consistency:} Whenever a given correlated strategy is implementable for any payoff realization of a random game, it is also implementable for the lottery over games.
		Since we treat lotteries as convex combinations of the realizations, consistency is closely tied to expected utility theory: it replaces payoff uncertainty by expected payoffs.
		The same logic underlies the sure-thing principle of \citet{Sava54a}: if a decision-maker would choose the same action in every state, then they choose it when the state is uncertain.
		Consistency imposes this sure-thing principle on players.
		
		\item \emph{(Strong) consequentialism:} 	
		From a mediator's perspective, the difference between the two axioms is this.Suppose the mediator can implement a correlated strategy $p'$ for a game $G'$, and $G$ is obtained from $G'$ by adding a clone $a_i'$ of $a_i$.
	To implement a correlated strategy for $G$, draw an action profile from $p'$; if $i$'s action in this profile is $a_i$, toss a coin to decide whether $i$'s recommendation is $a_i$ or $a_i'$; do not change any other recommendations.
	Consequentialism requires that this correlated strategy is implementable for $G$ if the mediator uses the same coin irrespective of the action profile of $i$'s opponents.
	Strong consequentialism demands implementability even if coins with different biases are used for different action profiles.
	
	Adding clones is akin to enlarging the mediator's message space. 
	Consequentialism requires that if two recommendations are payoff-equivalent and informationally equivalent, then a player follows one if and only if they follow the other. 
	In other words, a player's choice is independent of the label of an action.
	This is compatible with players who update their beliefs about others' recommendations conditional on their own recommendation.	
	By contrast, strong consequentialism demands that players follow payoff-equivalent recommendations even if those induce different beliefs.
	This forces players to ignore the information contained in their recommendations, and is in line with committing to an action ex ante.
	The two axioms thus differ in what they assume about a player's arsenal of behavioral strategies.

		\item \emph{(Weak) rationality:} Both axioms state that players prefer higher \emph{sure} payoffs to lower ones, 
		and thus avoid assumptions about risk attitudes.
		Weak rationality allows implementability even when the mediator sometimes recommends dominated actions.
		If players can only choose between following the mediator's recommendation and committing to a fixed action ex ante, following can be optimal despite occasional dominated recommendations.
		When a dominant action exists, however, the only implementable recommendation is to play it.
		Weak rationality is thus more natural than rationality when players have to commit ex ante.
		
		\item \emph{Convex-valuedness:} 
		It states that any convex combination of implementable correlated strategies is implementable. 
		Such a convex combination arises if the mediator covertly tosses a coin to decide from which correlated strategy the recommendations are drawn.
		From the players' perspective, this introduces uncertainty about the distribution of the other players' recommendations.
		Convex-valuedness is thus another instance of the sure-thing principle: if a player follows their recommendation for either of two beliefs about others' recommendations, they also do so for any mixture of the two beliefs.
		The difference from random games is that uncertainty enters through the information about other players rather than through payoffs.
	\end{enumerate}
	
\end{remark}

\section*{Acknowledgments}
The author thanks Felix Brandt for numerous discussions on the topic.
The author also acknowledges support by the DFG under the Excellence Strategy {EXC-2047}.

\pagebreak
\appendix
\section*{APPENDIX}\label{sec:appendix}

\section{Preliminaries}\label{sec:appendix-prelims}

For finite sets $A,B$ with $B\subseteq A$, $\chi_B\in\{0,1\}^A$ denotes the indicator of $B$; the standard unit vector at $a \in A$ is $\chi_a$.
For $x,y \in \mathbb R^A$, $\supp(x)=\{a\in A\colon x(a)\neq 0\}$ is the support of $x$, and $\langle x,y \rangle = \sum_{a\in A} x(a)y(a)$ is the inner product of $x$ and $y$.
We denote by $\Pi(A)$ the set of permutations of $A$.

For $S\subseteq N$, $A\in\mathcal F(U)^n$, and $a\in U^N$, we write $A_S = \prod_{i\in S} A_i$ and $a_S=(a_i)_{i\in S}$, as well as $A_{-S} = A_{N\setminus S}$ and $a_{-S}=a_{N\setminus S}$.
For $x\in\mathbb R^A$, $x(\cdot,a_{-S}) \in \mathbb R^{A_S}$ is the restriction of $x$ to the coordinates in $S$ when fixing the remaining coordinates to $a_{-S}$.
The marginal of $p\in\Delta(A)$ with respect to $S$ is $p_S = \sum_{a_{-S} \in A_{-S}} p(\cdot,a_{-S}) \in \Delta(A_S)$.
We write $\uni(B) \in \Delta(A)$ for the uniform distribution on $B\subseteq A$. 

If $G$ is a game on $A\in\mathcal F(U)^n$ and $\pi=(\pi_1,\dots,\pi_n)$ with $\pi_i\in\Pi(A_i)$, then $G\circ\pi$ and $p\circ\pi$ denote the relabelings obtained by applying $\pi$ coordinatewise.
A solution concept is equivariant if relabeling the actions of a game results in the same relabeling among the returned correlated strategies.
\begin{definition}
	[Equivariance]
	A solution concept satisfies equivariance if for each game $G$ on $A$ and each $\pi = (\pi_1,\dots,\pi_n)$ with $\pi_i\in\Pi(A_i)$, $f(G\circ\pi)=f(G)\circ\pi$.
\end{definition}
Consequentialism implies equivariance: $G\circ\pi$ is a blow-up of $G$ with surjection $\pi$. 
We frequently apply equivariance to correlated strategies where the probabilities of all supported action profiles are the same, and the permutation maps each supported action profile to another supported action profile.
This gives a new game for which the same correlated strategy is returned.
 
A solution concept is positively homogeneous if it is invariant under scaling the payoffs of all players by the same positive constant.
\begin{definition}
	[Positive homogeneity]
	A solution concept $f$ is positively homogeneous if for each game $G$ and each $\alpha > 0$, $f(G) = f(\alpha G)$.
\end{definition}
Any total, continuous, and convex-valued solution concept that satisfies consistency and consequentialism contains a positively homogeneous solution concept that inherits any of our axioms from $f$.
For any such solution concept $f$, define $\tilde f$ for each game $G$ by
\begin{align*}
	\tilde f(G) = \bigcap_{\alpha \ge 1} f(\alpha G)
\end{align*}
and note that $\tilde f$ is a refinement of $f$.

\begin{lemma}[Homogeneous refinement]\label{lem:homogeneous-core}
	Let $f$ be a total, continuous, and convex-valued solution concept that satisfies consistency and consequentialism.
	Then, $\tilde f$ is total, continuous, convex-valued, positively homogeneous, and satisfies consistency and consequentialism.
	Moreover, if $f$ satisfies strong consequentialism or (weak) rationality, then $\tilde f$ also satisfies that axiom.
\end{lemma}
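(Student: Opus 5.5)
The plan is to first establish a monotonicity property of $f$ along rays $\{\alpha G\colon \alpha\ge 0\}$. From it, $\tilde f(G)$ is a nested intersection of nonempty compact sets, and every listed property follows almost formally.

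\emph{Step 1: the zero game.} Let $Z$ be the game on $A$ with all payoffs $0$. We claim that $f(Z)=\Delta(A)$.
\begin{itemize}
\item $Z$ is a blow-up of the game $Z'$ on a single action profile, via the surjection $\phi$ that maps every action of player $i$ to the unique action in $A_i'$.
\item By totality, $f(Z')$ contains the unique correlated strategy of $Z'$.
\item For this $\phi$, clone-symmetry of $p$ says that $p(a_i,a_{-i})=\alpha_i(a_i)\,p_{-i}(a_{-i})$ for every $i$. In particular, every product distribution $p=p_1\times\dots\times p_n$ is clone-symmetric.
\item Consequentialism therefore puts every product distribution into $f(Z)$.
\item Point masses $\chi_a$ are product distributions, so convex-valuedness gives $f(Z)\supseteq\operatorname{conv}\{\chi_a\colon a\in A\}=\Delta(A)$.
\end{itemize}

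\emph{Step 2: monotonicity along rays.} Let $p\in f(\gamma G)$ and $\delta\in[0,\gamma]$. Since $p\in f(Z)$ by Step 1, consistency applied to $\gamma G$ and $Z$ with $\lambda=\delta/\gamma$ gives $p\in f(\delta G)$. Hence $f(\gamma G)\subseteq f(\delta G)$ whenever $0\le\delta\le\gamma$. Two consequences follow.
\begin{itemize}
\item $\tilde f(G)=\bigcap_{\alpha\ge1}f(\alpha G)=\bigcap_{\alpha>0}f(\alpha G)=\bigcap_{k\in\mathbb N}f(kG)$.
\item Positive homogeneity is immediate: for $\beta>0$, the set $\bigcap_{\alpha>0}f(\alpha\beta G)$ equals $\bigcap_{\alpha>0}f(\alpha G)$.
\end{itemize}

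\emph{Step 3: the remaining properties.}
\begin{itemize}
\item \emph{Totality.} Each $f(kG)$ is nonempty by totality and closed by continuity (apply continuity to a constant sequence of games). It lies in the compact set $\Delta(A)$, and these sets are nested by Step 2. A decreasing sequence of nonempty compact sets has nonempty intersection.
\item \emph{Convex-valuedness.} An intersection of convex sets is convex.
\item \emph{Continuity.} Let $G^\ell\to G$ and $p^\ell\in\tilde f(G^\ell)$ with $p^\ell\to p$. For each fixed $\alpha\ge1$, we have $\alpha G^\ell\to\alpha G$ and $p^\ell\in f(\alpha G^\ell)$, so continuity of $f$ gives $p\in f(\alpha G)$.
\item \emph{Consistency.} If $p\in\tilde f(G)\cap\tilde f(G')$, then for every $\alpha\ge1$ consistency of $f$ gives $p\in f(\alpha(\lambda G+(1-\lambda)G'))$, since $\alpha(\lambda G+(1-\lambda)G')=\lambda\alpha G+(1-\lambda)\alpha G'$.
\item \emph{(Strong) consequentialism.} If $G=G'\circ\phi$, then $\alpha G=(\alpha G')\circ\phi$ with the same $\phi$. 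Hence the equivalence $p\in f(\alpha G)\iff\phi_*(p)\in f(\alpha G')$ holds for each $\alpha$, for clone-symmetric $p$ (respectively, for all $p$). It therefore survives the intersection over $\alpha$.
\item \emph{(Weak) rationality.} $\tilde f(G)\subseteq f(G)$, and both axioms refer only to the payoffs of $G$. So they transfer directly from $f$ to its refinement $\tilde f$.
\end{itemize}

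\emph{Main obstacle.} The only nontrivial point is totality, equivalently the nestedness of the sets $f(\alpha G)$. A priori, $f(\alpha G)$ could move around as $\alpha$ grows, and then the intersection could be empty. Step 1 is the key to ruling this out. It shows that the zero game admits every correlated strategy, which lets consistency interpolate between $Z$ and $\gamma G$. It uses consequentialism, through the fact that the clone-symmetric strategies of a one-profile blow-up are exactly the products $p_i\times p_{-i}$, together with convex-valuedness.
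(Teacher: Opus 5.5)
Your proof is correct and follows the paper's argument. The zero game satisfies $f(Z)=\Delta(A)$ by totality, consequentialism, and convex-valuedness; consistency with $Z$ then makes the sets $f(\alpha G)$ nested, compactness gives a nonempty intersection, and continuity transfers exactly as in the paper. You also spell out the claims the paper calls ``straightforward.'' In particular, nestedness lets you rewrite $\tilde f(G)=\bigcap_{\alpha>0}f(\alpha G)$, which makes positive homogeneity immediate.
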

\begin{proof}
	First, we prove that $\tilde f$ is total.
	Let $G$ be a game on $A$, and let $G^0$ be the game on $A$ such that $G_i\equiv 0$ for each $i\in N$.
	Note that all actions of all players are clones in $G^0$.
	Hence, $\chi_a \in f(G^0)$ for each $a \in A$ by totality and consequentialism.
	Convex-valuedness then implies that $f(G^0) = \Delta(A)$.
	Then for each $\alpha \ge 1$, $$f(\alpha G) = f(\alpha G) \cap f(G^0) \subseteq f(\frac1\alpha \alpha G + (1-\frac1\alpha)G^0) = f(G)$$ by consistency, and so $f(\alpha G)$ is a non-increasing family of sets.
	Totality and continuity imply that $f(\alpha G)$ is nonempty and closed for each $\alpha$.
	Hence, $\tilde f(G)$ is nonempty.
	
	Second, we prove that $\tilde f$ is continuous.
	Let $(G^k)_{k\in\mathbb N}$ be a sequence of games on $A$ converging to $G$.
	For each $k\in\mathbb N$, let $p^k \in \tilde f(G^k)$, and assume that $(p^k)_{k\in\mathbb N}$ converges to $p\in\Delta(A)$.
	We need to show that $p\in \tilde f(G)$.
	To this end, it suffices to show that $p\in f(\alpha G)$ for each $\alpha \ge 1$.
	Now, for each $\alpha \ge 1$ and each $k$, $p^k \in f(\alpha G^k)$ by definition of $\tilde f$.
	Hence, since $f$ is continuous and $\alpha G^k$ converges to $\alpha G$ for each $\alpha\ge 1$, $p\in f(\alpha G)$.
	Thus, $p\in\tilde f(G)$.
	
	The proof of the remaining claims of the lemma is straightforward.
\end{proof}

The second lemma states that a total, positively homogeneous, and convex-valued solution concept that satisfies consistency and consequentialism is invariant under adding a constant to the utility function of a player.
\begin{lemma}
	[Adding constants]
	\label{lem:adding-constants}
	Let $f$ be a total, positively homogeneous, and convex-valued solution concept that satisfies consistency and consequentialism.
	Let $\tilde G$ be a game on $A$ such that $\tilde G_i$ is constant for each $i\in N$.
	Then, for each game $G$ on $A$, $f(G) = f(G + \tilde G)$.
\end{lemma}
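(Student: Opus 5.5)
The plan is to show first that every correlated strategy is returned in a game in which all payoff functions are constant. Then I would move between $G$ and $G+\tilde G$ with a single application of consistency, using positive homogeneity to rescale.

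\emph{Step 1: $f(H)=\Delta(A)$ whenever each $H_i$ is constant on $A$.}
Let $H$ be such a game.
Then $H$ is a blow-up of the game $H'$ on $A'=\{a'_1\}\times\dots\times\{a'_n\}$ with payoff $H'(a')=H(a)$ for any $a\in A$. The surjection $\phi$ maps every action of player $i$ to $a'_i$, and $H=H'\circ\phi$ holds because $H$ is constant.
By totality, $f(H')=\Delta(A')=\{\chi_{a'}\}$.
Now fix $a\in A$. The point mass $\chi_a$ is clone-symmetric: for player $i$, set $\alpha_i(a_i)=1$ and $\alpha_i(b_i)=0$ for $b_i\neq a_i$. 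Then
\[
\chi_a(b_i,b_{-i})=\alpha_i(b_i)\sum_{c_i\in A_i}\chi_a(c_i,b_{-i})
\]
holds for all $b$, since both sides equal $1$ exactly when $b=a$ and $0$ otherwise.
Since $\phi_*(\chi_a)=\chi_{a'}\in f(H')$, consequentialism gives $\chi_a\in f(H)$.
Convex-valuedness then yields $f(H)=\Delta(A)$.
This repeats the argument used for $G^0$ in \Cref{lem:homogeneous-core}, applied to arbitrary constant payoffs.

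\emph{Step 2: $f(G)\subseteq f(G+\tilde G)$.}
Let $p\in f(G)$.
By positive homogeneity, $p\in f(2G)$.
By Step 1 applied to the constant game $2\tilde G$, $p\in f(2\tilde G)$.
Consistency with $\lambda=\tfrac12$ then gives $p\in f\bigl(\tfrac12\cdot 2G+\tfrac12\cdot 2\tilde G\bigr)=f(G+\tilde G)$.

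\emph{Step 3: $f(G+\tilde G)\subseteq f(G)$.}
This is the same argument with the constant game $-\tilde G$, which is again covered by Step 1.
For $p\in f(G+\tilde G)$, positive homogeneity gives $p\in f(2(G+\tilde G))$, and Step 1 gives $p\in f(-2\tilde G)$.
Consistency with $\lambda=\tfrac12$ yields $p\in f(G+\tilde G-\tilde G)=f(G)$.

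The only step requiring any care is Step 1. There one must check that point masses are clone-symmetric, so that plain consequentialism suffices and strong consequentialism is not needed. The rest is a direct application of consistency and homogeneity.
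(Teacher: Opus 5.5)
Your proof is correct and follows the paper's argument. You first get $f(H)=\Delta(A)$ for constant games from totality, consequentialism (point masses are clone-symmetric) and convex-valuedness, then combine consistency at $\lambda=\frac12$ with positive homogeneity, using $\tilde G$ for one inclusion and $-\tilde G$ for the other. The only cosmetic difference is that you rescale to $2G$ before applying consistency, whereas the paper applies homogeneity afterward to $\frac12(G+\tilde G)$.
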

\begin{proof}
	By the argument in the proof of \Cref{lem:homogeneous-core} (where $f(G^0)=\Delta(A)$), we have $f(\tilde G)=\Delta(A)$.	
	Consistency and positive homogeneity then imply that $f(G) = f(G) \cap f(\tilde G)\subseteq f(\frac12 G + \frac12 \tilde G) = f(G + \tilde G)$, and $f(G + \tilde G) = f(G + \tilde G) \cap f(-\tilde G) \subseteq f(\frac12 (G + \tilde G) + \frac12 (-\tilde G)) = f(G)$.	
\end{proof}

\section{Pure Nash equilibria}

We show that every total and continuous solution concept satisfying consistency and either (i) strong consequentialism and weak rationality or (ii) consequentialism and rationality returns all pure Nash equilibria.
Let $\nashpure$ denote the solution concept that returns all pure strategy Nash equilibria, let $\nashfs$ return all full support Nash equilibria, and let $\nashq$ return all Nash equilibria that assign rational-valued probabilities to all action profiles. For each game $G$ on $A$,
\begin{align*}
	\nashpure(G) &= \nash(G)\cap \{\chi_a\colon a \in A\}\\
	\nashfs(G) &= \nash(G) \cap \{p\in\Delta(A)\colon \supp(p) = A\}\\
	\nashq(G) &= \nash(G) \cap \mathbb Q^A.
\end{align*}
A Nash equilibrium $p = p_1\otimes\dots\otimes p_n\in\nash(G)$ is quasi-strict if for each $i\in N$, each $a_i \in \supp(p_i)$, and each $b_i\in A_i\setminus \supp(p_i)$, $G_i(a_i,p_{-i}) > G_i(b_i,p_{-i})$.

The first lemma shows that an action that is dominant in a subgame containing the support of a returned correlated strategy is played with probability $1$ in that correlated strategy.

\begin{lemma}[Restriction to dominant actions]\label{lem:support}
	Let $f$ be a solution concept satisfying consistency and either
	\begin{enumerate}[label=(\roman*),ref=(\roman*)]
		\item\label{case:support-strong} strong consequentialism and weak rationality, or
		\item\label{case:support-rational} consequentialism and rationality.
	\end{enumerate}
	Let $G$ be a game on $A$, $p\in f(G)$, and $\tilde A = \tilde A_1\timesdots \tilde A_n\subseteq A$ such that $\supp(p)\subseteq \tilde A$.
	Assume there are $j\in N$ and $b_j\in A_j$ such that
	\[
		G_j(b_j,a_{-j}) > G_j(a_j,a_{-j})
	\]
	for each $a_j\in A_j\setm\{b_j\}$ and each $a_{-j}\in \tilde A_{-j}$.
	Then, $\supp(p) \subseteq \{b_j\}\times \tilde A_{-j}$.
\end{lemma}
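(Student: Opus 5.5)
The plan is to replace $G$ by a game in which $b_j$ strictly dominates every other action of player $j$ against \emph{all} opponent profiles, not only those in $\tilde A_{-j}$, while keeping (a lift of) $p$ in the solution set. Then (weak) rationality can be applied. Only consistency, (strong) consequentialism and their consequence equivariance are available; no totality or homogeneity is assumed. So the new game has to be built as a convex combination of relabelings of a blow-up of $G$, all of which return the same lifted correlated strategy.

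\emph{Step 1 (blow-up).} Fix a large integer $m$. For every opponent $i\neq j$, replace each action $a_i\in\tilde A_i$ by $m$ clones $a_i^1,\dots,a_i^m$, and keep each $c_i\in A_i\setm\tilde A_i$ as a single action. Call the resulting blow-up $\hat G$, with surjection $\phi$. Define $\hat p$ by putting all of $p(a)$ on the profile that uses the first clone $a_i^1$ for each $i\neq j$. This $\hat p$ is clone-symmetric, with $\alpha_i\in\{0,1\}$, so consequentialism in either case gives $\hat p\in f(\hat G)$.

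\emph{Step 2 (averaging).} For each $i\neq j$, let $\pi_i$ range over the permutations of $\hat A_i$ that fix every first clone $a_i^1$ and permute the remaining actions arbitrarily. These remaining actions are the non-first clones and the actions outside $\tilde A_i$. Every such $\pi$ fixes $\hat p$, so equivariance gives $\hat p\in f(\hat G\circ\pi)$. Consistency, applied repeatedly to these finitely many games, gives $\hat p\in f(\bar G)$, where $\bar G$ is the uniform average of the games $\hat G\circ\pi$.

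Next I compute $\bar G_j$. At profiles whose opponent coordinates are all first clones, $\bar G_j=\hat G_j$, and $b_j$ strictly dominates there by hypothesis. At any other profile, each non-first coordinate is replaced by a uniform draw from a set in which a fraction at least $(m-1)/(m-1+|A_i\setm\tilde A_i|)$ consists of clones of actions in $\tilde A_i$. So $\bar G_j(\cdot,\hat a_{-j})$ is within $O(1/m)$ of a convex combination of vectors $G_j(\cdot,a_{-j})$ with $a_{-j}\in\tilde A_{-j}$. Since the hypothesis gives a uniform positive margin $\delta>0$ for $b_j$ on the finite set $\tilde A_{-j}$, choosing $m$ large makes $b_j$ strictly dominant in $\bar G$.

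\emph{Step 3 (conclusion).} In case \ref{case:support-rational}, every $a_j\neq b_j$ is dominated in $\bar G$, so rationality gives $\hat p(a_j,\cdot)=0$. Pushing forward by $\phi$ gives $p(a_j,\cdot)=0$, i.e.\ $\supp(p)\subseteq\{b_j\}\times\tilde A_{-j}$.

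Case \ref{case:support-strong} is the main obstacle, because weak rationality only says that $b_j$ is played with positive probability. My plan is to use strong consequentialism once more on player $j$. Blow up $b_j$ in $\bar G$ into two clones $b_j,b_j'$ and put all of $b_j$'s mass on $b_j$; this split is not clone-symmetric but is permitted by strong consequentialism. Then mix this game with relabelings that permute $b_j'$ with the actions $a_j\neq b_j$ after suitably cloning those. The aim is to reach a game where a strictly dominant action of $j$ receives probability $0$ whenever $p$ puts mass on some $a_j\neq b_j$. I expect the bookkeeping of this second averaging step, which must keep the lifted strategy fixed while making the unplayed clone dominant, to be the delicate part of the proof.
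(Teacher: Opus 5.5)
\textbf{Case (ii) is correct; case (i) has a genuine gap.}

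In case (ii), lifting $p$ to the first clones is clone-symmetric. Every permutation that fixes the first clones leaves $\hat p$ invariant. For $m$ large, the average makes $b_j$ strictly dominant against all opponent profiles, so rationality finishes the argument. This is a one-shot variant of the paper's player-by-player extension; the paper clones a single $\hat a_i\in\tilde A_i$, averages over permutations fixing $\tilde A_i$ pointwise, and then deletes the extra clones. Both routes work. A small correction: putting all of $b_j$'s mass on one of two clones is clone-symmetric, exactly as in your Step 1, so that split needs only consequentialism.

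Case (i) is only sketched, and the sketched mechanism cannot work. To contradict weak rationality you need a strictly dominant action with zero mass. So every mass-carrying copy of $b_j$ must become strictly worse than some massless copy.

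\emph{Why your plan fails.} Swapping the massless clone $b_j'$ with (clones of) dominated actions does the opposite: it lowers $b_j'$. Suppose instead a relabeling $\pi$ damages a mass-carrying copy $y$ of $b_j$ by sending it to a copy $z$ of some $a_j\neq b_j$. It must satisfy $\hat p\circ\pi=\hat p$. If $\pi$ moves only player $j$'s actions, this forces $\hat p(y,\cdot)=\hat p(z,\cdot)$. That is impossible whenever $p(b_j,\cdot)$ and $p(a_j,\cdot)$ have disjoint supports in $A_{-j}$, however player $j$'s actions are cloned.

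\emph{The missing idea is the paper's diagonal lift.}
\begin{enumerate}
\item Using strong consequentialism for all players at once, move all mass onto fresh labels $B$ so that the lifted strategy is supported on $\{(b,\dots,b)\colon b\in B\}$. This lift is not clone-symmetric, which is exactly where strong consequentialism is needed. It also leaves the original $b_j$ with zero mass.
\item Fix $a^*\in\mathrm{supp}(p)$ with $a^*_j\neq b_j$. Split the mass of each $(b_j,a_{-j})$ into labels $h$ of small weight $\delta_{a_{-j}}$. Carve matching labels $\ell_{a_{-j}}$ of equal weight out of $p(a^*)$, keeping $\sum\delta_{a_{-j}}<p(a^*)$. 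The splitting is needed because the $b_j$-mass may exceed $p(a^*)$.
\item Average over the transpositions of $h$ and $\ell(h)$, applied simultaneously to every player's coordinate. Equal weights on the diagonal make each transposition preserve the lifted strategy. Each mass-carrying label $h$ is averaged with a copy of $a^*_j$ in its own summand.
\end{enumerate}
The untouched original $b_j$ thus becomes strictly dominant while carrying zero mass. It dominates against $B^{n-1}$, or against all opponent profiles if you run your Step 2 first, since global dominance survives the lift and the simultaneous swaps. Only after this reduction does weak rationality give the contradiction.
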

\begin{proof}
	Assume for contradiction that $\supp(p) \not\subseteq \{b_j\}\times \tilde A_{-j}$.
	Without loss of generality, $j = 1$ and $N\setm\{j\} = \{2,\dots,n\}$.
	Fix $a^* \in \supp(p)$ with $a^*_1\neq b_1$.
	
	We first show that, in case~\ref{case:support-strong}, we may assume without loss of generality that
	\[
		p(b_1,\cdot)\equiv 0.
	\]
	If $p(b_1,\cdot)\equiv 0$, there is nothing to prove.
	Thus, assume that case~\ref{case:support-strong} holds and that $p(b_1,\cdot)\not\equiv 0$.
	Let
	\[
		S=\{a_{-1}\in A_{-1}\colon p(b_1,a_{-1})>0\}.
	\]
	For each $a_{-1}\in S$, choose $m_{a_{-1}}\in\mathbb N$ so large that, with
	\[
		\delta_{a_{-1}}=\frac{p(b_1,a_{-1})}{m_{a_{-1}}},
	\]
	we have
	\[
		\sum_{a_{-1}\in S}\delta_{a_{-1}} < p(a^*).
	\]
	This is possible because $S$ is finite and $p(a^*)>0$.
	
	Let $B\in\mathcal F(U)$ be disjoint from $\bigcup_{i\in N} A_i$ and contain the following labels.
	For each $a_{-1}\in S$, let
	\[
		h^1_{a_{-1}},\dots,h^{m_{a_{-1}}}_{a_{-1}}
	\]
	be labels mapped to $(b_1,a_{-1})$, each with weight $\delta_{a_{-1}}$.
	For each $a_{-1}\in S$, let $\ell_{a_{-1}}$ be a label mapped to $a^*$, also with weight $\delta_{a_{-1}}$.
	Add one further label mapped to $a^*$ with weight
	\[
		p(a^*)-\sum_{a_{-1}\in S}\delta_{a_{-1}},
	\]
	and, for every remaining
	\[
		a\in\supp(p)\setm\bigl(\{a^*\}\cup(\{b_1\}\times S)\bigr),
	\]
	add one label mapped to $a$ with weight $p(a)$.
	Thus, there are a map $\sigma\colon B\to\supp(p)$ and strictly positive weights $(w_b)_{b\in B}$ such that
	\[
		\sum_{b\in B:\,\sigma(b)=a} w_b = p(a)
	\]
	for every $a\in A$.
	
	For each $i\in N$, let $\hat A_i=A_i\cup B$, and define $\phi_i\colon \hat A_i\to A_i$ by
	\[
		\phi_i(a_i)=a_i \quad\text{for each }a_i\in A_i,
		\qquad
		\phi_i(b)=\sigma(b)_i \quad\text{for each }b\in B.
	\]
	Let $\hat A=\hat A_1\timesdots\hat A_n$, and let $\hat G=G\circ\phi$.
	Then $\hat G$ is a blow-up of $G$.
	Define $\hat p\in\Delta(\hat A)$ by
	\[
		\hat p(b,\dots,b)=w_b \quad\text{for each } b\in B,
	\]
	and $\hat p(\hat a)=0$ for all other $\hat a\in\hat A$.
	Then $\phi_*(\hat p)=p$, so strong consequentialism implies $\hat p\in f(\hat G)$.
	
	Let
	\[
		H=\{h^r_{a_{-1}}\colon a_{-1}\in S,\ r\in\{1,\dots,m_{a_{-1}}\}\}.
	\]
	For $h=h^r_{a_{-1}}\in H$, write $\ell(h)=\ell_{a_{-1}}$.
	Let $\pi^h$ be the permutation of $\hat A$ that, for every player, transposes the two labels $h$ and $\ell(h)$ and fixes every other action.
	Since $w_h=w_{\ell(h)}$, we have $\hat p\circ \pi^h=\hat p$.
	Strong consequentialism implies consequentialism, and consequentialism implies equivariance; hence $\hat p\in f(\hat G\circ\pi^h)$ for each $h\in H$.
	Let
	\[
		\bar G=\frac1{|H|+1}\left(\hat G+\sum_{h\in H}\hat G\circ\pi^h\right).
	\]
	Repeated consistency implies $\hat p\in f(\bar G)$.
	
	Let $\hat{\tilde A}_i=B$ for each $i\in N$.
	We claim that, in $\bar G$, action $b_1$ strictly dominates every action in $\hat A_1\setm\{b_1\}$ against $\hat{\tilde A}_{-1}=B^{n-1}$.
	Fix $c_1\in\hat A_1\setm\{b_1\}$ and $c_{-1}\in B^{n-1}$.
	For each
	\[
		\pi\in\{\mathrm{id}\}\cup\{\pi^h\colon h\in H\},
	\]
	let
	\[
		z_{-1}=\phi_{-1}(\pi_{-1}(c_{-1})).
	\]
	Then $z_{-1}\in\tilde A_{-1}$, because every label in $B$ is mapped by $\sigma$ to an action profile in $\supp(p)\subseteq\tilde A$, and each $\pi^h$ only exchanges labels in $B$.
	Moreover,
	\[
		(\hat G\circ\pi)_1(b_1,c_{-1})
		=
		G_1(b_1,z_{-1}),
	\]
	while
	\[
		(\hat G\circ\pi)_1(c_1,c_{-1})
		=
		G_1(\phi_1(\pi_1(c_1)),z_{-1}).
	\]
	If $\phi_1(\pi_1(c_1))\neq b_1$, the former term is strictly larger than the latter by the dominance assumption in $G$.
	If $\phi_1(\pi_1(c_1))=b_1$, the two terms are equal.
	Thus every summand in the definition of $\bar G$ gives a weak advantage to $b_1$.
	This advantage is strict in the identity summand whenever $\phi_1(c_1)\neq b_1$.
	The only remaining possibility is that $\phi_1(c_1)=b_1$ and $c_1\neq b_1$.
	By construction, this implies $c_1\in H$.
	In the summand corresponding to $\pi^{c_1}$, we have
	\[
		\phi_1(\pi^{c_1}_1(c_1))=a^*_1\neq b_1,
	\]
	so the advantage is strict in that summand.
	Hence $b_1$ strictly dominates every action in $\hat A_1\setm\{b_1\}$ against $\hat{\tilde A}_{-1}$.
	
	Finally, $\supp(\hat p)\subseteq\hat{\tilde A}$, $\hat p(b_1,\cdot)\equiv0$, and
	\[
		\supp(\hat p)\not\subseteq \{b_1\}\times\hat{\tilde A}_{-1}.
	\]
	Thus, in case~\ref{case:support-strong}, replacing $(G,A,\tilde A,p)$ by $(\bar G,\hat A,\hat{\tilde A},\hat p)$ preserves the hypotheses and the contradiction assumption, and additionally gives $p(b_1,\cdot)\equiv0$.
	Suppressing hats, we assume from now on in case~\ref{case:support-strong} that $p(b_1,\cdot)\equiv0$.
	
	We now proceed in both cases.
	Since strong consequentialism implies consequentialism, $f$ satisfies consequentialism in case~\ref{case:support-strong} as well as in case~\ref{case:support-rational}.
	We successively construct games
	\[
		G = G^1, G^2,\dots, G^n
	\]
	on $A$ so that $b_1$ is dominant when $G^i$ is restricted to $A_{[i]}\times \tilde A_{-[i]}$, and $p\in f(G^i)$, for each $i\in[n]$.
	
	For $i=1$, this holds by assumption.
	Now let $i>1$ and assume the statement holds for all smaller values of $i$.
	If $A_i=\tilde A_i$, set $G^i=G^{i-1}$.
	Then $b_1$ is dominant when $G^i$ is restricted to $A_{[i]}\times\tilde A_{-[i]}$, and $p\in f(G^i)$.
	
	It remains to consider the case $A_i\setm\tilde A_i\neq\emptyset$.
	By the induction hypothesis, there is $\eps>0$ such that
	\[
		G_1^{i-1}(b_1,a_{-1})
		\ge
		G_1^{i-1}(a_1,a_{-1})+\eps
	\]
	for each $a_1\in A_1\setm\{b_1\}$ and each $a_{-1}\in A_{-1}$ with $a_{-[i-1]}\in\tilde A_{-[i-1]}$.
	Let
	\[
		\mu=\max_{a,b\in A} G_1^{i-1}(a)-G_1^{i-1}(b).
	\]
	Let $B_i\in\mathcal F(U)$ be disjoint from $A_i$ with
	\[
		|B_i|>\frac{\mu}{\eps}\,|A_i\setm\tilde A_i|,
	\]
	and let $\hat A_i=A_i\cup B_i$.
	Fix some $\hat a_i\in\tilde A_i$, and let $\hat G$ be the game on
	\[
		A_1\timesdots\times A_{i-1}\times\hat A_i\times A_{i+1}\timesdots\times A_n
	\]
	obtained from $G^{i-1}$ by introducing the actions in $B_i$ as clones of $\hat a_i$.
	Equivalently, $\hat G$ is a blow-up of $G^{i-1}$ with surjection $\hat\phi$ such that $\hat\phi_k$ is the identity for $k\neq i$, and $\hat\phi_i$ is the identity on $A_i$ and maps every action in $B_i$ to $\hat a_i$.
	Viewing $p$ as a correlated strategy on the enlarged action set with zero probability on profiles involving actions in $B_i$, consequentialism implies $p\in f(\hat G)$.
	
	Let $\hat\Pi_i\subseteq\Pi(\hat A_i)$ be the set of permutations of $\hat A_i$ that fix $\tilde A_i$ pointwise.
	For each $k\neq i$, let $\hat\Pi_k$ contain only the identity permutation of $A_k$.
	Let
	\[
		\hat\Pi=\hat\Pi_1\timesdots\hat\Pi_n.
	\]
	Define
	\[
		\bar G=\frac1{|\hat\Pi|}\sum_{\pi\in\hat\Pi}\hat G\circ\pi.
	\]
	All actions in $\hat A_i\setm\tilde A_i$ are clones of each other in $\bar G$.
	Moreover, $b_1$ is dominant when $\bar G$ is restricted to
	\[
		A_{[i-1]}\times \hat A_i\times \tilde A_{-[i]}.
	\]
	Indeed, if the $i$th action lies in $\tilde A_i$, the dominance margin is at least $\eps$ in every summand.
	If the $i$th action lies in $\hat A_i\setm\tilde A_i$, then, in the average over $\hat\Pi_i$, at least $|B_i|$ images are clones of $\hat a_i$ and therefore give margin at least $\eps$, while at most $|A_i\setm\tilde A_i|$ images can reduce the margin, each by at most $\mu$.
	The choice of $|B_i|$ makes the average margin strictly positive.
	
	For each $\pi\in\hat\Pi$, the permutation $\pi$ fixes every action profile in $\supp(p)$.
	Hence $p\circ\pi=p$.
	By equivariance, $p\in f(\hat G\circ\pi)$ for each $\pi\in\hat\Pi$.
	Repeated consistency therefore implies $p\in f(\bar G)$.
	
	Lastly, let $G^i$ be such that $\bar G$ is a blow-up of $G^i$ with surjection $\bar\phi$ such that $\bar\phi_k$ is the identity for each $k\neq i$, and $\bar\phi_i$ is the identity on $A_i$ and maps each action in $B_i$ to some fixed action in $A_i\setm\tilde A_i$.
	That is, $G^i$ is obtained from $\bar G$ by removing the actions in $B_i$.
	Since $p$ assigns probability zero to all profiles involving actions in $\hat A_i\setm\tilde A_i$, consequentialism implies $p\in f(G^i)$.
	Moreover, $b_1$ is dominant when $G^i$ is restricted to $A_{[i]}\times\tilde A_{-[i]}$.
	This proves the induction step.
	
	Thus, $p\in f(G^n)$ and $b_1$ is dominant in $G^n$.
	In case~\ref{case:support-strong}, this contradicts weak rationality, because $p(b_1,\cdot)\equiv0$.
	In case~\ref{case:support-rational}, this contradicts rationality, because $a^*\in\supp(p)$, $a^*_1\neq b_1$, and every action in $A_1\setm\{b_1\}$ is dominated by $b_1$ in $G^n$.
\end{proof}

Given any game, consider the subgame obtained by the following process: if some player has a dominant action, eliminate all of their other actions; repeat this step with the reduced game until no more eliminations are possible.
We say that an action profile survives iterated restriction to dominant actions if it is not eliminated during this process.
\Cref{lem:support} shows that a solution concept satisfying consistency and either (i) strong consequentialism and weak rationality or (ii) consequentialism and rationality returns only correlated strategies that are supported on action profiles that survive iterated restriction to dominant actions.

\begin{lemma}[Iterated restriction to dominant actions]\label{lem:iterated-dominant}
	Let $f$ be a solution concept satisfying consistency and either
	\begin{enumerate}[label=(\roman*),ref=(\roman*)]
		\item strong consequentialism and weak rationality, or
		\item consequentialism and rationality.
	\end{enumerate}
	Let $G$ be a game on $A$ and let $\tilde A = \tilde A_1\timesdots \tilde A_n\subseteq A$ be the set of action profiles that survive iterated restriction to dominant actions.
	Then, $f(G) \subseteq\Delta(\tilde A)$.
\end{lemma}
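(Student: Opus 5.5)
We will prove the lemma by induction along the elimination process, applying \Cref{lem:support} once per elimination round. Write the iterated restriction as a sequence of product sets
\[
A = A^0 \supseteq A^1 \supseteq \dots \supseteq A^K = \tilde A .
\]
Each set has the form $A^k = A^k_1\timesdots A^k_n$. The set $A^{k+1}$ is obtained from $A^k$ by choosing a player $j$ and an action $b_j\in A^k_j$ that is dominant in the subgame $G|_{A^k}$, and replacing $A^k_j$ by $\{b_j\}$. Fix $p\in f(G)$. We show by induction on $k$ that $\supp(p)\subseteq A^k$. The base case $k=0$ is trivial.

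For the induction step, assume $\supp(p)\subseteq A^k$, and let $j$ and $b_j$ be the player and action used in step $k+1$. \Cref{lem:support} requires
\[
G_j(b_j,a_{-j}) > G_j(a_j,a_{-j}) \quad\text{for all } a_j\in A_j\setm\{b_j\} \text{ and all } a_{-j}\in \tilde A_{-j},
\]
where the lemma's $\tilde A$ is taken to be $A^k$. That is, $b_j$ must beat all actions of the full set $A_j$, not only those in $A^k_j$. Our dominance hypothesis only gives this for $a_j\in A^k_j$. This mismatch is the main obstacle.

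We handle it by choosing the lemma's product set with care. Take the lemma's $\tilde A$ to be $A^k_{-j}$ on the coordinates other than $j$, and to be $A^k_j$ in coordinate $j$. The lemma's hypothesis then concerns only $\tilde A_{-j}$, while its $a_j$ still ranges over all of $A_j$. This does not yet remove the difficulty.

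The resolution is to observe that the induction hypothesis gives $p(a_j,\cdot)\equiv 0$ for every $a_j\notin A^k_j$. We would like to remove those actions. One might try to pass to the subgame $G|_{A^k}$, but $f$ need not commute with taking subgames. Instead we modify payoffs. For each $a_j\in A_j\setm A^k_j$ and each $a_{-j}\in A^k_{-j}$, lower player $j$'s payoff to $\min_{a_{-j}} G_j - 1$, and call the result $G'$.

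This modification is justified by the proof technique of \Cref{lem:support}, not by its statement. That technique (clones plus symmetrization, then consistency) only uses payoffs on profiles relevant to the support. So we reprove a mild strengthening of \Cref{lem:support} in which $a_j$ ranges only over $\tilde A_j$, under the extra assumption $\supp(p)\subseteq\tilde A$.

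To carry out this strengthening, first apply the construction in the second part of that proof to player $j$ itself, treating $A_j\setm\tilde A_j$ exactly as the other players' non-$\tilde A$ actions are treated. Add many clones of $b_j$, average over permutations fixing $\tilde A_j$ pointwise, and then collapse back. This makes $b_j$ strictly better than every action outside $\tilde A_j$ on $\tilde A_{-j}$. The clone count is chosen via the $\mu/\eps$ bound, exactly as in that proof. Because $p$ vanishes outside $\tilde A$, equivariance and consistency keep $p\in f(\cdot)$ throughout. After this preprocessing, the hypothesis of \Cref{lem:support} holds verbatim, and the lemma yields $\supp(p)\subseteq\{b_j\}\times A^k_{-j}=A^{k+1}$.

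The other coordinates require no care, because the lemma's hypothesis already restricts opponents to $\tilde A_{-j}$.

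Finally, after $K$ steps we obtain $\supp(p)\subseteq\tilde A$, that is, $p\in\Delta(\tilde A)$. The main work, and the step to check carefully, is this preprocessing of player $j$'s own actions outside $A^k_j$. In particular, the averaging margin computation must go through when the cloned action $b_j$ is the same action that is supposed to dominate.
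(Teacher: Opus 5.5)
Your skeleton is the paper's proof: induct along the elimination sequence $A^0\supseteq\dots\supseteq A^K$ and apply \Cref{lem:support} once per round, with its $\tilde A$ equal to $A^k$. The paper's proof is just the sentence ``repeated application of \Cref{lem:support}''. However, the ``main obstacle'' you identify does not exist. Each restriction step replaces the restricted player's action set by a singleton, and later steps never touch that coordinate again. So a player is restricted nontrivially at most once; a step applied to a player whose set is already $\{b_j\}$ changes nothing. Hence, whenever step $k+1$ restricts player $j$ nontrivially, $A^k_j=A_j$. Dominance of $b_j$ in $G|_{A^k}$ is then exactly the hypothesis of \Cref{lem:support} with $\tilde A=A^k$: $G_j(b_j,a_{-j})>G_j(a_j,a_{-j})$ for all $a_j\in A_j\setminus\{b_j\}$ and $a_{-j}\in A^k_{-j}$. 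The lemma directly gives $\supp(p)\subseteq\{b_j\}\times A^k_{-j}=A^{k+1}$.

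You should delete the preprocessing, because as sketched it is flawed. The payoff-lowering game $G'$ is not licensed by any axiom, since nothing gives you $p\in f(G')$. The cloning variant also fails, exactly at the point you flagged. Clones of $b_j$ have margin $0$ against $b_j$, not at least $\eps$. After averaging over permutations of $\hat A_j$ that fix $\tilde A_j$ pointwise, every action outside $\tilde A_j$ has margin $\frac{1}{|B|+m}\sum_{a_j\in A_j\setminus\tilde A_j}\bigl(G_j(b_j,a_{-j})-G_j(a_j,a_{-j})\bigr)$ against $b_j$, where $m=|A_j\setminus\tilde A_j|$. The sign of this margin does not depend on $|B|$, so adding clones achieves nothing. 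The $\mu/\eps$ argument in the proof of \Cref{lem:support} works only because the cloned action is an opponent's action in $\tilde A_i$, where player $1$'s margin is at least $\eps$. Here the analogue would require cloning an action that $b_j$ strictly dominates, not $b_j$ itself. Replace the preprocessing with the observation $A^k_j=A_j$, and your argument coincides with the paper's.
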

\begin{proof}
	This follows from repeated application of \Cref{lem:support}.
\end{proof}

If only a single action profile survives iterated restriction to dominant actions, this profile is the unique Nash equilibrium of the game.
\Cref{lem:iterated-dominant} implies that any total solution concept 	satisfying consistency and either (i) strong consequentialism and weak rationality or (ii) consequentialism and rationality returns this profile and nothing else.
We show that any game with a pure quasi-strict Nash equilibrium is a convex combination of games for which only this profile survives iterated restriction to dominant actions.
Together, this proves that all pure quasi-strict Nash equilibria are returned.

\begin{lemma}[Quasi-strict pure Nash equilibria I]\label{lem:decomposition-into-dominant-solvable}
	Let $G$ be a game on $A$ and $\tilde a\in A$ such that $\chi_{\tilde a}$ is a pure quasi-strict Nash equilibrium of $G$.
	Then, there are games $G^1,\dots,G^n$ on $A$ such that $\tilde a$ is the only action profile that survives iterated restriction to dominant actions in $G^i$ for each $i\in[n]$, and $G$ is a convex combination of $G^1,\dots,G^n$.
\end{lemma}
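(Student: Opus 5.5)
The idea is to write $G=\frac1n\sum_{k=1}^n G^k$, where each $G^k$ is $G$ plus a correction term. The corrections sum to zero over $k$, and in $G^k$ the players can be restricted to their actions in $\tilde a$ one at a time, in the cyclic order $k,k+1,\dots,n,1,\dots,k-1$.

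First, note that for a pure Nash equilibrium $\chi_{\tilde a}$, quasi-strictness says exactly that $\tilde a$ is a \emph{strict} equilibrium. That is, $G_i(\tilde a_i,\tilde a_{-i})>G_i(b_i,\tilde a_{-i})$ for each $i$ and each $b_i\neq\tilde a_i$. Let $M>0$ exceed $\max_{i,a,b}|G_i(a)-G_i(b)|$. Fix a game index $k$, which determines the cyclic order of players. For player $i$, let $P_k(i)$ be the set of players preceding $i$ in that order; $P_k(i)$ is empty if $i=k$.

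Next, for each player $i$ and each $a_{-i}\neq\tilde a_{-i}$, choose numbers $e^k_i(a_{-i})$, $k\in[n]$, with $\sum_k e^k_i(a_{-i})=0$. The requirement is $e^k_i(a_{-i})\ge M$ whenever every $j\in P_k(i)$ plays $a_j=\tilde a_j$; call such $k$ \emph{constrained}. This is possible because some $k$ is unconstrained. Indeed, $a_{-i}\neq\tilde a_{-i}$ gives a deviator $j\neq i$ with $a_j\neq\tilde a_j$, and in the game $k=j$ player $j$ is first, so $j\in P_j(i)$. Put a large negative value on such an unconstrained $k$ and $M$ on all others. Set $e^k_i(\tilde a_{-i})=0$ for all $k$. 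Then define
\[
G^k_i(a)=G_i(a)+\chi_{\{\tilde a_i\}}(a_i)\,e^k_i(a_{-i}).
\]
Since the corrections sum to zero over $k$, we get $\frac1n\sum_k G^k=G$.

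It remains to verify that in $G^k$, iterated restriction to dominant actions leaves only $\tilde a$. Induct along the cyclic order. Suppose the players in $P_k(i)$ have already been restricted to $\tilde a_j$, and consider any remaining profile $a_{-i}$. If $a_{-i}\neq\tilde a_{-i}$, then $k$ is constrained for $(i,a_{-i})$. Hence
\[
G^k_i(\tilde a_i,a_{-i})-G^k_i(b_i,a_{-i})=G_i(\tilde a_i,a_{-i})-G_i(b_i,a_{-i})+e^k_i(a_{-i})>-M+M=0.
\]
If $a_{-i}=\tilde a_{-i}$, there is no correction, and strictness of the equilibrium gives the advantage. So $\tilde a_i$ is dominant in the restricted game, and $i$ is restricted to $\tilde a_i$. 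After all $n$ steps only $\tilde a$ survives.

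The only delicate point is the bookkeeping: the corrections must cancel across $k$ while each relevant dominance constraint is met in some game. The key observation is that whenever a correction is needed (some opponent deviates), there is a game in which that deviator precedes $i$, and there the constraint is vacuous, so that game can absorb the negative mass.
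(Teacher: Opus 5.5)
Your proof is correct. It follows the same overall strategy as the paper: write $G$ as the uniform average of $n$ games, each solvable to $\tilde a$ by iterated restriction to dominant actions, with the payoff modifications cancelling across the games. The decomposition itself, however, is different.

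The paper uses a ``star'' structure. In $G^j$, every player $i\neq j$ has $\tilde a_i$ \emph{globally} dominant. This is achieved by keeping $G$ unchanged on the cross $\bigcup_{k\in N}\tilde A^k$, with $\tilde A^k=A_k\times\{\tilde a_{-k}\}$, and choosing payoffs off the cross appropriately. Player $j$ is restricted last and then faces only $\tilde a_{-j}$. There the payoffs equal those of $G$, so strictness suffices. Player $j$ can therefore absorb the whole residual $nG_j-\sum_{i\neq j}\hat G^i_j$ off the cross, which means each player $i$ has a single fixed absorbing game, $G^i$.

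You instead use cyclic elimination orders. You add corrections only to the $\tilde a_i$-row of player $i$'s payoff, and these vanish at $\tilde a_{-i}$, so strictness at $\tilde a$ is untouched. The absorbing game for a pair $(i,a_{-i})$ is $G^j$ for some deviator $j$ in $a_{-i}$. This works precisely because player $j$ is restricted first in $G^j$, so those profiles are already eliminated when $i$ is processed.

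Your version gives a fully explicit formula with a single constant $M$ and avoids the feasibility argument behind the paper's $\hat G^j$. The paper's version gives a shallower elimination, with all players but one removable in the first round, and keeps $G$'s payoffs on the cross intact in every $G^j$.
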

\begin{proof}
	For each $j\in N$, let $\tilde A^j = A_j\times\{\tilde a_{-j}\}$, and let $\hat G^j$ be a game on $A$ so that for each $i\in N\setm\{j\}$, $\tilde a_i$ dominates all actions in $A_i\setm \{\tilde a_i\}$, and for each $i\in N$ and $a\in \bigcup_{k\in N}\tilde A^k$,
	\begin{align*}
		\hat G^j_i(a) = G_i(a)
	\end{align*}
	This is feasible since $\tilde a$ is a pure quasi-strict Nash equilibrium of $G$.
	For the same reason, $\tilde a_j$ dominates all actions in $A_j\setm\{\tilde a_j\}$ if $\hat G^j$ is restricted to $\tilde A^j$.
	Note that, by construction, for each $i\in N$ and $a\in \bigcup_{k\in N} \tilde A^k$,
	\begin{align}
		\sum_{j\in N} \frac1n\cdot \hat G^j_i(a) = G_i(a).\label{eq:equaltog}
	\end{align}
	
	Fix $j\in N$, and let $G^j$ be the game on $A$ so that for each $i\neq j$, $G^j_i = \hat G^j_i$, and
	\begin{align*}
		G^j_j(a) =
		\begin{cases}
			\hat G^j_j(a) &\text{ for $a\in \bigcup_{k\in N} \tilde A^k$, and}\\
			n\cdot G_j(a) - \sum_{i\neq j} \hat G^i_j(a) &\text{ for $a\in A\setminus\bigcup_{k\in N} \tilde A^k$.}
		\end{cases}
	\end{align*}
	Observe that \eqref{eq:equaltog} also holds with $G^j$ in place of $\hat G^j$ since both games agree on $\bigcup_{k\in N}\tilde A^k$.
	Moreover, for each $i\neq j$, $\tilde a_i$ dominates all actions in $A_i\setm\{\tilde a_i\}$, and $\tilde a_j$ dominates all actions in $A_j\setm\{\tilde a_j\}$ if $G^j$ is restricted to $\tilde A^j$.
	Hence, $\tilde a$ is the only action profile in $G^j$ that survives iterated restriction to dominant actions.
	Lastly, for each $j\in N$ and each $a\in A\setminus \bigcup_{k\in N} \tilde A^k$,
	\begin{align*}
		\sum_{i\in N} G^i_j(a) &= G^j_j(a) + \sum_{i\neq j} G^i_j(a)\\
		&= n\cdot G_j(a) - \sum_{i\neq j} \hat G^i_j(a) + \sum_{i\neq j} G^i_j(a)\\
		&= n\cdot G_j(a) - \sum_{i\neq j} G^i_j(a) + \sum_{i\neq j} G^i_j(a)\\
		&= n\cdot G_j(a).
	\end{align*}
	For $a\in\bigcup_{k\in N} \tilde A^k$, the same equality follows from \eqref{eq:equaltog}.
	Hence, $G = \sum_{i\in N} \frac1n G^i$ as required.
\end{proof}

\begin{lemma}[Quasi-strict pure Nash equilibria II]\label{lem:quasi-strict-pure-nash}
	Let $f$ be a total solution concept satisfying consistency and either
	\begin{enumerate}[label=(\roman*),ref=(\roman*)]
		\item strong consequentialism and weak rationality, or
		\item consequentialism and rationality.
	\end{enumerate}
	Let $G$ be a game on $A$ and $\tilde a\in A$ such that $\chi_{\tilde a}$ is a pure quasi-strict Nash equilibrium of $G$.
	Then, $\chi_{\tilde a}\in f(G)$.
\end{lemma}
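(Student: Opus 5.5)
We combine \Cref{lem:decomposition-into-dominant-solvable} with \Cref{lem:iterated-dominant} and consistency. By \Cref{lem:decomposition-into-dominant-solvable}, there are games $G^1,\dots,G^n$ on $A$ such that $G=\sum_{j\in N}\frac1n G^j$ and, in each $G^j$, the profile $\tilde a$ is the only action profile surviving iterated restriction to dominant actions.

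Fix $j\in N$. Since $f$ is total, $f(G^j)\neq\emptyset$. By \Cref{lem:iterated-dominant}, applicable in either case (i) or (ii), every $p\in f(G^j)$ is supported on the set of surviving profiles, which here is $\{\tilde a\}$. The only correlated strategy supported on $\{\tilde a\}$ is $\chi_{\tilde a}$, so $f(G^j)=\{\chi_{\tilde a}\}$. In particular, $\chi_{\tilde a}\in\bigcap_{j\in N} f(G^j)$.

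It remains to pass to the convex combination. Consistency is stated for two games, so we proceed inductively. Set $H^k=\frac1k\sum_{j\le k}G^j$. Then
\[
H^{k+1}=\frac{k}{k+1}H^k+\frac1{k+1}G^{k+1}.
\]
If $\chi_{\tilde a}\in f(H^k)$ and $\chi_{\tilde a}\in f(G^{k+1})$, consistency gives $\chi_{\tilde a}\in f(H^{k+1})$. The base case is $H^1=G^1$. Since $H^n=G$, we conclude $\chi_{\tilde a}\in f(G)$.

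No step here is a real obstacle: all of the work lies in the two preceding lemmas. The one point to check is that \Cref{lem:iterated-dominant} applies under exactly the hypotheses assumed here. It does, since it requires only consistency together with one of the two axiom pairs; totality is used solely to make $f(G^j)$ nonempty.
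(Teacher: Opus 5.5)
Your proof is correct and follows the paper's argument exactly: you decompose $G$ via \Cref{lem:decomposition-into-dominant-solvable}, use \Cref{lem:iterated-dominant} together with totality to get $\chi_{\tilde a}\in f(G^j)$ for each $j$, and conclude by consistency. Your explicit induction $H^{k+1}=\frac{k}{k+1}H^k+\frac{1}{k+1}G^{k+1}$ just spells out the repeated two-game application of consistency that the paper leaves implicit.
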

\begin{proof}
	Let $G^1,\dots,G^n$ be the games promised by \Cref{lem:decomposition-into-dominant-solvable}.
	For each $j\in N$, $\tilde a$ is the only action profile in $G^j$ that survives the iterated restriction to dominant actions.
	Hence, by \Cref{lem:iterated-dominant} and the assumption that $f$ is total, $\chi_{\tilde a}\in f(G^j)$ for each $j\in N$.
	Since $G = \sum_{j\in N} \frac1n\cdot G^j$, consistency implies that $\chi_{\tilde a}\in f(G)$. 
\end{proof}

For any pure Nash equilibrium in any game $G$, there is a sequence of games converging to $G$ in which that equilibrium is pure and \emph{quasi-strict}.
Hence, any continuous solution concept that returns pure quasi-strict Nash equilibria returns all pure Nash equilibria.
We thus have the following immediate consequence of \Cref{lem:quasi-strict-pure-nash}.

\begin{proposition}[Pure Nash equilibria]
	\label{prop:pure-nash}
	Let $f$ be a total and continuous solution concept satisfying consistency and either
	\begin{enumerate}[label=(\roman*),ref=(\roman*)]
		\item strong consequentialism and weak rationality, or
		\item consequentialism and rationality.
	\end{enumerate}
	Then, $\nashpure \subseteq f$.
\end{proposition}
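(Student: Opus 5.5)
The plan is to combine \Cref{lem:quasi-strict-pure-nash} with continuity of $f$, using an explicit perturbation. Fix a game $G$ on $A$ and $\tilde a\in A$ such that $\chi_{\tilde a}\in\nashpure(G)$. For $\eps>0$ I will define $G^\eps$ on $A$ by
\[
	G^\eps_i(a) = G_i(a) + \eps\cdot \chi_{\{\tilde a_i\}}(a_i)
\]
for each $i\in N$ and $a\in A$. So player $i$ receives a bonus $\eps$ whenever they play $\tilde a_i$, regardless of what the others do. Clearly $G^\eps\to G$ as $\eps\to 0$.

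The first step is to check that $\chi_{\tilde a}$ is a quasi-strict pure Nash equilibrium of $G^\eps$ for every $\eps>0$. For a pure profile, quasi-strictness means that for each $i$ and each $b_i\neq\tilde a_i$ we have $G^\eps_i(\tilde a_i,\tilde a_{-i})>G^\eps_i(b_i,\tilde a_{-i})$. Substituting the definition, the left side is $G_i(\tilde a)+\eps$ and the right side is $G_i(b_i,\tilde a_{-i})$. The Nash property of $\tilde a$ in $G$ gives $G_i(\tilde a)\ge G_i(b_i,\tilde a_{-i})$, so the strict inequality holds.

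The second step applies the earlier results. By \Cref{lem:quasi-strict-pure-nash}, which requires only totality, consistency, and one of the two axiom pairs, we get $\chi_{\tilde a}\in f(G^\eps)$ for every $\eps>0$. Take the sequence $\eps_\ell=1/\ell$ with the constant sequence of correlated strategies $p^\ell=\chi_{\tilde a}$. Upper hemi-continuity then yields $\chi_{\tilde a}\in f(G)$, and since $G$ and $\tilde a$ were arbitrary, $\nashpure\subseteq f$.

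I do not expect a real obstacle, since the substantive work is already in \Cref{lem:support}, \Cref{lem:decomposition-into-dominant-solvable} and \Cref{lem:quasi-strict-pure-nash}. The only point needing care is that the perturbed games stay on the same action set $A$, as the continuity definition requires. The construction above ensures this because it changes only payoffs, not action sets.
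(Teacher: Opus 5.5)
Your proposal is correct and matches the paper's argument: the paper also perturbs $G$ so that the pure equilibrium becomes quasi-strict, applies \Cref{lem:quasi-strict-pure-nash}, and passes to the limit by continuity. Your $\eps$-bonus on $\tilde a_i$ simply makes the paper's one-line perturbation explicit.
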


\section{Two-player zero-sum games}\label{sec:two-player-zero-sum}

We show that every total, continuous, and convex-valued solution concept that satisfies consistency and either (i) strong consequentialism and weak rationality or (ii) consequentialism and rationality returns all Nash equilibria for each zero-sum game in which all but two players are dummy players.
Abusing terminology, we say that a game is zero-sum if there exist positive affine transformations of the players' utility functions such that for each action profile, the sum of all players' payoffs is $0$.\footnote{Any game that is zero-sum in this sense is strategically zero-sum as defined by \citet{MoVi78a}. The converse is not true.}

\begin{definition}[Zero-sum games]
	\label{def:zero-sum}
	A game $G$ is zero-sum if there are $\alpha \in\mathbb R_{>0}^n$ and $\beta\in\mathbb R$ such that $\sum_{i\in N} \alpha_i G_i + \beta \equiv 0$.
\end{definition}

We say that a player $i$ is a dummy player if $i$'s payoff is constant and all of $i$'s actions are clones.
Equivalently, $i$'s payoff is constant and no player's payoff depends on $i$'s action, irrespective of the other players' actions.
An essentially two-player game is one in which all but two players are dummies.
\begin{definition}[Essentially two-player games]
	\label{def:essentially-two-player}
	Let $G$ be a game on $A$.
	Player $i\in N$ is a dummy player if $G_i$ is constant on $A$ and all actions in $A_i$ are clones.
	$G$ is an essentially two-player game if there are distinct $j,k\in N$ such that all players in $N\setm\{j,k\}$ are dummy players.
	If $G$ is an essentially two-player game, $j,k\in N$ are distinct, and $\gamma G_j + G_k$ is constant for some $\gamma > 0$, we say that $G$ is $(j,k,\gamma)$-zero-sum.
\end{definition}

We prove that for any pair of distinct players $j,k$, there exists some $\gamma > 0$ such that $f$ returns all Nash equilibria in each $(j,k,\gamma)$-zero-sum game.

\begin{proposition}[Nash equilibria of essentially two-player zero-sum games]
	\label{prop:zero-sum-two-player}
	Let $f$ be a total, continuous, and convex-valued solution concept satisfying consistency and either
	\begin{enumerate}[label=(\roman*),ref=(\roman*)]
		\item strong consequentialism and weak rationality, or
		\item consequentialism and rationality.
	\end{enumerate}
	Then, for all distinct $j,k\in N$, there is $\gamma > 0$ such that for each $(j,k,\gamma)$-zero-sum game $G$, $\nash(G)\subseteq f(G)$.
\end{proposition}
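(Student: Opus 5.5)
The plan is to first reduce to the positively homogeneous case using \Cref{lem:homogeneous-core}: replacing $f$ by $\tilde f$ only shrinks $f$, so proving the claim for $\tilde f$ proves it for $f$. From then on I may also use \Cref{lem:adding-constants}. I fix distinct $j,k$ and treat all other players as dummies. Dummy players are harmless, since all their actions are clones and consequentialism lets me collapse each of them to a single action. So the argument is really about two-player games with payoffs $(G_j,-\gamma G_j)$ up to constants.

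\textbf{Step 1 (matching pennies and the choice of $\gamma$).} Let $M^m$ be the cyclic $m\times m$ matching-pennies game. In it, $j$ gets $1$ on the diagonal and $0$ elsewhere, and $k$ gets $-\gamma$ times this. Its symmetry group contains the simultaneous cyclic shifts of rows and columns. So equivariance, together with totality and convexity, yields a returned $p$ that is invariant under diagonal shifts. In particular $p$ has uniform marginals, and it is summarized by the mass it puts on each "diagonal" $\{(r,r+s)\}$.

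The role of the existential $\gamma$ is to fix the relative scale of the two players, which positive homogeneity cannot do. At $\gamma=0$, player $k$ is indifferent, so the diagonal profiles are pure Nash equilibria; by \Cref{prop:pure-nash} they are returned, giving diagonal mass $1$. As $\gamma\to\infty$ (after rescaling), $j$ becomes indifferent and off-diagonal pure profiles are returned. Upper hemicontinuity in $\gamma$ and convex-valuedness should then give an intermediate-value argument. The set of $\gamma$ for which some returned invariant $p$ has diagonal mass $\ge 1/m$ is closed, and so is the set for $\le 1/m$. Together they cover a connected interval, so they intersect. Convexity then produces a returned $p$ with diagonal mass exactly $1/m$.

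I would then use further symmetries and consistency, averaging with relabeled games in which different off-diagonals play the role of the diagonal, to push $p$ to $\uni(A)$. This has to be done with one $\gamma$ that works for all $m$, which I expect to get by taking $\gamma$ from a limit or from the $2\times 2$ case and transferring it via blow-ups.

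\textbf{Step 2 (full-support uniform equilibria).} Any $(j,k,\gamma)$-zero-sum game for which $\uni(A)$ is a Nash equilibrium has $j$'s payoff matrix with constant row and column sums. I would show that, after \Cref{lem:adding-constants} and scaling, such a matrix is a nonnegative combination of permutation-type matrices. Each of these is, up to relabeling, a blow-up of a matching-pennies game (Birkhoff-type decomposition). For each of these blown-up games, $\uni$ is clone-symmetric and is returned by Step 1 and consequentialism. Consistency then gives $\uni(A)\in f(G)$.

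\textbf{Step 3 (general Nash equilibria), the main obstacle.} For a rational Nash equilibrium $p=p_j\otimes p_k$ of $G$, I clone each action $a_j$ proportionally to $p_j(a_j)$, and similarly for $k$. The lifted product strategy is uniform on the supported clones and is clone-symmetric. Unsupported actions must then be handled by showing they may carry zero probability. For this I would perturb to quasi-strict equilibria, where unsupported actions are strictly worse. I would then decompose the game into one that is a uniform-equilibrium game on the support and one where off-support actions are handled by an argument in the style of \Cref{lem:support}, and close the step with consistency. Irrational and non-quasi-strict equilibria then follow by continuity.

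I expect the delicate part to be making Steps 1 and 3 work with a single $\gamma$ and with zero-probability actions. I would attempt the quasi-strict decomposition first.
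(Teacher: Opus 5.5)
\textbf{Step 1 has a genuine gap.} The intermediate-value argument only yields a crossing point $\gamma^*$ in the closed (compactified) parameter interval. Nothing in your tools excludes $\gamma^*$ being a degenerate endpoint. At $\gamma=0$ (or $\gamma=\infty$) one player is indifferent, so $\uni(A)$ is itself a Nash equilibrium there. Nothing you invoke (\Cref{prop:pure-nash}, equivariance, convexity, upper hemicontinuity) rules out the following scenario: $\uni(A)$ is returned only at that endpoint, while for every interior $\gamma$ all returned symmetric strategies lie strictly on one side of diagonal mass $1/m$. Then you get no usable $\gamma>0$.

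The paper closes exactly this case with an ingredient your proposal never uses: \Cref{lem:brbr23a}, the Brandl--Brandt result that every \emph{product} distribution returned by $f$ is a Nash equilibrium. Working with $m=2$ and stakes $(1,\beta)$, it argues in two steps.
\begin{itemize}
\item The returned symmetric $p^{s_\beta,t_\beta}$ must converge to $\uni(A)$ as $\beta\to0$. Otherwise, mixing the limit with the returned off-diagonal pure equilibria of $G^{(1,0)}$ produces a returned product distribution that is not a Nash equilibrium.
\item A transposition trick moves $\uni(A)$ to the other endpoint. Convexity gives $p^{t_\beta,s_\beta}\in f(G^{(1,0)})$; relabeling player $j$'s actions and averaging with $G^{(1,\beta)}$ gives a game that is $G^{(0,1)}$ up to constants and scaling. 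This puts $\uni(A)$ in $f(G^{(0,1)})$, and consistency then yields $\uni(A)\in f(G^{(1,1)})$.
\end{itemize}

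Two smaller points on Steps 1 and 2. For $m\ge3$, diagonal mass $1/m$ does not imply uniformity, so you should stay at $m=2$. But then a permutation matrix is not literally a blow-up of $2\times2$ matching pennies. You need the affine Birkhoff variant, \Cref{lem:building-fully-stochastic-matrices}: every bistochastic matrix is an affine image of an average of relabeled blow-ups of the $2\times2$ game, combined with \Cref{lem:adding-constants}. This is also what keeps a single $\gamma$ valid for all sizes.

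\textbf{Step 3 lacks a working mechanism for unsupported actions.} Perturbing to quasi-strict equilibria goes the wrong direction. An argument ``in the style of'' \Cref{lem:support} cannot help either. That lemma transports membership from a game to averaged or modified games and restricts supports, so it is an upper-bound argument. You instead need to \emph{establish} $p\in f(G)$ from simpler games. The paper does this in two moves.
\begin{itemize}
\item First, it subtracts the game in which each $a_j$ pays its expected payoff $G_j(a_j,p_k)$ regardless of $k$'s action. There $p$ is returned by \Cref{prop:pure-nash}, because $j$'s best actions are clones and all of $k$'s actions are clones. This reduces to the case where $j$ is indifferent among \emph{all} actions against $p_k$.
\item Second, for each unsupported $a_j$ it adds a clone $\tilde a_j$ and a reflected action with payoff $2G(b_j,\cdot)-G(a_j,\cdot)$, where $b_j$ is supported. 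With suitable weights, the enlarged game has a rational Nash equilibrium giving $j$ full support and leaving $k$'s incentives unchanged. This equilibrium is returned by the induction on the number of players without full support, whose base case is the uniform case after cloning. Shifting the mass of $a_j$ onto $\tilde a_j$, swapping $\tilde a_j$ with the reflected action, and averaging turns both into clones of $b_j$. Consequentialism then collapses them, giving $p\in f(G)$.
\end{itemize}
Your proposal has no substitute for either move.
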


The strategy is to construct more and more essentially two-player zero-sum games for which $f$ returns some of the Nash equilibria.
By consistency, we have succeeded if for each of the games $G$ in \Cref{prop:zero-sum-two-player} and each Nash equilibrium $p$ of $G$, $G$ is a convex combination of games for which $p$ is a Nash equilibrium and we have established that $f$ returns $p$.

\subsection{Nash equilibria of bistochastic essentially two-player zero-sum games}

The first step is to find some nontrivial game for which $f$ returns a full-support Nash equilibrium.
We use the game in which two players play a matching pennies game and all other players are dummies.

\begin{definition}[Matching pennies games]
	For distinct $j,k\in N$ and $z\in\mathbb R_{\ge 0}^{\{j,k\}}$, the matching pennies game between $j$ and $k$ with stakes $z$ is the game $G^{j,k,z}$ on $A$ with $A_j = A_k = \{1,2\}$ and $A_i=\{0\}$ for each $i\neq j,k$, where for each $a\in A$,
	\begin{align*}
		G_j(a) = \begin{cases}
			z_j \quad&\text{if } a_j+a_k \equiv 1 \pmod{2}\text{,}\\
			0 &\text{otherwise,}
		\end{cases}
		\quad\text{and}\quad
		G_k(a) = \begin{cases}
			z_k \quad&\text{if } a_j+a_k \equiv 0 \pmod{2}\text{,}\\
			0 &\text{otherwise,}
		\end{cases}
	\end{align*}
	and $G_i\equiv 0$ for each $i\neq j,k$.
	If $z_j = z_k = 1$, we write $G^{j,k}$ for short.
\end{definition}

For example, for $n=2$,
	\[
	G^{1,2}=
	\begin{pmatrix}
0,1 & 1,0\\
1,0 & 0,1\\
\end{pmatrix}
	\]

The unique coarse correlated equilibrium of any matching pennies game for $n=2$ is uniform randomization over all action profiles.
We prove that this uniform distribution is returned.
The proof uses that $f$ is a refinement of \nash when restricting to strategies with independent randomization.
\begin{lemma}[\citealp{BrBr23a}]
	\label{lem:brbr23a}
	Let $f$ be a total solution concept that satisfies consistency, consequentialism, and weak rationality. 
	Let $G$ be a game on $A$, and let $p = p_1\otimes \dots\otimes p_n$ with $p_i \in \Delta(A_i)$. 
	Then, $p\in f(G)$ implies $p \in \nash(G)$.
\end{lemma}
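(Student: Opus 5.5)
The plan is to argue by contradiction and reduce the game, using only consequentialism, equivariance and consistency, to a game where player $i$ has a dominant action played with probability $0$; weak rationality then gives the contradiction. Suppose $p=p_1\otimes\dots\otimes p_n\in f(G)$ but $p\notin\nash(G)$. Then there are a player $i$ and an action $b_i$ with $G_i(b_i,p_{-i})>G_i(p_i,p_{-i})$; choose $b_i$ to be a best response to $p_{-i}$. All changes below act on player $i$'s payoffs, and since $p$ is a product distribution, every blow-up with product weights keeps it clone-symmetric.

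\emph{Step 1 (symmetrize the opponents).} First assume $p$ has rational entries. Fix a common denominator $m$. For each $j\neq i$, blow up every $a_j\in\supp(p_j)$ into $m\,p_j(a_j)$ clones and put equal probability on each clone; consequentialism keeps the blown-up product strategy $\hat p$ in $f(\hat G)$. Now $\hat p_{-i}$ is uniform over the clones of $\supp(p_{-i})$. Permutations of each opponent's clone set that fix zero-probability actions leave $\hat p$ invariant. Hence equivariance and repeated consistency give $\hat p\in f(\bar G)$, where $\bar G$ averages $\hat G\circ\pi$ over all such permutations. On the support of $\hat p_{-i}$, player $i$'s payoff in $\bar G$ becomes $u(c):=G_i(c,p_{-i})$ for each own action $c$, independently of the opponents' actions. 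I would first restrict attention to this support, as in \Cref{lem:support}, and handle zero-probability opponent actions by a further averaging over clones as in the second half of that proof.

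\emph{Step 2 (symmetrize player $i$).} Make $\hat p_i$ uniform over clones of $\supp(p_i)$ in the same way. In addition, add $K$ clones of $b_i$ that receive probability $0$; this is allowed by consequentialism, since the weights are still product weights. Next average over permutations of player $i$'s clone set that permute positive-probability clones among themselves and zero-probability actions among themselves. These permutations fix $\hat p$, so consistency again keeps $\hat p$ in $f$ of the averaged game. After averaging, every positive-probability action of $i$ has payoff equal to the average $G_i(p_i,p_{-i})$. Every zero-probability action has payoff equal to the average of $u$ over the zero-probability actions. Because $u(b_i)>G_i(p_i,p_{-i})$, taking $K$ large makes the second value strictly exceed the first.

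\emph{Step 3 (collapse).} In the resulting game, all positive-probability actions of $i$ are clones of one another, and so are all zero-probability actions. Collapse each class to a single action via consequentialism; the strategy remains clone-symmetric. This leaves player $i$ with two actions, and the one with probability $0$ strictly dominates the other. That contradicts weak rationality.

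The main obstacle is Step 1 when $p$ has irrational entries. Then exact equal-probability blow-ups are unavailable, and continuity is not assumed in \Cref{lem:brbr23a}, so one cannot simply approximate. My first attempt would be to avoid uniformity altogether. I would decompose $G_i(c,a_{-i})=u(c)+D(c,a_{-i})$, where $D(c,\cdot)$ has mean zero under $p_{-i}$. Then I would write $\bar G$ as a convex combination of finitely many games obtained by clone splittings with possibly irrational but product weights. Each such game is permuted so that $p$ stays fixed, and the averaged $D$-terms cancel. Failing that, I would reduce to rational $p_{-i}$ by splitting clones so that the relevant ratios become rationally commensurable.
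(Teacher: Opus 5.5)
For rational-valued $p$ your argument works. It is also a self-contained route, whereas the paper simply defers to Theorem~1 of \citet{BrBr23a}: symmetrize the opponents, symmetrize player $i$ after adding zero-probability clones of $b_i$, collapse the two classes, extend dominance off the support as in the second half of \Cref{lem:support}, and contradict weak rationality. One inaccuracy is harmless. The relabelings change every player's payoffs, not only player $i$'s. However, averaging over the full symmetric group of a class makes its members clones for all players, which is exactly what Step~3 needs.

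The genuine gap is irrational $p$, which the lemma must cover. The paper applies it to product distributions such as $\lambda p^{s_0,t_0}+(1-\lambda)\chi_{(1,2)}$ with real $s_0,t_0$. Neither of your fallbacks closes this gap. Re-splitting cannot create commensurability, because the clones of $a_j$ always carry total mass $p_j(a_j)$. Exact cancellation of the $D$-terms is impossible as well. Take a two-player instance with $p_j(L)=\alpha\notin\mathbb Q$ and $p_j(R)=1-\alpha$. Consider $\sum_k\lambda_k\, G\circ\phi^{(k)}$, where each $\phi^{(k)}$ is the identity on player $i$'s actions and maps $j$'s clones to $\{L,R\}$ with $\phi^{(k)}_*(\hat p)=p$. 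Player $i$'s payoff at $(c,x)$ is $w(x)G_i(c,L)+(1-w(x))G_i(c,R)$ with $w(x)=\sum_{k\colon \phi^{(k)}_j(x)=L}\lambda_k$. So $D$ vanishes on the support only if $w\equiv\alpha$ on all supported clones. Now let $\ell$ be a $\mathbb Q$-linear functional on the $\mathbb Q$-span of the clone probabilities with $\ell(1)=1$ and $\ell(\alpha)=0$. Since $\hat p_j(\{x\colon \phi^{(k)}_j(x)=L\})=\alpha$ for every $k$, we get $\sum_x \ell(\hat p_j(x))w(x)=\sum_k\lambda_k\ell(\alpha)=0$. But $w\equiv\alpha$ would give $\alpha\,\ell(1)=\alpha\neq 0$, a contradiction.

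The missing idea is to replace exact symmetrization by approximate symmetrization and let the strict incentive gap absorb the error. On the opponents' side, proceed as follows.
\begin{itemize}
\item For each opponent $j$, write $p_j=\sum_t\theta_t w_j^t$ with rational $w_j^t\in\Delta(\supp(p_j))$ within $\eta$ of $p_j$ and real $\theta_t>0$.
\item Realize block $t$ by $N_t w_j^t(a_j)$ clones of each $a_j$, each of probability $\theta_t/N_t$. The weights are product weights, so consequentialism applies. Blocks with equal clone probability merge into a convex combination, which stays within $\eta$.
\item Average over permutations within each probability class. Then player $i$'s payoff at every supported opponent profile equals $G_i(c,\bigotimes_{j\neq i}\bar w_j)$ for rational $\bar w_j$ within $\eta$ of $p_j$, hence lies within $O(\eta)$ of $u(c)$.
\end{itemize}
On player $i$'s side, the uniform split also fails, so proceed differently.
\begin{itemize}
\item Pick $c_0\in\supp(p_i)$ with $u(c_0)<\max u$; it exists because $p$ is not a Nash equilibrium.
\item Cut each best response in $\supp(p_i)$ into pieces of a small size $\delta$ plus a remainder, and give $c_0$ one clone of each of these sizes plus a leftover clone.
\item Then every probability class of player $i$ contains a clone of a non-best response. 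Hence every class average of $u$ is at most $\max u-g$ for some $g>0$.
\end{itemize}
Finally, choose $K$ so that the zero-probability class averages above $\max u-g/3$, and choose $\eta$ so that the approximation error is below $g/3$. This yields strict dominance on the support, after which your Step~3 and the off-support extension go through unchanged.
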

\begin{proof}
	Theorem 1 of \citet{BrBr23a} shows that \nash is the unique total solution concept returning strategies with independent randomization that satisfies consistency, consequentialism, and weak rationality.
	The proof of their Lemma 1 (used in the proof of their Theorem 1) remains true for solution concepts that return correlated strategies. 
	The proof of the part $f\subseteq\nash$ of their Theorem~1 assumes that there is $p = p_1\otimes\dots\otimes p_n \in f(G)\setminus \nash(G)$ and derives a contradiction. 
	Their arguments remain true for solution concepts that return correlated strategies. 
	Hence, the claim follows.
\end{proof}

Matching pennies games have a useful symmetry---a permutation of the action sets under which the game is invariant.
Indeed, switching the labels of player $1$'s actions and player $2$'s actions leaves the game unchanged.
Totality, equivariance, and convex-valuedness imply that $f$ returns a correlated strategy that is invariant under this symmetry. 
Moreover, the matching pennies game with $z_j = 0$ or $z_k= 0$ has two pure Nash equilibria. 
Together with \Cref{prop:pure-nash} and \Cref{lem:brbr23a}, this implies that $f$ returns the uniform distribution in a matching pennies game for some stakes.

\begin{lemma}[Matching pennies games]
	\label{lem:two-player-matching-pennies}
	Let $j,k\in N$ be distinct and let $f$ be a total, continuous, and convex-valued solution concept satisfying consistency and either
	\begin{enumerate}[label=(\roman*),ref=(\roman*)]
		\item strong consequentialism and weak rationality, or
		\item consequentialism and rationality.
	\end{enumerate}
	Then, there is $z\in\mathbb R_{>0}^{\{j,k\}}$ such that for each $\alpha > 0$, $\uni(A) \in f(G^{j,k,\alpha z})$, where $A_j=A_k=\{1,2\}$ and $A_i = \{0\}$ for each $i\neq j,k$.
\end{lemma}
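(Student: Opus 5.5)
The plan is a symmetry argument on a one-parameter family of matching pennies games, followed by a connectedness argument, carried out for the refinement $\tilde f$ of \Cref{lem:homogeneous-core}.

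\textbf{Reduction to $\tilde f$.} By \Cref{lem:homogeneous-core}, $\tilde f$ is again total, continuous, convex-valued, positively homogeneous, and satisfies the same axioms as $f$. Also $\tilde f(G)\subseteq f(\alpha G)$ for every $\alpha\ge1$. For $\alpha\le 1$, the proof of \Cref{lem:homogeneous-core} gives $f(G)\subseteq f(\alpha G)$. So it suffices to find $z\in\mathbb R^{\{j,k\}}_{>0}$ with $\uni(A)\in\tilde f(G^{j,k,z})$. Positive homogeneity of $\tilde f$ then gives the claim for every $\alpha>0$. From now on I write $f$ for $\tilde f$.

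\textbf{Symmetric solutions.} Let $\pi$ swap the labels $1\leftrightarrow 2$ for both $j$ and $k$. Then $G^{j,k,z}\circ\pi=G^{j,k,z}$. By totality pick $p\in f(G^{j,k,z})$. By equivariance $p\circ\pi\in f(G^{j,k,z})$, and by convex-valuedness $q=\frac12(p+p\circ\pi)\in f(G^{j,k,z})$. Such a $q$ satisfies $q(1,1)=q(2,2)=x$ and $q(1,2)=q(2,1)=\frac12-x$ for some $x\in[0,\frac12]$. Call it $q_x$; the dummy coordinates are suppressed. Note that $q_x$ is a product distribution iff $x=\frac14$, and $q_{1/4}=\uni(A)$.

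For $t\in[0,1]$ put $z^t=(t,1-t)$ and
\[
X(t)=\{x\in[0,\tfrac12]\colon q_x\in f(G^{j,k,z^t})\}.
\]
Each $X(t)$ is nonempty by the argument above, convex because $f$ is convex-valued, and closed by continuity. The correspondence $t\mapsto X(t)$ is upper hemicontinuous, since $G^{j,k,z^t}$ depends continuously on $t$.

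\textbf{Endpoints.} At $t=0$, player $j$ is indifferent, and $(1,1)$ and $(2,2)$ are pure Nash equilibria. By \Cref{prop:pure-nash} and convexity, $\frac12\in X(0)$. Symmetrically, the anti-diagonal equilibria give $0\in X(1)$.

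\textbf{Connectedness.} An upper hemicontinuous correspondence on $[0,1]$ with nonempty compact interval values has a connected graph. Hence the graph meets the line $x=\frac14$ at some $t^*$, i.e.\ $\uni(A)\in f(G^{j,k,z^{t^*}})$. If $t^*\in(0,1)$ we take $z=z^{t^*}$, which is strictly positive, and we are done.

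\textbf{Main obstacle: forcing $t^*$ into the interior.} Suppose no interior $t$ has $\frac14\in X(t)$. Then on $(0,1)$ every interval $X(t)$ lies strictly on one side of $\frac14$. By upper hemicontinuity the two sides form a partition of $(0,1)$ into relatively open sets, so one side is all of $(0,1)$. Say $X(t)\subseteq(\frac14,\frac12]$ throughout $(0,1)$; the other case is symmetric.

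Letting $t\to1$ puts a point $\ge\frac14$ into $X(1)$. Since $0\in X(1)$ and $X(1)$ is an interval, $\uni(A)\in f(G^{j,k,(1,0)})$.

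I would then try to transport this to interior stakes in two ways:
\begin{itemize}
\item Combine it, via consistency, with a game in which $\uni(A)$ is known to be selected. For instance, the game $G^{j,k,(0,1)}$ relabeled by $\pi_k$ is a matching pennies game with the roles of diagonal and anti-diagonal swapped. I would also use \Cref{lem:adding-constants} to adjust constants.
\item Derive a contradiction with \Cref{lem:brbr23a}, by showing that some non-product symmetric $q_x$ would have to be returned in a game where it is excluded.
\end{itemize}
I expect this boundary case to need the most care.
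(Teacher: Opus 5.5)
\textbf{Verdict: genuine gap.} Your reduction to $\tilde f$, the symmetrization to $q_x$, and the endpoint-plus-connectedness argument are all sound. The connectedness step is a neat substitute for the paper's first contradiction argument. But the proposal stops exactly where the real difficulty lies. In the remaining case you only reach $\uni(A)\in f(G^{j,k,(1,0)})$, where $k$'s stake is $0$, while the lemma needs $z$ strictly positive.

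Neither suggested route closes this as stated. Consistency can only propagate $\uni(A)$ through games in which $\uni(A)$ is already known to be selected. The only such games you have are $G^{j,k,(1,0)}$ and its relabelings, and every convex combination of those still gives $k$ zero stake. Your example $G^{j,k,(0,1)}\circ\pi_k$ is not known to select $\uni(A)$; that is essentially the claim to be proved. Your second route also fails as phrased: \Cref{lem:brbr23a} says nothing about non-product distributions, so it cannot exclude a non-product $q_x$ directly.

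\textbf{The missing idea} is to transport the non-uniform points $q_x$ with $x\in X(t)$, $t\in(0,1)$, instead of $\uni(A)$. This is the route the paper takes.

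\emph{Transport step.} Since $0,\frac14\in X(1)$ and $X(1)$ is convex, $[0,\frac14]\subseteq X(1)$. Hence $q_{1/2-x}\in f(G^{j,k,(1,0)})$ for every $x\in X(t)\subseteq(\frac14,\frac12]$. Relabeling $j$'s actions by $\pi_j$ turns this into $q_x\in f(G^{j,k,(1,0)}\circ\pi_j)$, a game where $j$ earns $1$ on the diagonal. Now take the mixture $\frac1{1+t}G^{j,k,z^t}+\frac{t}{1+t}\,G^{j,k,(1,0)}\circ\pi_j$. It gives $j$ a constant payoff and gives $k$ the stake $\frac{1-t}{1+t}>0$ on the diagonal. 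Consistency, \Cref{lem:adding-constants}, and homogeneity then give $q_x\in f(G^{j,k,(0,1)})$.

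\emph{Limit step.} Apply \Cref{lem:brbr23a} at $G^{j,k,(1,0)}$, after mixing with its pure equilibria. If some $x>\frac14$ were in $X(1)$, a suitable asymmetric mixture of $q_x$, $\chi_{(1,2)}$, and $\chi_{(2,1)}$ would be a product distribution in $f(G^{j,k,(1,0)})$ that is not a Nash equilibrium. Hence $X(1)\subseteq[0,\frac14]$. Upper hemicontinuity then forces $x_t\to\frac14$ for any $x_t\in X(t)$ as $t\to 1$. Closedness of $f(G^{j,k,(0,1)})$ yields $\uni(A)\in f(G^{j,k,(0,1)})$.

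\emph{Conclusion.} Consistency with $\uni(A)\in f(G^{j,k,(1,0)})$ yields $\uni(A)\in f(G^{j,k,(1/2,1/2)})$, so $z=(1,1)$ works. The mirrored case is analogous: use $\pi_k$ and the diagonal equilibria of $G^{j,k,(0,1)}$.
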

\begin{proof}
	We prove the statement for $n = 2$ and $(j,k) = (1,2)$ for convenience.
	The proof of the general case is the same up to notational changes.
	By \Cref{lem:homogeneous-core}, we may assume that $f$ is positively homogeneous.
	
	Let $G = G^{j,k}$, and let $p\in f(G)$, which exists by totality.
	Let $\pi_*\in\Pi(\{1,2\})$ be the permutation that swaps $1$ and $2$, and let $\pi = (\pi_*,\pi_*)$.
	Note that $G = G\circ \pi$.
	Hence, by equivariance, $p\circ \pi \in f(G\circ\pi) = f(G)$.
	Thus, convex-valuedness implies that $\tilde p = \frac12 p + \frac12 (p\circ\pi) \in f(G)$.
	Observe that $\tilde p(1,1) = \tilde p(2,2)$ and $\tilde p(1,2) = \tilde p(2,1)$, and so there are $s,t\in[0,1]$ such that $\tilde p = p^{s,t}$, where
	\[
	p^{s,t} = 
	\begin{pmatrix}
s & t\\
t & s
\end{pmatrix}
	\]
	Therefore, if there exist $p',p''\in f(G)$ such that $p'(1,1) + p'(2,2) \ge \frac12$ and $p''(1,1) + p''(2,2) \le \frac12$, then $\uni(A) = p^{\frac14,\frac14} \in f(G)$ by convex-valuedness.

	If $s = t$, the statement follows since $f$ is positively homogeneous. 
	Thus, assume without loss of generality that $s > t$.
	
	For $\alpha,\beta \ge 0$, let $G^{(\alpha,\beta)} = G^{j,k,(\alpha,\beta)}$.
		\[
	G^{(\alpha,\beta)}=
	\begin{pmatrix}
0,\beta & \alpha,0\\
\alpha,0 & 0,\beta
\end{pmatrix}
	\]
	We prove that there is $\beta\in[0,1]$ such that $\uni(A) \in f(G^{(1,\beta)})$.
	For each $\beta\in[0,1]$, there are $s_\beta,t_\beta\in[0,1]$ such that $p^{s_\beta,t_\beta}\in f(G^{(1,\beta)})$.
	This follows from totality, equivariance, and convex-valuedness as above.
	Assume for contradiction that for each $\beta\in[0,1]$, $\uni(A) = p^{\frac14,\frac14}\not\in f(G^{(1,\beta)})$.
	Then, for each $\beta\in[0,1]$, one of the following holds: (i) for each $p\in f(G^{(1,\beta)})$, $p(1,1) + p(2,2) > \frac12$ or (ii) for each $p\in f(G^{(1,\beta)})$, $p(1,1) + p(2,2) < \frac12$.
	By continuity and the assumption that $s > t$, (i) holds for each $\beta\in[0,1]$.
	In particular, $s_0 > t_0$.
	
	By \Cref{prop:pure-nash}, $\chi_{(1,2)},\chi_{(2,1)} \in f(G^{(1,0)})$ since these action profiles are pure Nash equilibria.
	We claim that there is a product distribution 
	$q\in f(G^{(1,0)})\setminus \nash(G^{(1,0)})$.
	If $t_0>0$, let $\lambda=4t_0\in(0,1)$ and set $q=\lambda p^{s_0,t_0}+(1-\lambda)\chi_{(1,2)}$.
	A direct calculation shows that $q$ is a product distribution. 
	Moreover, $q_2(1)<q_2(2)$, while $q_1(2)>0$, so $q_1$ is not a best response to $q_2$ in $G^{(1,0)}$.
	If $t_0=0$, then $s_0=\frac12$. Set
	\[
	q
	=
	\frac49 p^{\frac12,0}
	+\frac19\chi_{(1,2)}
	+\frac49\chi_{(2,1)}
	=
	\begin{pmatrix}
	\frac29 & \frac19\\
	\frac49 & \frac29
	\end{pmatrix}
	=
	\left(\frac13,\frac23\right)\otimes
	\left(\frac23,\frac13\right).
	\]
	Thus $q$ is a product distribution. Against 
	$q_2=(\frac23,\frac13)$, player $1$'s unique best response in $G^{(1,0)}$ is action $2$, but $q_1$ assigns positive probability to action $1$. Hence $q\notin\nash(G^{(1,0)})$.
	In both cases, convex-valuedness and 
	$\chi_{(1,2)},\chi_{(2,1)},p^{s_0,t_0}\in f(G^{(1,0)})$
	imply that $q\in f(G^{(1,0)})$, contradicting \Cref{lem:brbr23a}.
	Hence, there is $\beta^*\in[0,1]$ such that $\uni(A) \in f(G^{(1,\beta^*)})$.
	
	If $\beta^* > 0$, the statement follows from positive homogeneity.
	So assume that $\beta^* = 0$ and $\uni(A)\not\in f(G^{(1,\beta)})$ for each $\beta\in(0,1]$.
	Let $\beta \in (0,1]$ and recall that $s_\beta > t_\beta$ by continuity.
	Since $\uni(A), \chi_{(1,2)}, \chi_{(2,1)} \in f(G^{(1,0)})$, convex-valuedness implies $p^{t_\beta,s_\beta} \in f(G^{(1,0)})$ (note the transposition of $s_\beta$ and $t_\beta$). 
	Let $\pi_1,\pi_2 \in \Pi(\{1,2\})$ such that $\pi_1$ swaps $1$ and $2$ and $\pi_2$ is the identity, and let $\pi = (\pi_1,\pi_2)$.
	Then, $p^{s_\beta,t_\beta} \in f(G^{(1,0)} \circ \pi)$.
	Observe that
	\begin{align*}
		\tilde G = \frac 12 G^{(1,\beta)} + \frac12 \left(G^{(1,0)}\circ \pi\right) = 
		\frac12 \begin{pmatrix}
			0,\beta & 1,0\\
			1,0 & 0,\beta\\ 
		\end{pmatrix}
		+ \frac12
		\begin{pmatrix}
			1,0 & 0,0\\
			0,0 & 1,0\\
		\end{pmatrix}
		=
		\frac12 
		\begin{pmatrix}
			1,\beta & 1,0 \\
			1,0 & 1,\beta\\	
		\end{pmatrix}
	\end{align*}
	Consistency implies that $p^{s_\beta,t_\beta} \in f(\tilde G)$.
	Then, it follows from \Cref{lem:adding-constants} and positive homogeneity that $p^{s_\beta,t_\beta} \in f(G^{(0,1)})$.
	Note that, if $p^{s_\beta,t_\beta}$ has a convergent subsequence that does not converge to $\uni(A)$ as $\beta$ converges to $0$, we obtain a contradiction to \Cref{lem:brbr23a} as above.
	Thus, it follows from continuity that $\uni(A) \in f(G^{(0,1)})$.
	Consistency and positive homogeneity then imply that $\uni(A) \in f(G^{(1,1)})$, which suffices by positive homogeneity.
	This finishes the proof.
\end{proof}

A matrix is deterministic if each entry is either $0$ or $1$, and it is bistochastic if all its entries are nonnegative and all row and column sums are $1$.
\begin{definition}
	[Deterministic and bistochastic matrices]
	Let $A = A_1\times A_2$, and let $T\colon A\to \mathbb R_{\ge 0}$.
	Then, $T$ is deterministic if $T(a) \in \{0,1\}$ for each $a\in A$, and $T$ is bistochastic if for each $i\in \{1,2\}$ and each $a_{-i}\in A_{-i}$, $\sum_{a_i \in A_i} T(a_i,a_{-i}) = 1$.
\end{definition}

Starting from a matching pennies game, we construct more games where the uniform distribution is a Nash equilibrium and $f$ returns this correlated strategy, by blowing up, permuting actions, and taking convex combinations.
The next two lemmas show that one can use these operations to reach any bistochastic matrix as the utility function of the first player up to a positive affine transformation.

\begin{lemma}
	[Building bistochastic matrices]
	\label{lem:building-fully-stochastic-matrices}
	Let $k,m\in\mathbb N$ such that $k \ge 2$ and $m$ is a multiple of $k$, and let $T\colon[m]^2\to\mathbb R_{\ge 0}$ be bistochastic. 
	Let $\tilde T\colon[m]^2\to\mathbb R_{\ge 0}$ be obtained from a deterministic bistochastic matrix on $[k]^2$ by replacing, for each $i\in\{1,2\}$, each $a_i\in[k]$ by $m/k$ clones.
	Then, there are $\lambda\in\Delta(\Pi([m])^2)$, $\alpha > 0$, and $\beta\in\mathbb R$ such that $\alpha T + \beta = \sum_{\pi\in \Pi([m])^2} \lambda_{\pi} \tilde T \circ \pi$.
\end{lemma}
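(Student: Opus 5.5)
The plan is to reduce to permutation matrices via Birkhoff's theorem, and then to represent a single permutation matrix explicitly by symmetrizing $\tilde T$.

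\emph{Reduction to permutation matrices.} Since $T$ is nonnegative with all row and column sums equal to $1$, the Birkhoff--von Neumann theorem writes $T=\sum_{\sigma\in\Pi([m])}\mu_\sigma P_\sigma$ with $\mu\in\Delta(\Pi([m]))$. Here $P_\sigma(a_1,a_2)=1$ if $a_2=\sigma(a_1)$ and $0$ otherwise. Suppose we find constants $\alpha>0$ and $\beta\in\mathbb R$, \emph{independent of $\sigma$}, and for each $\sigma$ some $\lambda^\sigma\in\Delta(\Pi([m])^2)$ with $\alpha P_\sigma+\beta=\sum_\pi\lambda^\sigma_\pi\,\tilde T\circ\pi$. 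Then $\lambda=\sum_\sigma\mu_\sigma\lambda^\sigma$ gives
\[
\alpha T+\beta=\sum_\pi \lambda_\pi\, \tilde T\circ\pi.
\]
This uses only that $\sum_\sigma\mu_\sigma=1$. So the whole task is a uniform representation of permutation matrices.

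\emph{Normalizing $\tilde T$ and reducing to the identity.} A deterministic bistochastic matrix on $[k]^2$ is a permutation matrix. Relabeling the columns by a permutation, which is absorbed into $\pi$, we may assume $\tilde T$ is block diagonal. That is, there is a partition of $[m]$ into $k$ blocks of size $r=m/k$ such that $\tilde T(a_1,a_2)=1$ if $a_1,a_2$ lie in the same block and $0$ otherwise. Next, $P_\sigma=I\circ(\mathrm{id},\sigma^{-1})$, where $I$ is the identity matrix. A representation $\alpha I+\beta=\sum_\pi\lambda_\pi\tilde T\circ\pi$ therefore transfers to $P_\sigma$ with the same $\alpha,\beta$, by composing every $\pi$ with $(\mathrm{id},\sigma^{-1})$. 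So it suffices to treat $T=I$.

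\emph{Key computation.} Take $\lambda$ uniform on the diagonal permutations $\{(\rho,\rho)\colon\rho\in\Pi([m])\}$ and set $M=\frac1{m!}\sum_\rho\tilde T\circ(\rho,\rho)$. Every diagonal entry of $\tilde T$ equals $1$, so $M(a,a)=1$. For $a_1\neq a_2$, the entry $M(a_1,a_2)$ is the probability that a uniformly random injection places two distinct points in the same block, namely
\[
c=\frac{r-1}{m-1}.
\]
This does not depend on $(a_1,a_2)$. Hence $M=(1-c)I+c$, where $+c$ means adding $c$ to every entry. Since $k\ge2$, we have $r<m$, so $c<1$; thus $\alpha=1-c>0$ and $\beta=c$ work. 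When $m=k$, we get $c=0$ and $\tilde T$ is already a permutation matrix, which is consistent.

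The only step needing care is making $\alpha$ and $\beta$ independent of $\sigma$ so that the Birkhoff combination goes through. The symmetrization over diagonal permutations delivers exactly this, because $c$ depends only on $m$ and $k$.
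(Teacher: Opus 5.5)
Your proof is correct and follows essentially the same route as the paper. Both arguments take $T$ to be the identity and $\tilde T$ block diagonal, average relabelings of $\tilde T$ to get $1$ on the diagonal and $\frac{m/k-1}{m-1}$ off it, and then pass to general $T$ via Birkhoff--von Neumann, which works because $\alpha,\beta$ depend only on $m$ and $k$. The one difference is the averaging set: the paper averages over all pairs $\pi$ for which every diagonal entry of $\tilde T\circ\pi$ equals $1$, while you average over the diagonal pairs $(\rho,\rho)$. These differ only by within-block permutations, under which $\tilde T$ is invariant, so the two averages coincide.
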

\begin{proof}
	First, consider the case that $T$ is deterministic (i.e., a permutation matrix).
	Without loss of generality, $T$ is the identity matrix, and $\tilde T$ is obtained by replacing each $1$ in the $k\times k$ identity matrix by a block of $1$'s of size $m/k\times m/k$.
	Let $\tilde\lambda\in\mathbb R_{\ge 0}^{\Pi([m])^2}$ be a vector that assigns a weight to each pair of permutations of $[m]$ such that $\tilde\lambda_\pi = 1$ if $\supp(T) \subseteq\supp(\tilde T\circ \pi)$ and $\tilde\lambda_\pi = 0$ otherwise.
	That is, $\tilde\lambda_\pi= 1$ if and only if $\pi = (\pi_1,\pi_1\circ\pi_2)$ for some $\pi_1,\pi_2 \in \Pi([m])$ such that $\pi_2$ fixes each of the sets $\{1,\dots,m/k\},\dots,\{m-m/k+1,\dots,m\}$.
	Note that $|\tilde\lambda| = m!((m/k)!)^k$, and let $\lambda = \tilde\lambda/|\tilde\lambda|$.
	Let $\hat T = \sum_{\pi} \lambda_\pi \tilde T\circ \pi$.
	For each $a\in [m]^2$, $\hat T(a) = 1$ if $T(a) = 1$, and $\hat T(a) = \frac{m/k-1}{m-1}$ if $T(a) = 0$.
	Thus, letting $\alpha = 1 - \frac{m/k-1}{m-1}$ and $\beta = \frac{m/k-1}{m-1}$ yields the required expression.
	The case when $T$ is not necessarily deterministic follows from the fact that every bistochastic matrix is a convex combination of permutation matrices, and that convex combinations of affine transformations of a matrix are again affine transformations of that matrix.
\end{proof}

The case of \Cref{lem:building-fully-stochastic-matrices} where $k = m$ and $\beta$ is required to be $0$ (and thus $\alpha = 1$) is the Birkhoff-von Neumann theorem.
Thus, \Cref{lem:building-fully-stochastic-matrices} can be viewed as a variant of that theorem, where the basic matrices are blow-ups of smaller permutation matrices rather than honest permutation matrices.

Using \Cref{lem:building-fully-stochastic-matrices}, we show that for any bistochastic matrix $T$, there exists an essentially two-player zero-sum game in which the utility function of one non-dummy player is $T$ and the uniform distribution over all action profiles is a Nash equilibrium and is returned by $f$.

\begin{lemma}
	[Uniform equilibria of essentially two-player zero-sum games I]
	\label{lem:fully-stochastic-games}
	Let $f$ be a total, continuous, and convex-valued solution concept satisfying consistency and either
	\begin{enumerate}[label=(\roman*),ref=(\roman*)]
		\item strong consequentialism and weak rationality, or
		\item consequentialism and rationality.
	\end{enumerate}
	Let $m,n\in\mathbb N$ and let $j,k\in N$ be distinct.
	Then, there is $\gamma > 0$ such that for each bistochastic $T\colon [m]^{\{j,k\}}\to\mathbb R_{\ge 0}$, there exists a game $G$ on $A = [m]^{\{j,k\}} \times \{0\}^{N\setm \{j,k\}}$ such that for each $a_{-j,k}\in A_{-j,k}$, $G_j(\cdot,a_{-j,k}) = T$, $G_k(\cdot,a_{-j,k}) = -\gamma T$, for each $i\neq j,k$, $G_i\equiv 0$, and $\uni(A) \in f(G)\cap \nash(G)$.
\end{lemma}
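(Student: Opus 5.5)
The plan is to start from the matching pennies game of \Cref{lem:two-player-matching-pennies} and reach the desired game by blowing up, permuting actions and averaging. \Cref{lem:building-fully-stochastic-matrices} supplies the combinatorics, and consistency and consequentialism carry the membership $\uni\in f$ along.

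\textbf{Reductions and choice of $\gamma$.} By \Cref{lem:homogeneous-core}, I replace $f$ by $\tilde f\subseteq f$, which is positively homogeneous and inherits all the axioms. It suffices to show $\uni(A)\in\tilde f(G)$, and \Cref{lem:adding-constants} is then available for $\tilde f$. Let $z=(z_j,z_k)$ be given by \Cref{lem:two-player-matching-pennies}, and set $\gamma=z_k/z_j$. In $G^{j,k,z}$ we have $G_j=z_j P$, where $P$ is the anti-diagonal $2\times 2$ permutation matrix, and $G_k=z_k-\gamma G_j$. So the matching pennies game is already $(j,k,\gamma)$-zero-sum with player $j$'s payoff a deterministic bistochastic matrix.

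\textbf{Building the game.} Fix bistochastic $T$ on $[m]^{\{j,k\}}$; the case $m=1$ is trivial. To avoid parity issues, work on $[2m]^2$. Let $T'$ be the matrix on $[2m]^2$ obtained from $T$ by replacing each action of $j$ and of $k$ by two clones, and scaling by $\frac12$; then $T'$ is bistochastic. Let $M$ be the blow-up of $G^{j,k,z}$ in which each action of $j$ and $k$ gets $m$ clones, so $M_j=z_j\tilde P$ with $\tilde P$ exactly the matrix $\tilde T$ of \Cref{lem:building-fully-stochastic-matrices} for $k=2$ and $2m$.

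The uniform distribution on the blown-up action set is clone-symmetric and pushes forward to $\uni$, so consequentialism gives $\uni\in\tilde f(M)$. The uniform distribution is invariant under all permutations, so equivariance gives $\uni\in\tilde f(M\circ\pi)$ for every $\pi\in\Pi([2m])^2$. \Cref{lem:building-fully-stochastic-matrices} yields $\lambda,\alpha>0,\beta$ with $\sum_\pi\lambda_\pi\tilde P\circ\pi=\alpha T'+\beta$. Repeated consistency then gives $\uni\in\tilde f(\bar M)$ for $\bar M=\sum_\pi\lambda_\pi M\circ\pi$. Since $G_k=z_k-\gamma G_j$ is affine in $G_j$, averaging preserves this relation, so
\[
\bar M_j=z_j\alpha T'+z_j\beta, \qquad \bar M_k=z_k-\gamma\bar M_j .
\]

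\textbf{Normalizing and collapsing clones.} Subtract the constants using \Cref{lem:adding-constants}, and scale both players by $2/(z_j\alpha)$ using positive homogeneity. This gives a game with $j$'s payoff $2T'$ and $k$'s payoff $-\gamma\, 2T'$, and $2T'$ is exactly the two-fold blow-up of $T$. The uniform distribution is clone-symmetric for this blow-up, so consequentialism transfers $\uni$ to the game $G$ on $[m]^{\{j,k\}}\times\{0\}^{N\setminus\{j,k\}}$ with $G_j=T$, $G_k=-\gamma T$, and $G_i\equiv 0$ for the dummies. Finally, $\uni\in\nash(G)$: since $T$ is bistochastic, every row and column sum of $T$ equals $1$. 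Hence against the uniform opponent strategy each of $j$'s actions yields $1/m$ and each of $k$'s actions yields $-\gamma/m$, so all actions are best responses.

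\textbf{Main obstacle.} The delicate points are bookkeeping rather than conceptual. First, $\gamma$ must be independent of $T$ and $m$; this holds because it depends only on the fixed $z$. Second, \Cref{lem:building-fully-stochastic-matrices} needs the dimension to be a multiple of $2$, which is why I pass through the $2m$ blow-up. Third, I must check that the transformations (averaging, adding constants, scaling) act compatibly on both players, so that $k$'s payoff remains exactly $-\gamma$ times $j$'s payoff up to a constant.
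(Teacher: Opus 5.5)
Your proposal is correct and follows essentially the same route as the paper: you blow up the matching pennies game from \Cref{lem:two-player-matching-pennies}, average permuted copies via \Cref{lem:building-fully-stochastic-matrices} using equivariance and consistency, and then remove constants and rescale with \Cref{lem:adding-constants} and positive homogeneity. The differences are only bookkeeping: you take $\gamma=z_k/z_j$ instead of normalizing $z_j=1$, and you spell out the paper's ``replace $m$ by $2m$'' step by working with the halved two-fold blow-up $T'$ and collapsing the clones by consequentialism at the end.
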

\begin{proof}
	We prove the statement assuming $N = \{j,k\}$.
	The proof of the general case, i.e., with $n-2$ dummy players, is the same, albeit more notationally heavy.
	Replacing $m$ by $2m$ and using consequentialism, we may assume that $2$ divides $m$.
	Moreover, by \Cref{lem:homogeneous-core}, we may assume that $f$ is positively homogeneous.
	
	By \Cref{lem:two-player-matching-pennies}, there is $z\in\mathbb R_{>0}^{\{j,k\}}$ such that $\uni(A) \in f(G^{j,k,\alpha z})$ for each $\alpha > 0$.
	It follows that there is $\gamma > 0$ such that for $z = (1,\gamma)$, $\uni(A) \in f(G^{j,k,z})$.
	We write $G^z$ instead of $G^{j,k,z}$ for short.
	Let $\tilde G^z$ be the game on $A$ obtained from $G^z$ by replacing each action in $A_j$ and $A_k$ by $m/2$ clones.
	Consequentialism implies that $\uni(A) \in f(\tilde G^z)$.
	Since $G_j^z\colon A\to\mathbb R$ is deterministic and bistochastic, it follows from \Cref{lem:building-fully-stochastic-matrices} that there are $\alpha > 0$, $\beta\in\mathbb R$, and $\lambda\in\Delta(\Pi([m])^{\{j,k\}})$ such that
	\[
	\alpha T + \beta = \sum_{\pi\in \Pi([m])^{\{j,k\}}} \lambda_{\pi}\, (\tilde G_j^z \circ \pi).
	\]
	Let $\bar G = \sum_{\pi\in \Pi([m])^{\{j,k\}}} \lambda_{\pi}\, (\tilde G^z \circ \pi)$.
	Equivariance and consistency imply that $\uni(A) \in f(\bar G)$.
	Letting $G^\beta$ be the game with $G^\beta_j\equiv \beta$ and $G^\beta_k \equiv \gamma(1-\beta)$, and letting $G = \frac1\alpha (\bar G - G^\beta)$, we have that $G_j = T$ and $G_k = -\gamma T$.
	Moreover, it follows from \Cref{lem:adding-constants} and positive homogeneity that $\uni(A) \in f(G)$, and since $T$ is bistochastic, $\uni(A) \in \nash(G)$.
\end{proof}

We prove a weaker version of \Cref{prop:zero-sum-two-player}: any total, continuous, and convex-valued solution concept that satisfies consistency and either (i) strong consequentialism and weak rationality or (ii) consequentialism and rationality returns the uniform distribution over all action profiles in any essentially two-player zero-sum game in which this correlated strategy is a Nash equilibrium, assuming the payoffs of the two active players have a fixed ratio.

\begin{lemma}[Uniform equilibria of essentially two-player zero-sum games II]\label{lem:essentially-two-player-uniform}
	Let $f$ be a total, continuous, and convex-valued solution concept satisfying consistency and either
	\begin{enumerate}[label=(\roman*),ref=(\roman*)]
		\item strong consequentialism and weak rationality, or
		\item consequentialism and rationality.
	\end{enumerate}
	Let $j,k\in N$ be distinct, let $\gamma>0$ be as promised by \Cref{lem:fully-stochastic-games}, and let $G$ be a $(j,k,\gamma)$-zero-sum game on $A$ with $\uni(A) \in \nash(G)$. 
	Then, $\uni(A) \in f(G)$.
\end{lemma}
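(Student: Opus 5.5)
The plan is to reduce $G$, by cloning, relabeling, scaling, and adding constants, to a game of the form produced by \Cref{lem:fully-stochastic-games}.

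By \Cref{lem:homogeneous-core}, I replace $f$ by $\tilde f$. Since $\tilde f(G)\subseteq f(G)$, it suffices to prove the claim for $\tilde f$, so I may assume that $f$ is positively homogeneous. Then \Cref{lem:adding-constants} lets me add constants to each player's utility freely.

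\textbf{Step 1: translate the equilibrium condition.} Write $M=G_j(\cdot,a_{-j,k})\colon A_j\times A_k\to\mathbb R$. This is independent of $a_{-j,k}$, because the other players are dummies, so all their actions are clones. Since $\gamma G_j+G_k$ is constant, $G_k = c-\gamma M$ for some $c\in\mathbb R$. The condition $\uni(A)\in\nash(G)$ says that each of $j$ and $k$ is indifferent among all their actions against the uniform distribution of the opponent. Hence all row sums of $M$ are equal and all column sums of $M$ are equal.

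\textbf{Step 2: remove the dummies and equalize the action sets.} For each dummy player, all actions are clones, so $G$ is a blow-up of the game $G'$ in which each dummy has a single action. $\uni(A)$ is a product distribution, hence clone-symmetric, and it pushes forward to the uniform distribution on the reduced profile set. By consequentialism, it therefore suffices to show that the uniform distribution is returned in $G'$.

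Next, let $m=|A_j|\cdot|A_k|$. I blow up $G'$ by replacing each action of $j$ with $|A_k|$ clones and each action of $k$ with $|A_j|$ clones. The uniform distribution on the blow-up is again a clone-symmetric product distribution, and it pushes forward to the uniform distribution on $G'$. By consequentialism, it suffices to prove the claim for the blown-up game. After relabeling (equivariance), its matrix $M'\colon[m]^2\to\mathbb R$ still has constant row sums and constant column sums. Since it is square, the total sum forces these two constants to coincide; call the common value $r$.

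\textbf{Step 3: normalize $M'$ to a bistochastic matrix and conclude.} If $M'$ is constant, then all actions are clones in the game. The argument in the proof of \Cref{lem:homogeneous-core} then gives that $f$ returns all of $\Delta$, and we are done.

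Otherwise, choose $d\ge 0$ with $M'+d\ge 0$ entrywise, and set $s = r+md$, which is positive. Then $T=(M'+d)/s$ is bistochastic. \Cref{lem:fully-stochastic-games}, with the fixed $\gamma$, yields a game $H$ with $H_j=T$, $H_k=-\gamma T$, and $\uni\in f(H)$. Our game is obtained from $H$ as follows:
\begin{itemize}
\item scaling both players' utilities by $s>0$, which preserves $\uni$ in the solution set by positive homogeneity;
\item adding the constant $-d$ to $j$'s utility and the constant $c'+\gamma d$ to $k$'s utility, for the appropriate $c'$, which preserves it by \Cref{lem:adding-constants}.
\end{itemize}
After these operations, $j$'s utility is $M'$ and $k$'s utility is $c'-\gamma M'$, as required. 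Hence $\uni\in f$ of our game.

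\textbf{Main obstacle.} I expect the main subtlety to be the requirement that the ratio $\gamma$ is exactly the one fixed by \Cref{lem:fully-stochastic-games}. A common positive scaling preserves this ratio, which is why the $(j,k,\gamma)$ hypothesis is essential. A second point to check is that the dummies' constant payoffs, and the cloning in Step 2, are handled by consequentialism without disturbing uniformity; this holds because $\uni$ is a product distribution.
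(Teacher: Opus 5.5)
Your proposal is correct and follows essentially the same route as the paper. You pass to the positively homogeneous core, equalize $|A_j|$ and $|A_k|$ by cloning, read off constant row and column sums from $\uni(A)\in\nash(G)$, and handle the constant case via clones; otherwise you shift and rescale $G_j$ to a bistochastic matrix, so that \Cref{lem:fully-stochastic-games} with its fixed $\gamma$ gives $G$ up to a common positive scaling and added constants. One slip: the dummies' constant payoffs have to be matched to $0$ using \Cref{lem:adding-constants}, which lets you add a constant to every player at once, not using consequentialism. This fix is immediate.
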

\begin{proof}
	To simplify notation, we assume that $N = \{j,k\}$; the proof extends straightforwardly to the general situation.
	We may assume that $f$ is positively homogeneous by \Cref{lem:homogeneous-core} and that $|A_j| = |A_k|$  by consequentialism.
	Moreover, we may assume that $G_j(a) \ge 0$ for each $a \in A$ and that $\gamma G_j + G_k\equiv 0$ by \Cref{lem:adding-constants}.

	Observe that all row sums of $G_j$ are equal and all its column sums are equal since $\uni(A) \in \nash(G)$.
	That is, $\sum_{a_k \in A_k} G_j(a_j,a_k)$ is independent of $a_j\in A_j$, and $\sum_{a_j \in A_j} G_k(a_j,a_k) = - \gamma\sum_{a_j \in A_j} G_j(a_j,a_k)$ is constant on $A_k$.
	Hence, $G_j$ is a nonnegative multiple of a bistochastic matrix.
	If $G_j\equiv 0$, then also $G_k\equiv 0$, so that all actions in $A_j$ are clones and all actions in $A_k$ are clones.
	Thus, $\uni(A)\in f(G)$ follows from consequentialism.
	If $G_j\not\equiv 0$, by positive homogeneity, we may assume that $G_j$ is bistochastic.
	It follows from \Cref{lem:fully-stochastic-games} that there is a $G'$ on $A$ with $G_j' = G_j$, $G_k' = -\gamma G_j'$, and $\uni(A) \in f(G')\cap \nash(G')$.
	But then $G = G'$, and $\uni(A) \in f(G)$ follows.
\end{proof}

\subsection{Reduction to uniform full support Nash equilibria}\label{sec:fsubsetnash}

\Cref{lem:essentially-two-player-uniform} shows that the statement of \Cref{prop:zero-sum-two-player} holds for the uniform distribution over all action profiles. 
We extend it to arbitrary Nash equilibria in two steps: first, to rational-valued Nash equilibria that do not necessarily have full support; second, to all Nash equilibria.

\begin{lemma}[Reduction to uniform Nash equilibria]
	\label{lem:full-support-to-non-full-support}
	Let $f$ be a total, continuous, and convex-valued solution concept satisfying consistency and either
	\begin{enumerate}[label=(\roman*),ref=(\roman*)]
		\item strong consequentialism and weak rationality, or
		\item consequentialism and rationality.
	\end{enumerate}
	Let $j,k\in N$ be distinct, and let $\gamma > 0$.
	Assume that for each action profile set $A\in\mathcal F(A)^n$ and each $(j,k,\gamma)$-zero-sum game $G$ on $A$ with $\uni(A)\in\nash(G)$, $\uni(A) \in f(G)$.
		Then, for each action profile set $A$ and each $(j,k,\gamma)$-zero-sum game $G$ on $A$, $\nashq(G)\subseteq f(G)$.
\end{lemma}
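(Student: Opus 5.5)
The plan is to reduce an arbitrary rational-valued Nash equilibrium to the uniform full-support case covered by the hypothesis, by blowing up actions and then recovering the original game via consistency and continuity. By \Cref{lem:homogeneous-core} we may assume $f$ is positively homogeneous. By \Cref{lem:adding-constants} we may normalize $\gamma G_j+G_k\equiv 0$. As in the preceding lemmas, we treat only the case $N=\{j,k\}$; dummies are only notational.

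Fix $p=p_j\otimes p_k\in\nashq(G)$ with supports $S_j\subseteq A_j$, $S_k\subseteq A_k$, and let $v$ be the value of $G$ for $j$.

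\textbf{Step 1 (full support).} Assume first $S_j=A_j$ and $S_k=A_k$. Write $p_j(a_j)=r_{a_j}/d$ and $p_k(a_k)=s_{a_k}/d$ with integers $r_{a_j},s_{a_k}\ge 1$ and a common denominator $d$. Let $\hat G$ be the blow-up that replaces each $a_j$ by $r_{a_j}$ clones and each $a_k$ by $s_{a_k}$ clones. The uniform distribution $\uni(\hat A)$ is a product distribution, hence clone-symmetric, and $\phi_*(\uni(\hat A))=p$. It is a Nash equilibrium of $\hat G$ because every clone of a supported action earns the same payoff against the (pushed-forward) opponent strategy. Also, $\hat G$ is still $(j,k,\gamma)$-zero-sum. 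The hypothesis gives $\uni(\hat A)\in f(\hat G)$, and consequentialism gives $p\in f(G)$.

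\textbf{Step 2 (off-support actions, main obstacle).} Now let $p$ be an arbitrary rational equilibrium. I first build an auxiliary game $K$ on $A$:
\begin{itemize}
\item $K=G$ on $S_j\times S_k$;
\item each off-support row $b_j\notin S_j$ is replaced by the $p_j$-average of the supported rows;
\item each off-support column $b_k\notin S_k$ is replaced by the $p_k$-average of the supported columns;
\item the entries at $(b_j,b_k)$ both off-support equal $v$.
\end{itemize}
The corresponding entries for $k$ are $-\gamma$ times these, so $K$ remains $(j,k,\gamma)$-zero-sum. Then $p$ is a Nash equilibrium of $K$, and every off-support action is exactly indifferent against the opponent's equilibrium strategy.

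For $M\in\mathbb N$, consider the full-support product distribution $p^M$ that gives the supported actions weights proportional to $M r_{a_j}$ (resp.\ $M s_{a_k}$) and each off-support action weight proportional to $1$. Because off-support rows and columns are averages of supported ones, $p^M$ should be a rational full-support Nash equilibrium of $K$; this is the calculation I would verify. Step 1 then gives $p^M\in f(K)$. Since $p^M\to p$ as $M\to\infty$, continuity (upper hemicontinuity with the game held fixed) gives $p\in f(K)$.

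\textbf{Step 3 (from $K$ back to $G$).} The games $G$ and $K$ agree on $S_j\times S_k$. On off-support rows and columns, $G$ is weakly worse for the deviating player against $p_{-i}$, by the Nash condition in $G$. The goal is to write $G$ as a convex combination of games in which $p$ is already known to be returned.

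The route I would try first is induction on $|A_j\setminus S_j|+|A_k\setminus S_k|$. Pick one off-support action, say $b_j$, and compare $G$ with the game $G'$ that differs from $G$ only in row $b_j$, which is replaced by the corresponding row of $K$. Then $G$ is an affine combination of $G'$ and a third game in which row $b_j$ is shifted further downward. The plan is to absorb that downward shift using \Cref{lem:iterated-dominant}-style dominance arguments together with equivariance averages over permutations of the off-support action. The intended argument is that lowering a row that carries zero probability preserves membership of $p$. I would establish this monotonicity claim by a decomposition of the same kind as \Cref{lem:decomposition-into-dominant-solvable}, combined with continuity.

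This monotonicity-in-off-support-payoffs step is the part I expect to be hardest. The averaging construction for $K$ in Step 2 is mostly a calculation by comparison.
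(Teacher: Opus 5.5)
\textbf{Steps 1 and 2 are correct; Step 3 has a genuine gap.} In fact $K$ is simply $G$ on $S_j\times S_k$ and the constant $v$ everywhere else, because $G_j(a_j,p_k)=v$ for $a_j\in S_j$ and $G_j(p_j,a_k)=v$ for $a_k\in S_k$. So every action earns $v$ against $p^M$, and closedness of $f(K)$ gives $p\in f(K)$.

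Step 3 carries essentially all of the lemma's content, and your plan does not reach it. The difference $G-K$ is not a downward shift of off-support rows. On a row $b_j\notin S_j$ it equals $G_j(b_j,\cdot)-v$, which is nonpositive only in $p_k$-expectation and can have either sign pointwise. The off-support columns are likewise changed only in $p_j$-expectation. Moreover, $G-K$ is again a $(j,k,\gamma)$-zero-sum game in which $p$ is a Nash equilibrium with the same deficient supports. So writing $G=K+(G-K)$, or row by row $G=\frac12G'+\frac12G''$, just reproduces the original problem for $G''$. Iterating never terminates, and continuity cannot help because the games $G''$ diverge. The tools you name cannot close the step either: \Cref{lem:support} and \Cref{lem:iterated-dominant} only exclude profiles from supports, and \Cref{lem:decomposition-into-dominant-solvable} only yields pure profiles, so none of them can force a mixed $p$ into $f$. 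Your ``monotonicity claim'' is, in substance, the lemma itself.

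\textbf{The missing idea} is to split the off-support modification into an own-action part and a mean-zero part, and to handle the latter by a construction that actually puts probability on the off-support action.

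For the own-action part, let $\tilde G$ give $j$ the payoff $\alpha_{a_j}=G_j(a_j,p_k)$, depending only on $a_j$, and give $k$ the payoff $-\gamma\alpha_{a_j}$. All of $k$'s actions are clones, all maximizers of $\alpha$ are clones, and $p_j$ is supported on these maximizers. Hence $p\in f(\tilde G)$ by consequentialism and \Cref{prop:pure-nash}. In $G-\tilde G$, every action of $j$ is indifferent against $p_k$.

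The paper then uses a reflection. For each $a_j\notin S_j$ and a fixed $b_j\in S_j$, it adds a clone $\tilde a_j$ of $a_j$ and an action $b_j^{a_j}$ with payoffs $2G(b_j,\cdot)-G(a_j,\cdot)$. Weighting so that $\tilde p_j(a_j)+\tilde p_j(\tilde a_j)=\tilde p_j(b_j^{a_j})$ makes this mass act like mass on $b_j$ for player $k$. Together with the indifference, this gives a rational equilibrium in which $j$ has full support, so induction on the number of players without full support applies. Next, move the mass of $a_j$ onto $\tilde a_j$ and average the game with its image under the swap $\tilde a_j\leftrightarrow b_j^{a_j}$. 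Both new actions become clones of $b_j$, and consequentialism returns $p$ in $G$.

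An alternative closer to your Step 2 also works. Subtract $\tilde G$ and its analogue for $k$ (in which $j$ receives $G_j(p_j,a_k)$ and $k$ receives $-\gamma G_j(p_j,a_k)$). In the remaining game $\bar G$, every action of each player is a best response to $p$. Now perturb the game as well as the strategy: add vanishing row and column constants as in \Cref{lem:rational-nash-to-all-nash}, so that full-support rational $p^\ell\to p$ are equilibria of $(j,k,\gamma)$-zero-sum games $G^\ell\to\bar G$. Your Step 1 and upper hemicontinuity then give $p\in f(\bar G)$, and consistency with positive homogeneity finishes. Holding the game fixed is what confined your continuity argument to the special game $K$.
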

\begin{proof}
	Assume that $N = \{j,k\}$; the general case with $n-2$ dummy players follows similarly.
	By \Cref{lem:homogeneous-core}, we may assume that $f$ is positively homogeneous.

	First, observe that for any $(j,k,\gamma)$-zero-sum game $G$ and any $p\in \nashq( G)\cap \nashfs( G)$, it follows directly from consequentialism that $p \in f(G)$.
	Indeed, by replacing each action $a_i$ of each player $i$ by a number of clones proportional to $p_i(a_i)$ and using consequentialism, one can reduce to the case where $p$ is the uniform distribution, which is covered by assumption.
	The remainder of the proof removes the assumption that $p$ has full support.
	
	Let $A$ be an action profile set and let $G$ be a $(j,k,\gamma)$-zero-sum game on $A$ and $p\in\nashq( G)$.
	Recall that $p_j$ and $p_k$ are the players' marginals, so that $p = p_j\otimes p_k$.
	Let $\tilde N = \{i\in N\colon \supp(p_i) \subsetneq A_i\}$ be the set of players whose strategy does not have full support, and let $\tilde n = |\tilde N|$.
	We prove that $p\in f(G)$ by induction (with at most two steps).
	The base case $\tilde n = 0$ holds by assumption. 
	Now assume that $\tilde n > 0$ and the statement holds for all smaller values of $\tilde n$.
	Assume without loss of generality that $j\in \tilde N$. 
\setcounter{step}{0}
\begin{step}[$G_j(a_j,p_k) = 0$ for each $a_j\in A_j$]
	We first assume $G_j(\cdot,p_k) \equiv 0$, i.e., player $j$'s expected payoff given $p$ is $0$ for each action.
	Let $b_j\in \supp(p_j)$, and let $\tilde G$ be the game on $\tilde A$, where $\tilde A_j = A_j\cup \bigcup_{a_j\in A_j\setm \supp(p_j)}\{\tilde a_j,b_j^{a_j}\}$ for distinct $\tilde a_j,b_j^{a_j}\in U\setminus A_j$, $\tilde A_k = A_k$, and for each $i\in \{j,k\}$ and $c\in A$,
	\begin{align*}
		\tilde G_i(c) = 
		\begin{cases}
			G_i(c) \quad&\text{if } c_j \in A_j\\
			G_i(a_j,c_{-j}) &\text{if } c_j = \tilde a_j \text{ for some } a_j\in A_j\setm\supp(p_j)\\
			2G_i(b_j,c_{-j}) - G_i(a_j,c_{-j}) &\text{if } c_j = b_j^{a_j} \text{ for some } a_j\in A_j\setm\supp(p_j)
		\end{cases}
	\end{align*}
		That is, $\tilde G$ is obtained from $G$ by adding, for each action $a_j \in A_j\setm \supp(p_j)$, a clone $\tilde a_j$ of $a_j$ and a new action $b_j^{a_j}$ that corresponds to a combination of $b_j$ and $a_j$.
	Note that $\tilde G$ is a $(j,k,\gamma)$-zero-sum game since $ G$ is.
	
	Let $\tilde p_j\in\Delta(\tilde A_j)\cap\mathbb Q^{\tilde A_j}$ such that $\supp(\tilde p_j) = \tilde A_j$, for each $a_j\in \supp(p_j)\setm \{b_j\}$, $\tilde p_j(a_j) = p_j(a_j)$, and for each $a_j\in A_j\setm\supp(p_j)$, $\tilde p_j(a_j) + \tilde p_j(\tilde a_j) = \tilde p_j(b_j^{a_j})$.
	Let $\tilde p = \tilde p_j\otimes p_k$ be the product distribution of $\tilde p_j$ and $p_k$, and observe that $\tilde p \in \nashq(\tilde G)$.
	Thus, by the induction hypothesis $\tilde p \in f(\tilde G)$.
	Let $\hat p_j \in \Delta(\tilde A_j)$ such that $\hat p_j(c_j) = \tilde p_j(c_j)$ for each $c_j \in \supp(p_j) \cup\{b_j^{a_j}\colon a_j\in A_j\setminus \supp(p_j)\}$, and $\hat p_j(\tilde a_j) = \tilde p_j(a_j) + \tilde p_j(\tilde a_j)$ for each $a_j \in A_j\setminus\supp(p_j)$.
	In other words, $\hat p_j$ is obtained from $\tilde p_j$ by shifting the probability on $a_j$ to $\tilde a_j$.
	Let $\hat p = \hat p_j\otimes p_k \in\Delta(\tilde A)$.
	Consequentialism implies that $\hat p \in f(\tilde G)$.
	
	Let $\pi_j\in\Pi(\tilde A_j)$ be the permutation that swaps $\tilde a_j$ and $b_j^{a_j}$ for each $a_j\in A_j\setm \supp(p_j)$, let $\pi_k$ be the identity on $\tilde A_k$, and let $\pi = (\pi_j,\pi_k)$.
	Since  $\hat p_j(b_j^{a_j}) = \hat p_j(\tilde a_j)$ for each $a_j\in A_j\setm \supp(p_j)$, $\hat p = \hat p\circ \pi$, and so by equivariance, $\hat p \in f(\tilde G\circ \pi)$.
	Let $\hat G = \frac12 \tilde G + \frac12 \tilde G\circ\pi$.
	Consistency implies that $\hat p \in f(\hat G)$.
	Observe that, by construction of $\tilde G_j$, $\hat  G$ is obtained from $ G$ by adding two clones of $b_j$ for each $a_j\in A_j\setm \supp(p_j)$.
	Since $\hat p_j(a_j) = p_j(a_j)$ for each $a_j\in A_j\setm\{b_j\}$, it follows from consequentialism that $p\in f( G)$.
	\label{step:reduction-to-uniform1}
\end{step}
\begin{step}[$G_j(a_j,p_{-j}) \neq 0$ for some $a_j\in A_j$]\label{step:reduction-to-uniform2}
	For each $a_j\in A_j$, let $\alpha_{a_j} = G_j(a_j,p_k)$ be player $j$'s expected payoff for $a_j$ against $p_k$.
	Note that $\alpha_{a_j} \ge \alpha_{b_j}$ for each $a_j\in\supp(p_j)$ and $b_j\in A_j$.
	Let $\tilde G$ be the game on $A$ such that for each $a_j\in A_j$, $\tilde G_j(a_j,\cdot) = -\frac1\gamma\tilde G_k(a_j,\cdot) \equiv \alpha_{a_j}$. 
	Note that $\tilde G$ is a $(j,k,\gamma)$-zero-sum game.
	Moreover, all actions in $B_j = \{a_j\in A_j\colon \alpha_{a_j} \ge \alpha_{b_j}$ for each $b_j\in A_j\}$ are clones, and all actions in $A_k$ are clones. 
	Consequentialism and \Cref{prop:pure-nash} thus imply that $p\in f(\tilde G)$.	

	Let $\bar G =  G - \tilde G$.
	Then, $\bar G$ is $(j,k,\gamma)$-zero-sum, $p\in\nashq(\bar G)$, and for each $a_j\in A_j$, $\bar G_j(a_j,p_k) = \alpha_{a_j} - \alpha_{a_j} = 0$.
	Hence, by \Cref{step:reduction-to-uniform1}, $p\in f(\bar G)$.
	Consistency and positive homogeneity imply that $p\in f( G)$.
	This completes the proof.
\end{step}
\end{proof}
\begin{lemma}[Reduction to rational-valued Nash equilibria]
	\label{lem:rational-nash-to-all-nash}	
	Let $f$ be a continuous and convex-valued solution concept, let $j,k\in N$ be distinct, and let $\gamma > 0$.
	Assume that for each $(j,k,\gamma)$-zero-sum game $ G$ on $A$, $\nashq( G)\subseteq f( G)$.
		Then, for each $(j,k,\gamma)$-zero-sum game $ G$ on $A$, $\nash( G)\subseteq f( G)$.
\end{lemma}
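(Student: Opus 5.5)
The plan is to approximate. Fix a $(j,k,\gamma)$-zero-sum game $G$ on $A$ and $p=p_j\otimes p_k\in\nash(G)$; the dummy coordinates play no role. I will construct $(j,k,\gamma)$-zero-sum games $G^\ell$ on the same $A$ with $G^\ell\to G$, together with $p^\ell\in\nashq(G^\ell)$ satisfying $p^\ell\to p$. By hypothesis $p^\ell\in f(G^\ell)$, and upper hemi-continuity of $f$ then gives $p\in f(G)$. Convex-valuedness should not be needed for this argument.

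\emph{Rational approximants.} Choose rational $p_j^\ell\in\Delta(A_j)\cap\mathbb Q^{A_j}$ with $\supp(p_j^\ell)=\supp(p_j)$ and $p_j^\ell\to p_j$. Choose $p_k^\ell$ in the same way, and set $p^\ell=p_j^\ell\otimes p_k^\ell$, which is rational-valued. Using \Cref{def:zero-sum} together with the normalization $\gamma G_j+G_k\equiv c$ for a constant $c$, I only need to perturb $G_j$ and then define $G_k^\ell=c-\gamma G_j^\ell$. This keeps every $G^\ell$ a $(j,k,\gamma)$-zero-sum game, and the dummies remain dummies.

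\emph{Payoff correction.} Set
\[
h^\ell(a_j)=G_j(a_j,p_k)-G_j(a_j,p_k^\ell),\qquad g^\ell(a_k)=G_j(p_j,a_k)-G_j(p_j^\ell,a_k),
\]
and $G_j^\ell(a)=G_j(a)+h^\ell(a_j)+g^\ell(a_k)$. Both $h^\ell$ and $g^\ell$ tend to $0$ because $p^\ell\to p$, so $G^\ell\to G$.

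\emph{Incentives for player $j$.} Against $p_k^\ell$, player $j$'s payoff from $a_j$ is
\[
G_j(a_j,p_k^\ell)+h^\ell(a_j)+\langle g^\ell,p_k^\ell\rangle=G_j(a_j,p_k)+\text{const}.
\]
This is player $j$'s original payoff profile against $p_k$, shifted by a constant. Hence every action in $\supp(p_j^\ell)=\supp(p_j)$ is a best response, and player $j$ has no profitable deviation.

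\emph{Incentives for player $k$.} Against $p_j^\ell$, player $k$'s payoff from $a_k$ is
\[
c-\gamma\bigl(G_j(p_j^\ell,a_k)+\langle h^\ell,p_j^\ell\rangle+g^\ell(a_k)\bigr)=c-\gamma G_j(p_j,a_k)-\text{const}=G_k(p_j,a_k)-\text{const}.
\]
Since $p_k$ best-responds to $p_j$ in $G$ and $\supp(p_k^\ell)=\supp(p_k)$, player $k$ also has no profitable deviation.

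Together these show $p^\ell\in\nashq(G^\ell)$, which completes the argument. The main obstacle is the two-sided correction. Adding the term $h^\ell(a_j)$ to fix player $j$'s incentives also changes $G_k$ through the zero-sum constraint, and adding $g^\ell(a_k)$ to fix player $k$'s incentives also changes $G_j$. The construction works because a term depending only on $a_j$ is a constant from player $j$'s point of view once $a_k$ is averaged out, and symmetrically for $a_k$. So each correction affects only the other player's payoffs by a constant, as the two displays above confirm.
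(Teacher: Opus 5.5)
Your proposal is correct and follows essentially the paper's route. Both arguments use rational approximants $p_j^\ell,p_k^\ell$ with unchanged supports, an additively separable perturbation $G_j^\ell=G_j+(\text{term in }a_j)+(\text{term in }a_k)$ with $G_k^\ell=c-\gamma G_j^\ell$, and upper hemi-continuity. Your corrections $h^\ell,g^\ell$ reproduce the original payoff vectors against $p_k$ and $p_j$ up to constants, which is a slightly cleaner choice than the paper's case-split terms $r^\ell_{a_j},s^\ell_{a_k}$ with slack $\varepsilon_\ell$. One detail is missing: for $n>2$ you must also replace the dummies' marginals by rational approximants, so that $p^\ell$ is a rational-valued distribution on all of $A$ converging to $p$. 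This is harmless because dummies always best-respond.
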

\begin{proof}
	Let $G$ be a $(j,k,\gamma)$-zero-sum game on $A$, and let $p\in\nash(G)$.
	If $j=k$, then all payoffs in $G$ are constant. In that case every product distribution is a Nash equilibrium, and every such product distribution is the limit of rational-valued product distributions. Since $\nashq(G)\subseteq f(G)$ by assumption and $f$ is continuous, it follows that $p\in f(G)$. Hence we may assume that $j\neq k$.

	Since $G$ is $(j,k,\gamma)$-zero-sum, all players other than $j$ and $k$ are dummy players. Thus the active payoffs do not depend on the dummy players' actions, and there is a constant $c\in\mathbb R$ such that $G_k=c-\gamma G_j$. Write $H$ for player $j$'s payoff as a function on $A_j\times A_k$. Since $p\in\nash(G)$, we can write $p=p_1\otimes\cdots\otimes p_n$. Let $S=\supp(p_j)$, let $T=\supp(p_k)$, and let $v=H(p_j,p_k)$. The Nash inequalities say that $H(a_j,p_k)=v$ for each $a_j\in S$, $H(a_j,p_k)\le v$ for each $a_j\notin S$, $H(p_j,a_k)=v$ for each $a_k\in T$, and $H(p_j,a_k)\ge v$ for each $a_k\notin T$.

	Choose rational-valued mixed strategies $p_j^\ell\in\Delta(S)$ and $p_k^\ell\in\Delta(T)$ with $\supp(p_j^\ell)=S$, $\supp(p_k^\ell)=T$, $p_j^\ell\to p_j$, and $p_k^\ell\to p_k$. For every dummy player $i\neq j,k$, choose rational-valued $p_i^\ell\in\Delta(A_i)$ with $p_i^\ell\to p_i$. Put $v^\ell=H(p_j^\ell,p_k^\ell)$, and choose $\varepsilon_\ell>0$ with $\varepsilon_\ell\to0$.

	For each $a_j\in A_j$, define $r_{a_j}^\ell$ as follows: if $a_j\in S$, set $r_{a_j}^\ell=v^\ell-H(a_j,p_k^\ell)$; if $a_j\notin S$, set $r_{a_j}^\ell=-\max\{H(a_j,p_k^\ell)-v^\ell,0\}-\varepsilon_\ell$. For each $a_k\in A_k$, define $s_{a_k}^\ell$ as follows: if $a_k\in T$, set $s_{a_k}^\ell=v^\ell-H(p_j^\ell,a_k)$; if $a_k\notin T$, set $s_{a_k}^\ell=\max\{v^\ell-H(p_j^\ell,a_k),0\}+\varepsilon_\ell$. Since $p_j^\ell$ is supported on $S$ and $p_k^\ell$ is supported on $T$, we have $\sum_{a_j}p_j^\ell(a_j)r_{a_j}^\ell=0$ and $\sum_{a_k}p_k^\ell(a_k)s_{a_k}^\ell=0$. Moreover, the Nash inequalities for $p$ imply that $r_{a_j}^\ell\to0$ for each $a_j$ and $s_{a_k}^\ell\to0$ for each $a_k$.

	Define $H^\ell(a_j,a_k)=H(a_j,a_k)+r_{a_j}^\ell+s_{a_k}^\ell$. Let $G^\ell$ be the game on $A$ with $G_j^\ell(a)=H^\ell(a_j,a_k)$, $G_k^\ell(a)=c-\gamma H^\ell(a_j,a_k)$, and $G_i^\ell=G_i$ for each $i\neq j,k$. Then $G^\ell$ is $(j,k,\gamma)$-zero-sum and $G^\ell\to G$.

	Let $p^\ell=p_1^\ell\otimes\cdots\otimes p_n^\ell$. We claim that $p^\ell\in\nash(G^\ell)$. Indeed, for $a_j\in S$, the equality $\sum_{a_k}p_k^\ell(a_k)s_{a_k}^\ell=0$ gives $H^\ell(a_j,p_k^\ell)=v^\ell$, while for $a_j\notin S$ it gives $H^\ell(a_j,p_k^\ell)\le v^\ell$. Hence $p_j^\ell$ is a best response to $p_k^\ell$ for player $j$. Similarly, for $a_k\in T$, the equality $\sum_{a_j}p_j^\ell(a_j)r_{a_j}^\ell=0$ gives $H^\ell(p_j^\ell,a_k)=v^\ell$, while for $a_k\notin T$ it gives $H^\ell(p_j^\ell,a_k)\ge v^\ell$. Since player $k$'s payoff is $c-\gamma H^\ell$ with $\gamma>0$, this means that $p_k^\ell$ is a best response to $p_j^\ell$ for player $k$. All other players are dummy players, so they are also best responding.

	Thus $p^\ell\in\nash(G^\ell)$. Since all marginals of $p^\ell$ are rational-valued, $p^\ell\in\nashq(G^\ell)$. By assumption, $p^\ell\in f(G^\ell)$ for each $\ell$. Since $G^\ell\to G$, $p^\ell\to p$, and $f$ is continuous, it follows that $p\in f(G)$.
\end{proof}

\begin{proof}[Proof of \Cref{prop:zero-sum-two-player}]
	By \Cref{lem:essentially-two-player-uniform}, for any distinct $j,k\in N$, there is $\gamma > 0$ such that for each $(j,k,\gamma)$-zero-sum game $G$ on $A$ with $\uni(A) \in \nash(G)$, $\uni(A)\in f(G)$.
	Then, by \Cref{lem:full-support-to-non-full-support} and \Cref{lem:rational-nash-to-all-nash}, for any $(j,k,\gamma)$-zero-sum game $G$ on $A$, $\nash(G)\subseteq f(G)$.
	This completes the proof.
\end{proof}

\section{Reduction to zero-sum games}\label{sec:nash-to-ce}

Let $f$ be a total, continuous, and convex-valued solution concept that satisfies consistency, consequentialism, and rationality.
We prove that if $f$ returns all Nash equilibria of essentially two-player zero-sum games, then $f = \ce$.
	
Let $A = A_1\times\dots\times A_n$ and $B \subset A$.
	A path of length $k$ in $B$ is a sequence $a^0,\dots,a^k\in B$ such that for each $\ell\in[k]$, $a^{\ell-1}$ and $a^{\ell}$ differ in exactly one coordinate.
For $a,b\in B$, the distance between $a$ and $b$ in $B$ is the length of a shortest path from $a$ to $b$ in $B$, or infinity if there is no such path; $a$ and $b$ are adjacent if they are at distance $1$, i.e., if they differ in exactly one coordinate.
We say that $B$ is connected if there is a path between any two elements of $B$. 

\begin{lemma}[Basic decompositions]
\label{lem:decomposition}
	Let $T\colon A\to\mathbb R$ and $p\in\Delta(A)$ such that $\langle p,T\rangle = 0$.
	If $\supp(p)$ is connected, then there exist $T^1,\dots,T^k\colon A\to\mathbb R$ such that $\sum_{\ell\in[k]} T^\ell = T$, and for each $\ell\in[k]$, $\langle p, T^\ell\rangle = 0$ and the support of $T^\ell$ consists of one element of $A$ or two adjacent elements of $\supp(p)$.
\end{lemma}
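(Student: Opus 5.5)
First I strip off the part of $T$ lying outside the support. For each $a\in A\setminus\supp(p)$, let $T^a=T(a)\chi_a$. Since $p(a)=0$, we have $\langle p,T^a\rangle=0$, and the support of $T^a$ is a single element of $A$. After subtracting these pieces, the remainder $T'=T-\sum_{a\notin\supp(p)}T^a$ is supported on $\supp(p)$ and still satisfies $\langle p,T'\rangle=0$. So it suffices to decompose a function $T'$ supported on $B=\supp(p)$ with $\sum_{b\in B}p(b)T'(b)=0$.

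Next I move to weighted coordinates. Set $x(b)=p(b)T'(b)$ for $b\in B$. Then $x\in\mathbb R^B$ and $\sum_b x(b)=0$. For adjacent $b,c\in B$, define the elementary function
\[
E^{b,c}=\frac{1}{p(b)}\chi_b-\frac{1}{p(c)}\chi_c .
\]
It is well defined because $p>0$ on $B$. It satisfies $\langle p,E^{b,c}\rangle=1-1=0$, and its support is exactly the two adjacent elements $b,c$ of $\supp(p)$. In weighted coordinates, $E^{b,c}$ corresponds to the vector $\chi_b-\chi_c$. The task therefore reduces to a standard graph fact: in a connected graph on $B$, where edges join adjacent profiles, every zero-sum vector $x\in\mathbb R^B$ is a linear combination of the edge vectors $\chi_b-\chi_c$.

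To prove this fact, I fix a spanning tree of the connected graph on $B$. I then peel off leaves by induction on $|B|$. Let $b$ be a leaf with tree neighbour $c$. Subtract $x(b)(\chi_b-\chi_c)$, which kills the coordinate at $b$ and keeps the total sum at zero. The remaining vector lives on $B\setminus\{b\}$, whose tree stays connected, so the induction applies. In the base case $|B|=1$, the zero-sum condition forces $x=0$.

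Translating back, $T'=\sum x_{bc}E^{b,c}$. Each summand $x_{bc}E^{b,c}$ has support at most $\{b,c\}$; if $x_{bc}=0$ I simply drop that term. Combining these terms with the $T^a$ pieces gives the required $T^1,\dots,T^k$. The only point needing care is that every $E^{b,c}$ is supported on elements of $\supp(p)$ that are adjacent in $A$. The spanning tree guarantees this, since its edges are paths of length $1$ inside $\supp(p)$. I expect no real obstacle beyond keeping the weighting by $p$ straight.
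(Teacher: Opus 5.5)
Your proof is correct, but it takes a different route from the paper. The paper also first discards the part of $T$ off $\supp(p)$. It then runs a nested induction directly on $T$: an outer induction on $|\supp(T)|$ and an inner one on the distance in $\supp(p)$ between $\supp(T_+)$ and $\supp(T_-)$. When a positive and a negative entry are adjacent, it peels off a two-point piece $T(a)\chi_a - T(a)\frac{p(a)}{p(b)}\chi_b$, or its mirror image, which zeroes out one of them and shrinks the support. Otherwise it takes a shortest path $a^0,\dots,a^m$ from a positive to a negative entry and moves the $p$-weighted mass at $a^0$ onto $a^1$. This keeps the support from growing and shortens the distance.

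You instead rescale by $p$, which turns $\langle p,T\rangle=0$ into a plain zero-sum condition and the two-point pieces into edge vectors $\chi_b-\chi_c$. The lemma then becomes the standard fact that the edge vectors of a connected graph span the zero-sum subspace, which you prove by peeling leaves of a spanning tree. Your route is shorter and avoids the somewhat delicate double induction. It also gives more: the two-point pieces lie along a fixed spanning tree independent of $T$, so the decomposition is linear in $T$ and uses at most $|\supp(p)|-1$ such pieces. The paper's argument stays in the original coordinates and is driven by the sign pattern of $T$. Its later use (Step 1 of the claim in the proof of \Cref{lem:ce-subseteq-f}) needs only the stated properties, so your version serves equally well there.

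One trivial fix: include $T^a$ only for those $a\notin\supp(p)$ with $T(a)\neq 0$. Otherwise its support is empty rather than a single element. You already handle zero coefficients this way for the edge terms.
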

\begin{proof}
	It suffices to prove the statement for the case $\supp(T) \subseteq\supp(p)$ since $\langle p,T'\rangle = 0$ for any $T'\colon A\to \mathbb R$ that is supported on $A\setminus\supp(p)$.
	Let $B = \supp(p)$, let $B^+ = \{a\in B\colon T(a) > 0\}$, and let $B^- = \{a\in B\colon T(a) < 0\}$.
	\begin{claim}\label{claim:decomposition1}
		Let $a\in B^+$ and $b\in B^-$ minimize the distance in $B$ between elements of $B^+$ and $B^-$, and let $m$ be that distance.
		If $m > 1$, there are $v,w\colon A\to\mathbb R$ such that $T = v + w$, $\langle p, v\rangle = 0$ and $\langle p, w\rangle = 0$, the distance between $\supp(v_+)$ and $\supp(v_-)$ is smaller than $m$, the distance between $\supp(w_+)$ and $\supp(w_-)$ is smaller than $m$, $|\supp(v)|,|\supp(w)| \le |\supp(T)|$, and $\supp(v),\supp(w)\subseteq B$.
	\end{claim}
	\begin{proof}[Proof of \Cref{claim:decomposition1}]
		Let $a^0,\dots,a^m$ be a path from $a$ to $b$ in $B$.
		Note that $T(a^\ell) = 0$ for each $\ell\in[m-1]$ since there is no path from $B^+$ to $B^-$ of length at most $m-1$ in $B$.
		Recall that $\chi_a$ is the standard unit vector at $a\in A$, and define $v = T(a^0) \chi_{a^0} - T(a^0) \frac{p(a^0)}{p(a^1)} \chi_{a^1}$ and $w = T - v$.
		Then, $\langle p, v\rangle = 0$, and so $\langle p, w\rangle = 0$.
		Moreover, the shortest path from $\supp(v_+)$ to $\supp(v_-)$ in $B$ has length $1$ since $a^0$ and $a^1$ are adjacent in $B$, and $a^1,\dots,a^m$ is a path from $\supp(w_+)$ to $\supp(w_-)$ in $B$ of length $m-1$.
		Lastly, $T = v+w$, $|\supp(v)|,|\supp(w)| \le |\supp(T)|$, and $\supp(v),\supp(w)\subseteq B$ since $a^0,\dots,a^m$ is a path in $B$ and $T$ is supported in $B$. 
	\end{proof}
	
	Now we prove the statement of the lemma by induction on $|\supp(T)|$.
	The base case $|\supp(T)| = 0$ is trivial.
	Observe that $|\supp(T)| = 1$ is not possible since $\supp(T_+),\supp(T_-)\neq\emptyset$ whenever $T\not\equiv 0$, $\langle p,T\rangle = 0$, and $\supp(T) \subseteq B$.
	Now consider the case that $|\supp(T)| \ge 2$, and assume the statement holds for all smaller support sizes.
	We run a second induction on the distance between $\supp(T_+)$ and $\supp(T_-)$.
	Denote this distance by $m$.
	
	If $m = 1$, there are $a\in \supp(T_+)$ and $b\in\supp(T_-)$ that are adjacent in $B$.
	Consider first the case that $p(a)T(a) \le -p(b)T(b)$.
	Let $v = T(a)\chi_{a} - T(a)\frac{p(a)}{p(b)}\chi_{b}$ and $w = T - v$.
	Then, $\langle p, v\rangle = 0$, $\langle p, w\rangle = \langle p, T\rangle - \langle p, v\rangle = 0$, $|\supp(v)| = 2$, and $|\supp(w)| < |\supp(T)|$.
	The statement holds trivially for $v$, and it holds for $w$ by the hypothesis of the induction on the support size.
	Hence, it also holds for $T = v + w$.
	The case $p(a)T(a) \ge -p(b)T(b)$ is analogous.
	
	If $m > 1$, let $v,w$ be as obtained from \Cref{claim:decomposition1}.
	Then, the statement holds for $v$ and $w$ by the hypothesis of the induction on $m$.
	Hence, it also holds for $T = v + w$.
	This completes both inductions and thus the proof.
\end{proof}

\begin{lemma}[From Nash equilibria to correlated equilibria]\label{lem:ce-subseteq-f}
	Let $f$ be a total, continuous, and convex-valued solution concept satisfying consistency and either
	\begin{enumerate}[label=(\roman*),ref=(\roman*)]
		\item strong consequentialism and weak rationality, or
		\item consequentialism and rationality.
	\end{enumerate}
	Assume that for all distinct $j,k\in N$, there is $\gamma_{j,k} > 0$ such that $\nash( G)\subseteq f( G)$ for each $(j,k,\gamma_{j,k})$-zero-sum game $ G$. 
	Let $ G $ be a game on $A$ and $j\in N$ such that for each $i\neq j$, $G_i \equiv 0$.
	Then, $\ce(G)\subseteq f(G)$.
\end{lemma}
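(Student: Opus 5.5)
The plan is to exhibit $G$ as a convex combination of games in which $p$ is already known to be returned, and then invoke consistency. Fix $p\in\ce(G)$. By \Cref{lem:homogeneous-core} we may assume that $f$ is positively homogeneous.

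\emph{Step 1: functions of $a_{-j}$ can be added to $G_j$.} Let $h\colon A\to\mathbb R$ depend only on $a_{-j}$, and consider the game $H$ with $H_j=h$ and $H_i\equiv0$ for $i\ne j$. In $H$, all actions of every player are clones, since no payoff depends on any action other than through $a_{-j}$ and player $j$'s actions never matter. Hence totality, consequentialism and convex-valuedness give $f(H)=\Delta(A)$, exactly as for $G^0$ in the proof of \Cref{lem:homogeneous-core}. As in \Cref{lem:adding-constants}, consistency and positive homogeneity then yield $f(G)=f(G+H)$. So $G_j$ is determined only up to such functions. The same argument shows that we may add to any other player $k$'s payoff any function that does not depend on $a_k$.

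\emph{Step 2: decompose the incentive structure.} For each recommendation $a_j$ with $p(a_j,\cdot)\not\equiv0$, the correlated-equilibrium inequalities say that $a_j$ is a best response to the conditional distribution $p(\cdot\mid a_j)$ on $A_{-j}$. I would use Step 1 to normalize $G_j$ so that, for each such $a_j$, the row $G_j(a_j,\cdot)$ has zero inner product with $p(a_j,\cdot)$, while every other row $G_j(b_j,\cdot)$ has nonpositive inner product with $p(a_j,\cdot)$. Next I would apply \Cref{lem:decomposition} to the $p$-weighted deviations. Where $\supp(p)$ is disconnected, I would first split $p$ into connected components, using convex-valuedness and then consistency. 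This writes the relevant part of $G_j$ as a sum of pieces $T^\ell$. Each piece is supported either on a single profile or on two adjacent profiles of $\supp(p)$, and satisfies $\langle p,T^\ell\rangle=0$. A piece supported on two adjacent profiles that differ in the coordinate of some player $k\neq j$ changes only what player $j$ learns about $a_k$. A piece that differs in $j$'s own coordinate encodes a one-dimensional incentive of $j$.

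\emph{Step 3: realize each piece as an essentially two-player zero-sum game.} For each piece I would build a game $G^\ell$ with $G^\ell_j=T^\ell$ (suitably scaled) and $G^\ell_k=-\gamma_{j,k}T^\ell$ for the relevant player $k$. I would then check that $p$ is returned in $G^\ell$. I expect to do this by cloning actions so that the two profiles in $\supp(T^\ell)$ become a product structure in which $p$, restricted to the players $j,k$, is a Nash equilibrium, and then appealing to the hypothesis together with consequentialism. Pieces supported on a single profile outside $\supp(p)$ only lower $j$'s deviation payoffs. Such pieces would be handled by \Cref{prop:pure-nash}-style dominance arguments, or simply absorbed into pieces whose equilibrium is unaffected. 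Finally, the added payoffs $-\gamma T^\ell$ for players $k\neq j$ must sum to a function that Step 1 allows us to discard, so that the convex combination of the $G^\ell$ equals $G$ up to that normalization. Consistency then gives $p\in f(G)$.

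\emph{Expected main obstacle.} The hardest step is Step 3: $p$ is correlated, not a product distribution, so \Cref{prop:zero-sum-two-player} does not apply directly. The auxiliary players' payoffs must also cancel across the pieces. My first attempt would be to choose the pieces so that the other players' induced payoffs $-\gamma T^\ell$ sum to zero modulo functions independent of their own action. I would also restrict each piece to a $2\times2$ block on which $p$ is proportional to a product, using clone-symmetric blow-ups to make $p$ uniform there, as in \Cref{lem:full-support-to-non-full-support}.
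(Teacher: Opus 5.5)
\paragraph{Verdict.} There is a genuine gap: Steps 2--3 cannot be completed as set up, because your pieces live on $\supp(p)$ itself.

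\paragraph{Why pieces on the support fail.} Take a piece $T^\ell$ supported on adjacent $x,y\in\supp(p)$ that differ only in player $k$'s coordinate, and your game $G^\ell$ with $G^\ell_j=T^\ell$ and $G^\ell_k=-\gamma_{j,k}T^\ell$.
\begin{itemize}
\item Player $k$'s correlated-equilibrium constraint for recommendation $x_k$ against deviation $y_k$ reduces to $\gamma_{j,k}\,p(x)\,(T^\ell(y)-T^\ell(x))\ge 0$.
\item The constraint for $y_k$ against $x_k$ gives the reverse inequality.
\item So $p\in\ce(G^\ell)$ forces $T^\ell(x)=T^\ell(y)$, and together with $\langle p,T^\ell\rangle=0$ this gives $T^\ell\equiv 0$.
\end{itemize}
Pieces on two support profiles that differ in $j$'s own coordinate die the same way, through $j$'s two constraints. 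Since $\ce$ itself satisfies every hypothesis of the lemma, no argument (cloning, uniformizing, or anything else) can yield $p\in f(G^\ell)$ for a nonzero such piece. The obstacle you flag is therefore structural, not technical.

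Three further problems follow from the same set-up.
\begin{itemize}
\item The partner payoffs $-\gamma_{j,k}T^\ell$ depend on $a_k$, so Step 1 cannot discard their sum.
\item Splitting $p$ along connected components of $\supp(p)$ does not preserve the correlated-equilibrium property when $n\ge 3$. Two profiles with the same recommendation for $j$ may differ in several coordinates and lie in different components, so restricting to a component changes $j$'s conditional beliefs.
\item The simultaneous normalization of all rows in Step 2 is not always possible when the conditionals $p(a_j,\cdot)$ are linearly dependent.
\end{itemize}

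\paragraph{The missing idea.} You need to condition on $j$'s recommendation and move all incentive-relevant payoffs off the support.
\begin{enumerate}
\item $p$ is a convex combination of its conditionals $p^{b_j}$. Each is supported on $\{b_j\}\times A_{-j}$ and each is a correlated equilibrium of $G$, so by convex-valuedness it suffices to show $p^{b_j}\in f(G)$.
\item Fix $b_j$ and subtract the game that gives $j$ the payoff $G_j(b_j,a_{-j})$ in every row; there every profile is a pure Nash equilibrium. What remains for $j$ is $G_j(a_j,\cdot)-G_j(b_j,\cdot)$ in rows $a_j\neq b_j$, which carry no probability.
\item Remove the row constants $\langle q,G_j(a_j,\cdot)-G_j(b_j,\cdot)\rangle\le 0$, where $q=p(b_j,\cdot)$. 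The removed game has every profile in row $b_j$ as a pure equilibrium.
\item Apply \Cref{lem:decomposition} to each such row, as a function on $A_{-j}$, against $q$. A two-point piece then sits on $c_{-j},d_{-j}\in\supp(q)$ differing only in some $k\neq j$.
\item The distribution $\hat p$ on $\{(b_j,c_{-j}),(b_j,d_{-j})\}$ is already a product. It is a Nash equilibrium of the $(j,k,\gamma_{j,k})$-zero-sum game $H$ that gives $j$ the piece (extended to depend only on $r_k$) in row $a_j$ and zero elsewhere. The hypothesis then gives $\hat p\in f(H)$ directly, with no uniformization.
\item The partner's payoff is cancelled piece by piece. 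In $G^{a_j,\ell}-H$, player $j$'s payoff vanishes on the columns $c_{-j},d_{-j}$ and player $k$'s payoff vanishes in row $b_j$. So both profiles are pure equilibria, and consistency gives $\hat p\in f(G^{a_j,\ell})$.
\item The remaining points of $\{b_j\}\times\supp(q)$ are pure equilibria of $G^{a_j,\ell}$, so convex-valuedness and then consistency finish.
\end{enumerate}
Disconnected $\supp(q)$ is handled not by splitting but by perturbation: mix $q$ with the uniform distribution on $A_{-j}$, subtract a small $\eta_m$ from the deviation rows to make the inequalities strict, and pass to the limit by continuity.

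\paragraph{A minor point in Step 1.} When $h$ depends on $a_k$, player $k$'s actions are not clones in $H$. The conclusion $f(H)=\Delta(A)$ still holds, but because every profile of $H$ is a pure Nash equilibrium, via \Cref{prop:pure-nash} and convex-valuedness.
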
	
\begin{proof}
	By \Cref{lem:homogeneous-core}, we may assume that $f$ is positively homogeneous.
	We establish a special case of the lemma first.
	\setcounter{claim}{0}
	\begin{claim}
		\label{claim:connected-ce}
		Let $j \in N$, $G_j'\colon A\to\mathbb R$, $b_j\in A_j$, and $p\in\Delta(A)$ such that for each $a_j\neq b_j$, $p(a_j,\cdot)\equiv 0$, $\langle p(b_j,\cdot),G_j'(a_j,\cdot)\rangle \le 0$, and $G_j'(b_j,\cdot)\equiv 0$.
		Then, there exists a game $ G$ on $A$ such that $G_j = G_j'$, for each $i\in N$, $G_i(b_j,\cdot)\equiv 0$, and $p\in f( G)$.
	\end{claim}
	Note that $p \in \ce( G)$.
\begin{proof}[Proof of \Cref{claim:connected-ce}]
The standing zero-sum Nash-inclusion hypothesis continues to hold after the passage to the homogeneous core: if $H$ is a $(j,k,\gamma_{j,k})$-zero-sum game and $r\in\nash(H)$, then $\alpha H$ is again a $(j,k,\gamma_{j,k})$-zero-sum game and $r\in\nash(\alpha H)$ for every $\alpha>0$; hence $r$ belongs to the homogeneous core at $H$.
Let $q=p(b_j,\cdot)$, and let $G^0$ be the game on $A$ such that $G^0_j=G_j'$ and $G^0_i\equiv 0$ for each $i\neq j$.
We prove the stronger statement that $p\in f(G^0)$.
If $A_j=\{b_j\}$, then $G^0$ is the zero game, and the conclusion follows from \Cref{prop:pure-nash} and convex-valuedness.

We proceed in multiple steps.
\setcounter{step}{0}
\begin{step}
First, assume that $\supp(p)$ is connected and $\langle q,G_j'(a_j,\cdot)\rangle =0$ for each $a_j\neq b_j$.
Since $p(a_j,\cdot)\equiv 0$ for each $a_j\neq b_j$, $\supp(q)$ is connected.
For each $a_j\neq b_j$, if $G_j'(a_j,\cdot)\equiv 0$, let $L(a_j)=\emptyset$.
Otherwise, apply \Cref{lem:decomposition} to $G_j'(a_j,\cdot)$ and $q$, and write
\[
G_j'(a_j,\cdot)=\sum_{\ell\in L(a_j)} \hat T^{a_j,\ell},
\]
where $L(a_j)$ is finite and, for each $\ell\in L(a_j)$, $\langle q,\hat T^{a_j,\ell}\rangle=0$ and the support of $\hat T^{a_j,\ell}$ consists either of one element of $A_{-j}$ or of two adjacent elements of $\supp(q)$.
For each $a_j\neq b_j$ and $\ell\in L(a_j)$, let $G^{a_j,\ell}$ be the game on $A$ with $G_j^{a_j,\ell}(a_j,\cdot)=\hat T^{a_j,\ell}$, $G_j^{a_j,\ell}(c_j,\cdot)\equiv 0$ for each $c_j\neq a_j$, and $G_i^{a_j,\ell}\equiv 0$ for each $i\neq j$.
We show that $p\in f(G^{a_j,\ell})$.

If $\supp(\hat T^{a_j,\ell})=\{c_{-j}\}$, then $q(c_{-j})=0$ because $\langle q,\hat T^{a_j,\ell}\rangle=0$.
For each $r_{-j}\in\supp(q)$, $\chi_{(b_j,r_{-j})}$ is a pure Nash equilibrium of $G^{a_j,\ell}$.
Thus $p\in f(G^{a_j,\ell})$ by \Cref{prop:pure-nash} and convex-valuedness.

Now suppose that $\supp(\hat T^{a_j,\ell})=\{c_{-j},d_{-j}\}$ for adjacent $c_{-j},d_{-j}\in\supp(q)$.
Let $k\neq j$ be the unique player such that $c_k\neq d_k$.
Define a game $H^{a_j,\ell}$ on $A$ by
\[
H_j^{a_j,\ell}(e_j,r_{-j})=
\begin{cases}
\hat T^{a_j,\ell}(c_{-j}) & \text{if } e_j=a_j \text{ and } r_k=c_k,\\
\hat T^{a_j,\ell}(d_{-j}) & \text{if } e_j=a_j \text{ and } r_k=d_k,\\
0 & \text{otherwise,}
\end{cases}
\]
$H_k^{a_j,\ell}=-\gamma_{j,k}H_j^{a_j,\ell}$, and $H_i^{a_j,\ell}\equiv 0$ for each $i\neq j,k$.
Then $H^{a_j,\ell}$ is a $(j,k,\gamma_{j,k})$-zero-sum game.
Let
\[
\hat p=\frac{q(c_{-j})\chi_{(b_j,c_{-j})}+q(d_{-j})\chi_{(b_j,d_{-j})}}{q(c_{-j})+q(d_{-j})}.
\]
Since $c_{-j}$ and $d_{-j}$ differ only in coordinate $k$, $\hat p$ is a product distribution.
Moreover, $\langle q,\hat T^{a_j,\ell}\rangle=0$ implies that player $j$ is indifferent between $b_j$ and $a_j$ under $\hat p$ in $H^{a_j,\ell}$; all other deviations of player $j$ yield payoff $0$, and all other players are indifferent.
Thus $\hat p\in\nash(H^{a_j,\ell})$, and hence $\hat p\in f(H^{a_j,\ell})$ by the standing zero-sum Nash-inclusion hypothesis.

Let $D^{a_j,\ell}=G^{a_j,\ell}-H^{a_j,\ell}$.
For each $r_{-j}\in\{c_{-j},d_{-j}\}$, $\chi_{(b_j,r_{-j})}$ is a pure Nash equilibrium of $D^{a_j,\ell}$: player $j$ obtains $0$ from every action, and for each $i\neq j$, every unilateral deviation from $(b_j,r_{-j})$ leaves player $i$'s payoff equal to $0$.
Therefore $\hat p\in f(D^{a_j,\ell})$ by \Cref{prop:pure-nash} and convex-valuedness.
Consistency and positive homogeneity imply $\hat p\in f(G^{a_j,\ell})$.
For each $r_{-j}\in\supp(q)\setminus\{c_{-j},d_{-j}\}$, $\chi_{(b_j,r_{-j})}$ is a pure Nash equilibrium of $G^{a_j,\ell}$.
Since
\[
p=(q(c_{-j})+q(d_{-j}))\hat p+\sum_{r_{-j}\in\supp(q)\setminus\{c_{-j},d_{-j}\}}q(r_{-j})\chi_{(b_j,r_{-j})},
\]
convex-valuedness gives $p\in f(G^{a_j,\ell})$.

Let
\[
\bar G=\sum_{a_j\neq b_j}\sum_{\ell\in L(a_j)}G^{a_j,\ell},
\]
with the convention that an empty sum is the zero game.
If the family of summands is empty, then $\bar G=G^0$ is the zero game, and $p\in f(\bar G)$ by \Cref{prop:pure-nash} and convex-valuedness.
Otherwise, by consistency and positive homogeneity, $p\in f(\bar G)$.
By construction, $\bar G=G^0$.
Thus $p\in f(G^0)$.
\label{step:connected-ce2}
\end{step}

\begin{step}
Second, still assuming that $\supp(p)$ is connected, we allow $\langle q,G_j'(a_j,\cdot)\rangle\leq 0$ for each $a_j\neq b_j$.
Let $\alpha\in\mathbb R^{A_j}$ be given by $\alpha_{a_j}=\langle q,G_j'(a_j,\cdot)\rangle$ for each $a_j\in A_j$.
Then $\alpha_{b_j}=0$ and $\alpha_{a_j}\leq 0$ for each $a_j\neq b_j$.
Let $\hat G$ be the game on $A$ such that $\hat G_j(a_j,\cdot)\equiv \alpha_{a_j}$ for each $a_j\in A_j$, and $\hat G_i\equiv 0$ for each $i\neq j$.
Let $\tilde G$ be the game on $A$ such that $\tilde G_j=G_j'-\hat G_j$ and $\tilde G_i\equiv 0$ for each $i\neq j$.
By \Cref{step:connected-ce2} applied to $G_j'-\hat G_j$, we have $p\in f(\tilde G)$.
For each $r_{-j}\in A_{-j}$, $(b_j,r_{-j})$ is a pure Nash equilibrium of $\hat G$.
Thus $p\in f(\hat G)$ by \Cref{prop:pure-nash} and convex-valuedness.
Consistency and positive homogeneity imply $p\in f(\tilde G+\hat G)=f(G^0)$.
\label{step:connected-ce3}
\end{step}

\begin{step}
Third, we remove the assumption that $\supp(p)$ is connected.
Let $u$ be the uniform distribution on $A_{-j}$, and let
\[
M=\max_{a_j\neq b_j}\left|\left\langle u,G_j'(a_j,\cdot)\right\rangle\right|.
\]
For each $m\geq 1$, let $\eta_m=1/m$ and $\rho_m=\eta_m/(2(M+1))$.
Define $q^m=(1-\rho_m)q+\rho_m u$.
Let $p^m\in\Delta(A)$ be given by $p^m(b_j,\cdot)=q^m$ and $p^m(a_j,\cdot)\equiv 0$ for each $a_j\neq b_j$.
Finally, let $G^m$ be the game on $A$ such that $G^m_i\equiv 0$ for each $i\neq j$, $G^m_j(b_j,\cdot)\equiv 0$, and
\[
G^m_j(a_j,\cdot)=G_j'(a_j,\cdot)-\eta_m
\]
for each $a_j\neq b_j$.
The support of $q^m$ is all of $A_{-j}$, which is connected; hence $p^m$ is connected.
Moreover, for each $a_j\neq b_j$,
\[
\left\langle q^m,G^m_j(a_j,\cdot)\right\rangle
=(1-\rho_m)\left\langle q,G_j'(a_j,\cdot)\right\rangle
+\rho_m\left\langle u,G_j'(a_j,\cdot)\right\rangle-\eta_m
\leq \rho_m M-\eta_m
=-\eta_m\frac{M+2}{2(M+1)}<0.
\]
Thus, by \Cref{step:connected-ce3}, $p^m\in f(G^m)$ for each $m$.
Since $p^m\to p$ and $G^m\to G^0$, continuity implies $p\in f(G^0)$.
This proves the stronger statement, and $G^0$ has all properties required in \Cref{claim:connected-ce}.
\end{step}
\end{proof}	
	
	We use \Cref{claim:connected-ce} to prove the lemma.
	Let $G$ be a game on $A$ and $j\in N$ such that for each $i\neq j$, $G_i\equiv 0$, and let $p \in \ce(G)$.
		For each $b_j\in A_j$ with $p(b_j,\cdot) \not\equiv 0$, let $p^{b_j}\in \Delta(A)$ such that $p^{b_j}(b_j,\cdot) = \frac{p(b_j,\cdot)}{\sum_{a_{-j} \in A_{-j}} |p(b_j,a_{-j})|}$.
		Thus, $p^{b_j}(b_j,\cdot)$ is the correlated strategy of players other than $j$ if $j$ receives the signal $b_j$.
		
		Fix $b_j\in A_j$ with $p(b_j,\cdot) \not\equiv 0$.
		We decompose $G$ into three types of games.
		\begin{enumerate}[label=(\roman*)]
			\item First, let $G_j'\colon A\to\mathbb R$ such that for each $a_j\in A_j$, $G_j'(a_j,\cdot) = G_j(a_j,\cdot) - G_j(b_j,\cdot)$.
		Let $G^{b_j}$ be the game on $A$ obtained by applying \Cref{claim:connected-ce} to $j$, $G_j'$, $b_j$, and $p^{b_j}$.
		The hypotheses hold since $p$ is a correlated equilibrium of $G$.
		By the claim, $G_j^{b_j} = G_j'$ and $p^{b_j} \in f(G^{b_j})$.
			
			\item Second, for each $k\neq j$, let $\bar  G^{b_j,k}$ be the game on $A$ such that $\bar G^{b_j,k}_k = - G^{b_j}_k$, and for each $i\neq k$, $\bar G^{b_j,k}_i \equiv 0$.
		Then, for each $a_{-j}\in A_{-j}$, $\chi_{(b_j,a_{-j})}$ is a pure Nash equilibrium of $\bar G^{b_j,k}$ since $\bar G^{b_j,k}_k(b_j,\cdot) \equiv 0$.
		Hence, by \Cref{prop:pure-nash} and convex-valuedness, $p^{b_j} \in f(\bar G^{b_j,k})$.
		
			\item Third, let $\tilde G^{b_j}$ be the game on $A$ such that for each $a_j \in A_j$, $\tilde G_j^{b_j}(a_j,\cdot) \equiv G_j(b_j,\cdot)$, and for each $i\neq j$, $\tilde G_i^{b_j}\equiv 0$.
		Note that each action profile is a pure Nash equilibrium of $\tilde G^{b_j}$.
		Hence, by \Cref{prop:pure-nash} and convex-valuedness, $f(\tilde G^{b_j}) = \Delta(A)$.
		In particular, $p^{b_j}\in f(\tilde G^{b_j})$.
		\end{enumerate}
		Observe that $ G =  G^{b_j} + \sum_{k\neq j}\bar G^{b_j,k} + \tilde G^{b_j}$.
		Consistency and positive homogeneity imply that $p^{b_j}\in f(G)$.
		
		Since $p^{b_j}\in f( G)$ for each $b_j\in A_j$ with $p(b_j,\cdot)\not\equiv 0$, convex-valuedness implies that $p\in f( G)$ as required.
\end{proof}

It is straightforward to remove the assumption that $G_i\equiv 0$ for all $i\neq j$ from \Cref{lem:ce-subseteq-f} using consistency.

\section{$f = \ce$}\label{sec:f-is-ce}

We prove \Cref{thm:ce}.
The first lemma shows that \ce satisfies all of the axioms.

\begin{lemma}[\ce satisfies the axioms]\label{lem:ce-is-nice}
	$\ce$ satisfies totality, continuity, convex-valuedness, consistency, consequentialism, and rationality.
\end{lemma}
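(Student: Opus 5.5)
The plan is to verify each property directly from the defining inequalities of $\ce$. For each player $i$ and each pair $a_i,b_i\in A_i$, the correlated-equilibrium condition is the linear inequality
\[
\sum_{a_{-i}} p(a_i,a_{-i})\bigl(G_i(a_i,a_{-i})-G_i(b_i,a_{-i})\bigr)\ge 0 .
\]
This inequality is linear in $p$ for fixed $G$, linear in $G$ for fixed $p$, and jointly continuous in $(p,G)$.

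\emph{Convex-valuedness} follows from linearity in $p$.

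\emph{Consistency} follows from linearity in $G$: if the inequality holds for $G$ and for $G'$, it holds for $\lambda G+(1-\lambda)G'$.

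\emph{Continuity (upper hemicontinuity)} follows because limits of pairs $(p^\ell,G^\ell)$ satisfying finitely many weak inequalities that are continuous in both arguments again satisfy them.

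\emph{Totality} follows since every finite game has a Nash equilibrium (Nash's theorem), and every Nash equilibrium is a correlated equilibrium.

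\emph{Rationality.} Suppose $a_i$ is dominated by $b_i$. Then every term $G_i(a_i,a_{-i})-G_i(b_i,a_{-i})$ is strictly negative. The CE inequality for the pair $(a_i,b_i)$ therefore forces $p(a_i,a_{-i})=0$ for all $a_{-i}$.

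\emph{Consequentialism.} This is the only step needing care. Let $G=G'\circ\phi$ and let $p$ be clone-symmetric with weights $\alpha_i$. Fix a player $i$ and write $p'=\phi_*(p)$.

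First I reduce the CE inequalities for $p$ in $G$ to ones for $p'$ in $G'$.
\begin{itemize}
\item Take a recommended action $a_i$ and a deviation $b_i$ in $G$.
\item Since payoffs depend on $a_{-i}$ only through $\phi_{-i}(a_{-i})$, summing over the fibres of $\phi_{-i}$ turns the inequality into
\[
\sum_{a'_{-i}} \Bigl(\sum_{a_{-i}\in\phi_{-i}^{-1}(a'_{-i})} p(a_i,a_{-i})\Bigr)\bigl(G'_i(\phi_i(a_i),a'_{-i})-G'_i(\phi_i(b_i),a'_{-i})\bigr)\ge 0 .
\]
\item By clone-symmetry, $p(a_i,a_{-i})=\alpha_i(a_i)\sum_{c_i\in\phi_i^{-1}(\phi_i(a_i))}p(c_i,a_{-i})$.
\item Hence the inner sum equals $\alpha_i(a_i)\,p'(\phi_i(a_i),a'_{-i})$.
\end{itemize}
So each CE inequality for $p$ in $G$ is a nonnegative multiple $\alpha_i(a_i)$ of the CE inequality for $p'$ in $G'$ with recommendation $\phi_i(a_i)$ and deviation $\phi_i(b_i)$.

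This gives the direction ($\Leftarrow$): if $p'\in\ce(G')$, then $p\in\ce(G)$.

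For the direction ($\Rightarrow$), take $a'_i,b'_i\in A'_i$.
\begin{itemize}
\item The inequality for $p'$ with recommendation $a'_i$ is obtained by summing the $p$-inequalities over all $a_i\in\phi_i^{-1}(a'_i)$, each with deviation some fixed $b_i\in\phi_i^{-1}(b'_i)$.
\item $\phi_i$ is surjective, so $\phi_i^{-1}(b'_i)$ is nonempty and such a $b_i$ exists.
\item Each summand is nonnegative, hence so is the sum.
\end{itemize}
Equivalently, one can use that the weights $\alpha_i(a_i)$ over a fibre sum to $1$ whenever that fibre carries positive mass.

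I expect the main obstacle to be this bookkeeping in the consequentialism step, in particular handling fibres with zero mass, where the $\alpha_i(a_i)$ are arbitrary. In that case both sides of the relevant inequalities are zero, so the argument is unaffected.
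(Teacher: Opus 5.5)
Your proposal is correct and follows the paper's proof. As in the paper, you read off each property from the linear correlated-equilibrium inequalities: totality via Nash's theorem, continuity, convex-valuedness and consistency via linearity in $p$ and in $G$, and rationality via the strictly negative payoff differences. Your fibre computation, showing that each constraint for $p$ in $G$ is $\alpha_i(a_i)$ times a constraint for $\phi_*(p)$ in $G'$, is a careful spelling-out of the paper's one-line remark that adding clones only introduces redundant constraints.
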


\begin{proof}
	$\ce$ is total since it is a coarsening of $\nash$.
	For every game $G$ on $A$, $\ce(G)$ is defined by linear constraints, and the constraints depend linearly on $G$.
	Thus, \ce is continuous and convex-valued, and it satisfies consistency.
	$\ce$ satisfies consequentialism since adding clones only introduces redundant constraints.
	Lastly, $\ce$ satisfies rationality since if $a_i$ is dominated by $b_i$, player $i$ prefers $b_i$ to $a_i$ when recommended to play $a_i$.
\end{proof}

\begin{lemma}[Containment in \ce]
	\label{lem:f-subset-ce}
	Let $f$ be a total, continuous, and convex-valued solution concept that satisfies consistency, consequentialism, and rationality. 
	Assume that for all distinct $j,k\in N$, there is $\gamma_{j,k} > 0$ such that $\nash( G)\subseteq f( G)$ for each $(j,k,\gamma_{j,k})$-zero-sum game $ G$. 
	Then, $f\subseteq \ce$.
\end{lemma}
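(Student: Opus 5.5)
Assume for contradiction that there are a game $G$ on $A$ and $p\in f(G)\setminus\ce(G)$. Then there are $j\in N$ and $a_j,b_j\in A_j$ with
\[
\sum_{a_{-j}} p(a_j,a_{-j})\bigl(G_j(a_j,a_{-j})-G_j(b_j,a_{-j})\bigr)<0 .
\]
In particular $p(a_j,\cdot)\not\equiv 0$. The plan follows the outline in \Cref{sec:proofs}. We construct a second game $G'$ in which $p$ is a correlated equilibrium, so that $p\in f(G')$ by \Cref{lem:ce-subseteq-f}; for this we use its extension to arbitrary games via consistency, as remarked after that lemma. We then choose $G'$ so that in $\frac12 G+\frac12 G'$ the action $a_j$ is strictly dominated in pure strategies. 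Consistency gives $p\in f(\frac12 G+\frac12 G')$, while $p(a_j,\cdot)\not\equiv0$, contradicting rationality.

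\textbf{Construction of $G'$.} Only player $j$'s payoff matters for making $a_j$ dominated, so we set $G'_i=-G_i$ for all $i\neq j$. Then players other than $j$ get payoff $0$ in the mixture, and in $G'$ their correlated-equilibrium constraints need to be handled. To avoid this issue, it is cleaner to take $G'_i\equiv 0$ for $i\neq j$ and to use \Cref{lem:adding-constants} together with positive homogeneity. Concretely, we look for $D_j\colon A\to\mathbb R$ such that
\begin{itemize}
\item[(a)] $D_j(b_j,a_{-j})-D_j(a_j,a_{-j}) > G_j(a_j,a_{-j})-G_j(b_j,a_{-j})$ for every $a_{-j}$, so that $b_j$ dominates $a_j$ in $G_j+D_j$;
\item[(b)] the game with player $j$'s payoff $D_j$ and all other payoffs zero has $p$ as a correlated equilibrium.
\end{itemize}
Condition (b) only constrains player $j$: for each recommendation $c_j$ and each deviation $e_j$ we need $\langle p(c_j,\cdot),D_j(c_j,\cdot)-D_j(e_j,\cdot)\rangle\ge 0$.

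We take $D_j(e_j,\cdot)\equiv 0$ for $e_j\neq b_j$ and
\[
D_j(b_j,a_{-j})=G_j(a_j,a_{-j})-G_j(b_j,a_{-j})+\varepsilon - \kappa ,
\]
with $\varepsilon>0$ small and the constant $\kappa$ to be chosen. Condition (a) holds exactly when $\kappa<\varepsilon$. For recommendations $c_j\neq b_j$, the only nontrivial constraint is $\langle p(c_j,\cdot),-D_j(b_j,\cdot)\rangle\ge 0$. For $c_j=b_j$ we need $\langle p(b_j,\cdot),D_j(b_j,\cdot)\rangle\ge0$.

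The two kinds of constraint pull in opposite directions, so a constant $\kappa$ is too crude. Instead we choose $D_j(b_j,\cdot)=T+\varepsilon$, where $T$ satisfies the pointwise lower bound $T\ge G_j(a_j,\cdot)-G_j(b_j,\cdot)$ with equality on $\supp p(a_j,\cdot)$. There $T$ has strictly negative $p(a_j,\cdot)$-weighted sum, by the violated inequality. On profiles outside $\supp p(a_j,\cdot)$, $T$ is set very large where $p(b_j,\cdot)$ has mass, and chosen to keep the other constraints manageable.

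If this single-deviation adjustment fails for some $c_j\notin\{a_j,b_j\}$, then we instead make $b_j$ worse for those recommendations by putting the payoff difference on a clone. Using consequentialism, we add a clone $a_j'$ of $a_j$ that carries no probability, and let the dominance be of $b_j$ over $a_j$ only. We would also check whether it suffices to make $a_j$ dominated by the mixture-free pure action $b_j$ while leaving other actions untouched, which is all rationality requires.

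\textbf{Main obstacle.} The hard step is choosing $D_j$ so that both (a) and the correlated-equilibrium inequalities for every recommendation of player $j$ hold simultaneously. The key is to use the one strictly negative inequality for recommendation $a_j$ as slack. The first approach to try is a linear-programming (Farkas) argument. Infeasibility of the system ``$p\in\ce(0,\dots,D_j,\dots,0)$ and (a)'' would yield nonnegative multipliers. These multipliers would produce a convex combination showing that the original deviation from $a_j$ to $b_j$ is profitable under $p$ in the opposite direction, contradicting the strict violation. Since all the constraints are linear in $D_j$, this should go through. Once $D_j$ exists, the remainder follows from the lemmas already stated.
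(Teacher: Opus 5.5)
\textbf{There is a genuine gap.} Your overall plan is the paper's: find $G'$ with $p\in\ce(G')$, get $p\in f(G')$ from \Cref{lem:ce-subseteq-f}, and make $a_j$ dominated in $\frac12G+\frac12G'$. The step you call the main obstacle, however, is false as you set it up, with $b_j$ itself as the dominating action. So no Farkas argument can establish it.

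Here is a counterexample. Take $A_j=\{a_j,b_j\}$, $A_{-j}=\{x,y\}$, and $p(a_j,x)=p(a_j,y)=p(b_j,x)=\frac13$. Let $g=G_j(a_j,\cdot)-G_j(b_j,\cdot)$ with $g(x)=-2$ and $g(y)=1$. The constraint at recommendation $a_j$ is violated, since $\frac13(g(x)+g(y))<0$. Write $\delta=D_j(b_j,\cdot)-D_j(a_j,\cdot)$. Then:
\begin{itemize}
\item condition (a) forces $\delta(y)>1$;
\item the constraint at recommendation $b_j$ forces $\delta(x)\ge 0$;
\item the constraint at recommendation $a_j$ forces $\delta(x)+\delta(y)\le 0$.
\end{itemize}
So the system (a)+(b) is infeasible. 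The Farkas multipliers exist, and they contradict nothing. Your fallback of making $T$ large where $p(b_j,\cdot)$ has mass outside $\supp p(a_j,\cdot)$ is unavailable here, because $\supp p(b_j,\cdot)\subseteq\supp p(a_j,\cdot)$. The obstruction is the incentive constraint at recommendation $b_j$. A zero-probability clone of $a_j$ does not remove it; it only adds constraints.

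\textbf{The missing idea is to clone the deviation action.} By consequentialism, add a clone $b_j'$ of $b_j$ with $p(b_j',\cdot)\equiv 0$; this extension is clone-symmetric. Let $b_j'$ be the dominating action, so there are no constraints at recommendation $b_j'$.

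Even then, a single nonzero row is not enough. Let $h=g+\eps\mathbf 1$ with $\eps>0$ small enough that $\langle p(a_j,\cdot),h\rangle<0$. Any recommendation $c_j$ with $\langle p(c_j,\cdot),h\rangle>0$ would want to deviate to $b_j'$. The paper handles this as follows, with all other players' payoffs set to zero:
\begin{itemize}
\item every action $c_j$ with $\langle p(c_j,\cdot),h\rangle\ge 0$, including $b_j'$, gets row $h$;
\item every remaining action, including $a_j$, gets row $0$.
\end{itemize}
Each part then consists of clones, and following any recommendation is optimal, so $p\in\ce(\hat G)$. In $\frac12G+\frac12\hat G$, the action $b_j'$ beats $a_j$ by $\eps/2$ at every profile, which contradicts rationality.

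Separately, you cannot assume positive homogeneity in this lemma. Passing to the homogeneous core gives a refinement of $f$, which is the wrong direction for proving $f\subseteq\ce$. Nothing in the argument needs it anyway.
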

\begin{proof}
	Assume for contradiction that there is a game $G$ on $A$ such that $f( G)\setminus \ce( G)\neq \emptyset$, and let $p \in f( G)\setminus \ce( G)$.
	Then, there are $j\in N$, $b_j,c_j\in A_j$, and $\eps > 0$ such that
	\begin{align}
	\label{eq:f-subseteq-ce1}
		\langle p(c_j,\cdot),G_j(c_j,\cdot) - G_j(b_j,\cdot) + \eps\mathbf 1\rangle < 0
	\end{align}
	That is, $j$ can increase its payoff by more than $\eps$ by playing $b_j$ instead of $c_j$ when receiving the signal $c_j$.
	By introducing a clone of $b_j$ and using consequentialism, we may assume that $p(b_j,\cdot) \equiv 0$.
	Let 
	\begin{align*}
		A_j^- &= \{a_j\in A_j\colon \langle p(a_j,\cdot), G_j(c_j,\cdot) - G_j(b_j,\cdot) + \eps\mathbf 1 \rangle < 0\} \text{, and}\\
		A_j^+ &= \{a_j\in A_j\colon \langle p(a_j,\cdot), G_j(c_j,\cdot) - G_j(b_j,\cdot) + \eps\mathbf 1\rangle \ge 0\}
	\end{align*}
	That is, $A_j^-$ is the set of actions $a_j$ of $j$ so that if the signal is $a_j$, then the payoff for $b_j$ is higher than that of $c_j$ by more than $\eps$.
	Note that $b_j\in A_j^+$ (since $p(b_j,\cdot)\equiv 0$), and that $c_j\in A_j^-$ by \eqref{eq:f-subseteq-ce1}.
	Let $\hat G$ be the game on $A$ such that for each $a_j\in A_j^-$, $\hat G_j(a_j,\cdot)\equiv 0$, for each $a_j\in A_j^+$, $\hat G_j(a_j,\cdot) = G_j(c_j,\cdot) - G_j(b_j,\cdot) + \eps\mathbf 1$, and for each $i\neq j$, $\hat G_i\equiv 0$.
	Observe that $A_j^-$ and $A_j^+$ are sets of clones in $\hat G$, respectively.
	Then, $p\in \ce(\hat G)$ by definition of $A_j^-$ and $A_j^+$, and $p \in f(\hat G)$ by \Cref{lem:ce-subseteq-f}.
	
	To conclude, let $\bar G = \frac12 G + \frac12 \hat G$.
	Consistency implies that $p\in f(\bar G)$. 
	Note that $b_j$ dominates $c_j$ in $\bar G$.
	Since $p(c_j,\cdot)\not\equiv 0$ by \eqref{eq:f-subseteq-ce1}, this contradicts rationality.
\end{proof}

\begin{proof}[Proof of \Cref{thm:ce}]
	By \Cref{lem:ce-is-nice}, \ce satisfies the axioms.
	It remains to show that \ce is the only such solution concept.
	
	By \Cref{prop:zero-sum-two-player}, for all distinct $j,k\in N$, there is $\gamma_{j,k} > 0$ such that $\nash( G)\subseteq f( G)$ for each $(j,k,\gamma_{j,k})$-zero-sum game $ G$. 
	Then \Cref{lem:f-subset-ce} implies that $f\subseteq \ce$.
	To show that $\ce\subseteq f$, let $G$ be a game on $A$, and let $p\in \ce(G)$.
	For each $j\in N$, let $G^j$ be the game on $A$ with $G_j^j = n G_j$, and for each $i\neq j$, $G^j_i \equiv 0$.
	Note that $G = \frac1n \sum_{j\in N} G^j$, and for each $j\in N$, $p\in \ce(G^j)$.
	\Cref{lem:ce-subseteq-f} implies that $p\in f(G^j)$ for each $j\in N$.
	Thus, $p \in f(G)$ by consistency, concluding the proof.
\end{proof}

\section{Characterization of \cce}\label{sec:appendix-cce}

\begin{lemma}[\cce satisfies the axioms]\label{lem:cce-is-nice}
	\cce satisfies totality, continuity, convex-valuedness, consistency, strong consequentialism, and weak rationality.
\end{lemma}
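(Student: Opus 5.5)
We verify the six properties one at a time, mirroring the proof of \Cref{lem:ce-is-nice}. Throughout, we write the coarse correlated equilibrium condition for player $i$ and deviation $b_i$ as the linear inequality
\[
\sum_{a\in A} p(a)\bigl(G_i(a) - G_i(b_i,a_{-i})\bigr)\ge 0 .
\]
This inequality is linear in $p$ for fixed $G$, and linear in $G$ for fixed $p$.

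\emph{Totality} follows because $\nash\subseteq\ce\subseteq\cce$ and Nash equilibria exist in finite games. \emph{Convex-valuedness} holds because $\cce(G)$ is an intersection of half-spaces with the simplex. \emph{Consistency} holds because each constraint is linear in $G$: if $p$ satisfies it for $G$ and for $G'$, then it satisfies it for $\lambda G+(1-\lambda)G'$. \emph{Continuity} holds because the left-hand side is jointly continuous in $(p,G)$, so the graph of $\cce$ over games on a fixed $A$ is closed. Since $\Delta(A)$ is compact, this gives upper hemi-continuity.

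For \emph{strong consequentialism}, let $G=G'\circ\phi$ and $p\in\Delta(A)$ be arbitrary, and put $p'=\phi_*(p)$. For any $b_i\in A_i$ we have $G_i(b_i,a_{-i})=G'_i(\phi_i(b_i),\phi_{-i}(a_{-i}))$ and $G_i(a)=G'_i(\phi(a))$. Grouping the sum over $A$ by fibers of $\phi$ then gives
\[
\sum_{a\in A}p(a)\bigl(G_i(a)-G_i(b_i,a_{-i})\bigr)=\sum_{a'\in A'}p'(a')\bigl(G'_i(a')-G'_i(\phi_i(b_i),a'_{-i})\bigr).
\]
Because $\phi_i$ is surjective, the deviation $\phi_i(b_i)$ ranges over all of $A'_i$ as $b_i$ ranges over $A_i$. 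Hence the family of constraints for $(G,p)$ coincides with the family for $(G',p')$, and $p\in\cce(G)$ if and only if $\phi_*(p)\in\cce(G')$. No clone-symmetry of $p$ is needed here. This is the one point where care is required: we must check that the deviation payoff depends on $p$ only through its image under $\phi$. It does, because the deviating player's own recommendation is summed out.

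For \emph{weak rationality}, suppose $b_i$ is dominant for $i$ and $p\in\cce(G)$. Assume for contradiction that $p(b_i,\cdot)\equiv 0$. Then every $a\in\supp(p)$ has $a_i\neq b_i$, so $G_i(a)<G_i(b_i,a_{-i})$ for each such $a$. The constraint for deviation $b_i$ therefore has a strictly negative left-hand side, which is a contradiction. (In fact this argument shows that dominant actions are played with probability $1$ whenever any other action gets positive mass, but only positivity is needed.)
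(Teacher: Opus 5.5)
Your proof is correct and takes the same route as the paper. The paper handles totality, continuity, convex-valuedness, and consistency via the linearity of the defining constraints, as for correlated equilibrium, and argues strong consequentialism and weak rationality exactly as you do, only more tersely. Your fiber-grouping identity and your strict-negativity argument for the deviation to $b_i$ supply the details the paper leaves implicit.

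One wording slip: your closing parenthetical is self-contradictory as written. What your computation shows is that any positive mass on profiles with $a_i\neq b_i$ makes the $b_i$-constraint strictly negative, so $b_i$ is played with probability $1$.
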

\begin{proof}
	Totality, continuity, convex-valuedness, and consistency follow from arguments similar to those used in \Cref{lem:ce-is-nice} to show that \ce has these properties.
	\cce satisfies strong consequentialism since for any game $G$, whether a correlated strategy satisfies the constraints defining $\cce(G)$ does not depend on how probability is distributed over clones.
	Lastly, \cce satisfies weak rationality since if $b_i$ is a dominant action, player $i$'s expected utility of committing to $b_i$ is weakly higher than that of always following their recommendation for any correlated strategy.
\end{proof}

\begin{lemma}[Containment of \cce]
	\label{lem:cce-subseteq-f}
	Let $f$ be a total, continuous, and convex-valued solution concept that satisfies consistency, strong consequentialism, and weak rationality.
	Then, $\cce\subseteq f$.
\end{lemma}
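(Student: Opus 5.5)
We first establish $\ce\subseteq f$. \Cref{prop:zero-sum-two-player} holds in case (i), so \Cref{lem:ce-subseteq-f} applies: for every game in which only one player has non-constant payoffs, $f$ returns all its correlated equilibria. Splitting an arbitrary $G$ as $\frac1n\sum_j G^j$ with $G^j_j=nG_j$, exactly as in the proof of \Cref{thm:ce}, and applying consistency (after passing to the homogeneous refinement from \Cref{lem:homogeneous-core}) gives $\ce(G)\subseteq f(G)$ for every game. The task is then to upgrade from $\ce$ to $\cce$. The plan follows the outline in \Cref{sec:proofs}: use strong consequentialism to restructure a coarse correlated equilibrium, then decompose the game via consistency into games where the restructured strategy is a correlated equilibrium.

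\emph{Reductions.} Fix a game $G$ and $p\in\cce(G)$. By continuity, we may perturb so that $p$ has rational weights and every coarse constraint holds with a strict slack. Since the coarse constraints are linear in $G$, the decomposition $G=\frac1n\sum_j G^j$ again shows that $p\in\cce(G^j)$ for each $j$. By consistency, it therefore suffices to treat games in which only one player $j$ has a non-constant payoff. Next, as in the first part of the proof of \Cref{lem:support}, we blow up every player's action set by a common label set $B$. Each label $\beta\in B$ is mapped to an action profile $\sigma(\beta)\in\supp(p)$, and the number of labels over each profile is proportional to its probability. Let $\hat p=\uni\{(\beta,\dots,\beta)\colon\beta\in B\}$. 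Then $\phi_*(\hat p)=p$, so by strong consequentialism, $p\in f(G)$ if and only if $\hat p\in f(\hat G)$. This $\hat p$ is not clone-symmetric, and that is exactly the freedom the argument needs. Moreover, $\hat p$ is invariant under every permutation of $B$ applied diagonally to all players. By equivariance, $f(\hat G\circ\pi)$ contains $\hat p$ whenever $f(\hat G)$ does, and vice versa.

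\emph{Decomposition.} The goal is to write $\hat G$ as a positive combination $\hat G=\lambda D+\mu E$ with $\lambda,\mu>0$, where $\hat p\in\ce(D)\cap\ce(E)$. Then $\ce\subseteq f$, consistency and positive homogeneity give $\hat p\in f(\hat G)$. For player $j$, the correlated-equilibrium constraint in a candidate game $H$ at recommendation $\beta$ and deviation $c_j$ reads $H_j(\beta,\beta_{-j})\ge H_j(c_j,\beta_{-j})$, where $\beta_{-j}$ denotes the profile in which every other player receives label $\beta$. This is a pointwise inequality, because $\hat p$ reveals the full profile $\sigma(\beta)$ to player $j$. I would take $E$ to be the diagonal symmetrization $\frac1{|B|!}\sum_{\pi}\hat G\circ\pi$ over permutations $\pi$ of $B$. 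In $E$, the payoff at $(\beta,\beta_{-j})$ equals $\mathbb E_p[G_j]$. The payoff from deviating to a label $\beta'\neq\beta$ equals the average of $G_j(\sigma(\beta')_j,\sigma(\beta'')_{-j})$ over pairs of distinct labels $\beta'\neq\beta''$, which approximates $\mathbb E_{a_j\sim p_j}G_j(a_j,p_{-j})$. Deviations to original actions $a_j\in A_j$ give $G_j(a_j,p_{-j})$ exactly. So the coarse constraints, together with the strict slack and $|B|$ large, should make $\hat p$ a strict correlated equilibrium of $E$. Taking $\mu$ large relative to $\lambda$, the remainder $D=(\hat G-\mu E)/\lambda$ has its $j$-payoffs dominated by $E$ at the relevant entries. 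Any pointwise violation can then be absorbed by shifting mass between $D$ and $E$, together with games of the type in \Cref{lem:adding-constants}, where player $j$'s payoff depends only on the others' actions.

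\emph{Main obstacle.} The hard step is exactly this last decomposition. The difficulty is that $\hat G-\mu E$ has the wrong sign structure, since making $\mu$ large pushes $D$ toward $-E$, for which $\hat p$ is typically not a correlated equilibrium. What I would try first is to split $\hat G$ instead as $\frac12(\hat G+\hat G\circ\tau)+\frac12(\hat G-\hat G\circ\tau)$ over transpositions $\tau$ of labels. This exploits that averaging $\hat G$ over all diagonal permutations is what turns per-recommendation constraints into the ex-ante (coarse) constraint. The difference terms would be handled as in \Cref{lem:support}, by adding many fresh clones so that their effect on each pointwise constraint is arbitrarily small. Finally, I would pass to the limit using continuity to remove both the strict slack and the rationality of $p$.
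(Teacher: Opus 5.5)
Your preliminary steps match the paper: deriving $\ce\subseteq f$, reducing to a single active player $j$, and passing to the diagonal lift $\hat p$ via strong consequentialism. The gap is the decomposition, and the target you set there is not just hard but unattainable.

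\textbf{Why the proposed decomposition cannot work.} For fixed $\hat p$, the correlated-equilibrium inequalities are linear in the game. So $\hat G=\lambda D+\mu E$ with $\lambda,\mu\ge 0$ and $\hat p\in\ce(D)\cap\ce(E)$ would give $\hat p\in\ce(\hat G)$. Because $\hat p$ reveals $\sigma(\beta)$ to player $j$, this says $G_j(\sigma(\beta))\ge G_j(a_j,\sigma(\beta)_{-j})$ for every $\beta\in B$ and $a_j\in A_j$, which fails whenever $p\notin\ce(G)$.

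More generally, after the blow-up every step you use is an operation under which $\ce$ itself is closed: membership via $\ce\subseteq f$, consistency, positive homogeneity, equivariance, adding constants, cloning with zero or clone-symmetric mass, convexity, and continuity. No combination of these can certify $\hat p\in f(\hat G)$ when $\hat p\notin\ce(\hat G)$. In particular, neither the transposition split nor the fresh-clone dilution from \Cref{lem:support} helps. For the piece $\hat G\circ\tau$, $\hat p\in f(\hat G\circ\tau)$ is, by equivariance, equivalent to the claim itself. In any split of $\hat G$, at least one piece must have $\hat p$ outside $\ce$. Your $E$ is harmless: in fact $\hat p\in\ce(E)$ holds exactly, with no slack or large $|B|$ needed, since averaging the coarse constraints over $b_j\sim p_j$ gives $\mathbb E_pG_j\ge\mathbb E_{p_j\otimes p_{-j}}G_j$. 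But it cannot be completed to a decomposition.

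\textbf{The missing idea.} You need a second, essential use of strong consequentialism, this time in the merging direction. Each component should be a blow-up of a coarser game in which some of player $j$'s labels are merged into one action, chosen so that the image of $\hat p$ is a correlated equilibrium there. Strong consequentialism then yields $\hat p\in f$ of the component, even though $\hat p$ is not a correlated equilibrium of it. For $|B|\ge 2$, the paper writes $\hat G=H^0+\sum_{c\in B}H^c+H^{\mathrm{off}}$, with only $j$'s payoffs nonzero.
\begin{itemize}
\item In $H^0$, all labels of $j$ are clones of a single action $b^*$. Set $H^0_j(c,\beta_{-j})=G_j(\sigma(\beta))$ for all $c,\beta\in B$, $H^0_j=\hat G_j$ when $j$ plays an original action, and $H^0_j=0$ when $j$ plays a label while the others do not play a common label. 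Since $b^*$ is always recommended, its correlated-equilibrium constraints are exactly the coarse constraints against original actions.
\item In $H^c$, the label $c$ is kept and $B\setminus\{c\}$ is merged into one action $c'$. Set $H^c_j(c,\beta_{-j})=\hat G_j(c,\beta_{-j})-G_j(\sigma(\beta))$ and $H^c_j=0$ otherwise. The correlated-equilibrium constraint at recommendation $c'$ is the coarse constraint against deviating to $c$, and at recommendation $c$ all payoffs vanish.
\item $H^{\mathrm{off}}_j$ equals $\hat G_j$ when $j$ plays a label while the others do not play a common label, and is $0$ otherwise. So $\hat p\in\ce(H^{\mathrm{off}})$ trivially.
\end{itemize}

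Separately, the paper obtains rational weights by taking $p$ to be a vertex of $\cce(G)$ for rational payoffs and then approximating. This avoids your strict-slack perturbation, which is not available in general.
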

\begin{proof}
	By \Cref{lem:homogeneous-core}, we may assume that $f$ is positively homogeneous.	
	Several times during the proof, we use that $\ce(G)\subseteq f(G)$ for any game $G$ by \Cref{lem:ce-subseteq-f} and \Cref{prop:zero-sum-two-player}.
	Let $G$ be a game on $A$, and let $p \in \cce(G)$.
	We prove that $p\in f(G)$.
	By consistency and positive homogeneity, we may assume that there is $i\in N$ such that $G_j\equiv 0$ for all $j\neq i$.
	
	First, consider the case that $G_i \in\mathbb Q^A$.
	Let $p$ be a vertex of the convex polytope $\cce(G)$, and note that $p\in \Delta(A)\cap \mathbb Q^A$.
	Strong consequentialism allows reducing to a simpler situation.
	\setcounter{claim}{0}
	\begin{claim}\label{claim:cce-subseteq-f1}
		We may assume that there is $B\in\mathcal F(U)$ such that for each $i\in N$, $B\subseteq A_i$, and $p = \uni(D)$, where $D = \{(b,\dots,b)\colon b \in B\}$ is the diagonal in $B^n$.
	\end{claim}
	\begin{proof}[Proof of \Cref{claim:cce-subseteq-f1}]
	Choose $k\in\mathbb N$ such that $kp(a)\in\mathbb N \cup\{0\}$ for each $a\in A$.
	Let $B\in\mathcal F(U)$ be disjoint from $\bigcup_{j\in N}A_j$ with $|B|=k$.
	Since $\sum_{a\in A} k p(a)=k$, we can fix a map $\sigma\colon B\to A$ such that for each $a \in A$,
	\[
		|\sigma^{-1}(a)| = kp(a)
	\]
	For each $j\in N$, let $\tilde A_j=A_j\cup B$ and $\tilde A = \tilde A_1\times\cdots\times \tilde A_n$.
	Let $\phi\colon\tilde A\to A$ be the surjection such that for each $j \in N$, $\phi_j(a_j)=a_j$ for each $a_j \in A_j$ and $\phi_j(b)=\sigma(b)_j$ for each $b\in B$.
	Let $\tilde G$ be the game on $\tilde A$ given by $\tilde G=G\circ\phi$, so that $\tilde G$ is a blow-up of $G$ with surjection $\phi$.
	Intuitively, for each action $a\in A$, we add $kp(a)$ clones of $a_j$ for each $j$.

	Let $\tilde p=\uni(D)$.
	For any $a\in A$,
	\[
		\phi_*(\tilde p)(a)
		=\sum_{\tilde a\in \phi^{-1}(a)} \tilde p(\tilde a)
		=\frac{1}{|B|}\,|\{b\in B:\sigma(b)=a\}|
		=\frac{k p(a)}{k}
		= p(a)
	\]
	and so $\phi_*(\tilde p)=p$.
	Since $\cce$ and $f$ both satisfy strong consequentialism, $p\in\cce(G)$ if and only if $\tilde p\in\cce(\tilde G)$, and	$p\in f(G)$ if and only if $\tilde p\in f(\tilde G)$.

	The triple $(\tilde G,\tilde A,\tilde p)$ has the properties claimed in \Cref{claim:cce-subseteq-f1}.
	Relabeling this triple to $(G,A,p)$ finishes the proof.
\end{proof}		

	If $|B| = 1$, then $p \in \ce(G)\subseteq f(G)$ and we are done.
	For the rest of the proof, assume that $|B| \ge 2$.
	We show that $G$ can be written as a sum of three types of games. 
	For each such game, $f$ returns $p$ as a consequence of $\ce\subseteq f$ and strong consequentialism.
	Consistency and homogeneity then imply that $p \in f(G)$.

	For $b^* \in B$, let $\tilde A = (\{b^*\}\cup (A_i\setminus B))\times A_{-i}$, and let $\tilde G^*$ be the game on $\tilde A$ such that 
	\begin{enumerate}[label=(\roman*)]
		\item $\tilde G^*_i(b^*,b,\dots,b) = G_i(b,\dots,b)$ for each $b \in B$,
		\item $\tilde G^*_i(b^*,a_{-i}) = 0$ for each $a_{-i}\in A_{-i}\setminus D_{-i}$,
		\item $\tilde G^*_i(a) = G_i(a)$ for each $a\in A$ with $a_i\in A_i\setminus B$, and
		\item $\tilde G^*_j\equiv 0$ for each $j\neq i$.
	\end{enumerate}
	Let $\tilde p = \uni(\tilde D)$, where $\tilde D = \{b^*\}\times D_{-i}$.
	Observe that  $i$'s payoff for $\tilde p$ in $\tilde G^*$ equals $i$'s payoff for $p$ in $G$.
	Since $p\in \cce(G)$, it follows that $\tilde p\in \ce(\tilde G^*)\subseteq f(\tilde G^*)$.
	Let $\tilde G = \tilde G^*\circ \phi$  be the blow-up of $\tilde G^*$ with surjection $\phi\colon A\to \tilde A$, where $\phi_i^{-1}(b^*) = B$, $\phi_i$ is the identity on $A_i\setminus B$, and for each $j\neq i$, $\phi_j$ is the identity on $A_j$.
	That is, $\tilde G$ is obtained from $\tilde G^*$ by replacing $b^*$ by $|B|$ clones.
	Strong consequentialism and $\tilde p \in f(\tilde G^*)$ imply that $p\in f(\tilde G)$.
	Note that $\tilde G_i$ agrees with $G_i$ on $D$ and on $(A_i\setminus B)\times A_{-i}$.
	
	Let $c,c'\in B$ be distinct, and let $\tilde G^{c}$ be the game on $\tilde A^c = (\{c,c'\}\cup (A_i\setminus B))\times A_{-i}$ such that
	\begin{enumerate}[label=(\roman*)]
		\item $\tilde G^{c}_i(c,b,\dots,b) = G_i(c,b,\dots,b) - G_i(b,\dots,b)$ for each $b \in B$,
		\item $\tilde G^{c}_i(c,a_{-i}) = 0$ for each $a_{-i}\in A_{-i}\setminus D_{-i}$,
		\item $\tilde G^{c}_i(a) = 0$ for each $a\in \tilde A^c$ with $a_i\in \{c'\}\cup (A_i\setminus B)$, and
		\item $\tilde G^{c}_j \equiv 0$ for each $j\neq i$.
	\end{enumerate}
	Let $\tilde p^c = \uni(\{(c,\dots,c)\} \cup \{(c',b,\dots,b)\colon b\in B\setminus\{c\}\})\in\Delta(\tilde A^c)$.
	Since $p\in \cce(G)$, we have $\sum_{b\in B} p(b,\dots,b) \left(G_i(b,\dots,b) - G_i(c,b,\dots,b)\right) \ge 0$.
	That is, player $i$ cannot profitably deviate to $c$ before observing their recommendation in $G$.
	Hence, player $i$ cannot profitably deviate to $c$ when their recommendation is $c'$ in $\tilde G^c$.
	Moreover, $\tilde G^c_i(a_i,c,\dots,c)= 0$ for each $a_i \in \tilde A_i^c$, and so player $i$ cannot profitably deviate to any other action when their recommendation is $c$.
	Hence, $\tilde p^c\in \ce(\tilde G^{c})\subseteq f(\tilde G^{c})$.
	Let $G^c = \tilde G^c \circ \phi$ be the blow-up of $\tilde G^c$ with surjection $\phi\colon A\to \tilde A^c$, where $\phi_i^{-1}(c') = B\setminus\{c\}$, $\phi_i$ is the identity on $(A_i\setminus B)\cup\{c\}$, and for each $j\neq i$, $\phi_j$ is the identity on $A_j$.
	Strong consequentialism implies that $p\in f(G^c)$.
	
	Lastly, let $\hat G$ be the game on $A$ with 
	\begin{enumerate}[label=(\roman*)]
		\item $\hat G_i(a) = 0$ for each $a \in A$ with $a_i\in B$ and $a_{-i}\in D_{-i}$,
		\item $\hat G_i(a) = G_i(a)$ for each $a\in A$ with $a_i\in B$ and $a_{-i}\not\in D_{-i}$,
		\item $\hat G_i(a) = 0$ for each $a\in A$ with $a_i\in A_i\setminus B$, and
		\item $\hat G_j \equiv 0$ for each $j\neq i$.
	\end{enumerate}
	Since $\hat G_i(a) = 0$ for each $a\in A$ with $a_{-i}\in D_{-i}$, it follows that $p\in \ce(\hat G)\subseteq f(\hat G)$.
	
	Observe that $G = \tilde G + \hat G + \sum_{c\in B} G^c$.
	Thus, consistency and positive homogeneity imply that $p\in f(G)$.
	This shows that $f(G)$ contains all vertices of $\cce(G)$, and thus that $\cce(G)\subseteq f(G)$ by convex-valuedness.
	
	Second, consider the general case $G_i\in\mathbb R^A$, and let $p$ be a vertex of $\cce(G)$.
	By convex-valuedness, it suffices to prove that $p \in f(G)$.
	There exist sequences of games $(G^t)_{t\in\mathbb N}$ on $A$ and $(p^t)_{t\in\mathbb N}\subseteq\Delta(A)\cap \mathbb Q^A$ such that, for each $t\in\mathbb N$, $G_i^t\in \mathbb Q^A$, $G_j^t\equiv 0$ for all $j\neq i$, and $p^t \in \cce(G^t)$ and $G_i^t\to G_i$ and $p^t\to p$.\footnote{This follows from a standard perturbation argument and the fact that $\cce(G)$ is the set of solutions to a linear system.
	Consider a polytope
	\[
		P=\{x\in\mathbb R^d:Cx=e,\ Ax\le b\},
	\]
	where, in the present application, \(Cx=e\) consists of the fixed simplex equality together with the zero-probability constraints defining the face of the simplex that contains \(p\). Let
	\[
		I=\{r:a_r^\top p=b_r\}
	\]
	be the full set of active inequalities at \(p\). Since \(p\) is a vertex, there is \(J\subseteq I\) such that the rows of \(\binom{C}{A_J}\) have rank \(d\). Choose \(p^t\in\Delta(A)\cap \mathbb Q^A\) with \(p^t\to p\), \(p^t(a)=0\) whenever \(p(a)=0\), and \(p^t(a)>0\) whenever \(p(a)>0\). For each active constraint \(r\in I\), choose rational perturbations \(a_r^t,b_r^t\) with \(a_r^t\to a_r\), \(b_r^t\to b_r\), and \(a_r^{t\top}p^t\le b_r^t\), making the constraints in \(J\) tight. For \(r\notin I\), the slack \(b_r-a_r^\top p\) is positive, so sufficiently close rational perturbations still satisfy \(a_r^{t\top}p^t<b_r^t\). The rows corresponding to \(J\) remain independent for large \(t\), so \(p^t\) is a feasible basic solution of the perturbed rational system. Applying this to the linear system defining \(\cce(G)\), with the fixed simplex constraints left unchanged and the payoff-dependent rows induced by rational payoff vectors \(G_i^t\to G_i\), gives \(p^t\in\cce(G^t)\), \(p^t\in\Delta(A)\cap \mathbb Q^A\), and \(p^t\to p\).}

	Then it follows from the first part that $p^t \in f(G^t)$ for each $t\in\mathbb N$, and thus $p\in f(G)$ by continuity.
	This finishes the proof.
\end{proof}

\begin{lemma}[Containment in \cce]
	\label{lem:f-subset-cce}
	Let $f$ be a total, continuous, and convex-valued solution concept that satisfies consistency, strong consequentialism, and weak rationality.
	Then, $f\subseteq \cce$.
\end{lemma}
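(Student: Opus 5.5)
We argue by contradiction, mirroring the proof of \Cref{lem:f-subset-ce}, but now producing a \emph{dominant} action that is never played, which contradicts weak rationality. Suppose $p\in f(G)\setminus\cce(G)$. Then there are $j\in N$, $b_j\in A_j$, and $\eps>0$ with
\begin{align*}
	\sum_{a\in A} p(a)\bigl(G_j(a_j,a_{-j}) - G_j(b_j,a_{-j}) + \eps\bigr) < 0 .
\end{align*}
We use strong consequentialism to add a clone $b_j'$ of $b_j$ and move all of $p(b_j,\cdot)$ onto $b_j'$. This keeps $p\in f$ and $p\notin\cce$, the same inequality holds with $b_j$ as the deviation, and now $p(b_j,\cdot)\equiv 0$. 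Strong consequentialism is needed here because the split need not be clone-symmetric. We again rename the result $(G,A,p)$. By \Cref{lem:homogeneous-core} we may also assume that $f$ is positively homogeneous.

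Next we construct an auxiliary game $\hat G$ with $p\in\ce(\hat G)\subseteq f(\hat G)$, using \Cref{lem:ce-subseteq-f} together with \Cref{prop:zero-sum-two-player}, such that in $\bar G=\frac12 G+\frac12\hat G$ the action $b_j$ dominates every other action of $j$. Only player $j$'s payoff is touched; we set $\hat G_i\equiv 0$ for $i\neq j$. For $\bar G$ we still need $b_j$ to be dominant for the other players' sake, so we also add games handling $i\neq j$ as in \Cref{lem:f-subset-ce}; since only $j$'s dominance matters, we simply keep $\hat G_i\equiv 0$.

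The natural target is $\hat G_j(a_j,\cdot)=G_j(b_j,\cdot)-G_j(a_j,\cdot)+\eps$ for $a_j\neq b_j$, since then $\bar G_j(b_j,\cdot)-\bar G_j(a_j,\cdot)$ has the right sign. The difficulty is that $p$ must be a correlated equilibrium of $\hat G$, and action-by-action incentives fail; only the aggregate inequality holds. To resolve this, we exploit strong consequentialism once more to make all played actions of $j$ clones of a single action. First reduce, as in \Cref{claim:cce-subseteq-f1} (using rationality of $p$ at a vertex, or by approximation and continuity), to $p=\uni(D)$ on a diagonal.

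Then build $\hat G$ in two parts. The first part is a game in which all played actions of $j$ are clones with payoff function $h(a_{-j})=\frac1{|B|}\sum_{b}$ (the averaged deviation gain). The second part is a correction that vanishes on the support of $p$. Since the aggregate condition says exactly that following the single merged recommendation beats $b_j$, $p$ is a correlated equilibrium of the merged game in which $b_j$ gets payoff $0$ on the support and the merged action gets a payoff strictly below it by $\eps$ in expectation. Strong consequentialism transports $f$-membership back to the blow-up.

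Finally, consistency gives $p\in f(\bar G)$. Because $b_j$ is dominant in $\bar G$ and $p(b_j,\cdot)\equiv 0$, this contradicts weak rationality. The main obstacle is arranging $\hat G$ so that $b_j$ is strictly dominant \emph{pointwise} in $\bar G$, not merely in expectation under $p$. Off-support entries can be chosen freely, since they do not affect the correlated-equilibrium property for recommendations of zero probability. On-support entries are forced by the equilibrium constraints, so we plan to use the decomposition of \Cref{lem:decomposition} to distribute the slack $\eps$ across support profiles.
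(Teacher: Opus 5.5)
Your skeleton (diagonalize via strong consequentialism, find an auxiliary game in which $f$ returns the strategy, mix, contradict weak rationality) is right. The gap is the decisive step: a game $\hat G$ with $p\in f(\hat G)$ and $b_j$ strictly dominant in the mixture. You never carry it out, and the pieces you state point the wrong way.

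\begin{itemize}
\item \emph{$p\in\ce(\hat G)$ is not attainable.} After diagonalization, a recommendation $b$ reveals $(b,\dots,b)_{-j}$. So $p\in\ce(\hat G)$ together with dominance in $\frac12G+\frac12\hat G$ would force $g(b)=G_j(b_j,b,\dots,b)-G_j(b,\dots,b)>0$ at every support point. The aggregate inequality gives only $\sum_b p(b,\dots,b)\,g(b)>\eps$.
\item \emph{Sign errors.} With $+\eps$ in your target and $\hat G_j(b_j,\cdot)\equiv0$, you get $\bar G_j(b_j,\cdot)-\bar G_j(a_j,\cdot)\equiv-\eps/2$; it must be $-\eps$. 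Likewise, in the merged game the merged action must earn \emph{more} than $b_j$ in expectation, not less, or $\phi_*(p)\notin\ce$.
\item \emph{The choice of $h$.} An ``averaged'' $h$ breaks pointwise dominance wherever $g(b)$ is below average. What you need is $h((b,\dots,b)_{-j})=g(b)-\eps$. With that choice your ``main obstacle'' disappears, and \Cref{lem:decomposition} plays no role.
\item \emph{Off-support entries.} Entries $\hat G_j(c,(b,\dots,b)_{-j})$ with $c\neq b$ lie off the support, but they are deviation payoffs for the positive-probability recommendation $b$. They can only be \emph{lowered} freely, not chosen freely.
\end{itemize}

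The missing idea is that \Cref{lem:cce-subseteq-f} is already proved. So you only need $\tilde p\in\cce(\hat G)$, and that is exactly what the aggregate inequality gives. In your notation, the paper does the following.
\begin{itemize}
\item It blows up to a weighted diagonal $\tilde p(b,\dots,b)=p(\sigma(b))$, with $B$ disjoint from the original actions, so $\tilde p(b_j,\cdot)\equiv0$ automatically.
\item It sets $\hat G_j(b,\dots,b)=g(b)-\eps$, $\hat G_j(b_j,\cdot)\equiv0$, $\hat G_j=-C$ on all other profiles, and $\hat G_k\equiv0$ for $k\neq j$.
\item Then $\tilde p\in\cce(\hat G)\subseteq f(\hat G)$, and $b_j$ is pointwise dominant in $\frac12\tilde G+\frac12\hat G$.
\end{itemize}
The weighted diagonal matters. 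Your reduction to $\uni(D)$ is invalid: $p$ is an arbitrary, possibly irrational, point of $f(G)\setminus\cce(G)$, not a vertex. Upper hemicontinuity transfers membership from approximants to limits, not conversely.

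Your merged-clone idea can be completed into a valid alternative:
\begin{itemize}
\item Let $\hat G^1$ make all of $j$'s actions other than $b_j$ clones, with payoff $g(b)-\eps$ at $(b,\dots,b)_{-j}$ and $0$ elsewhere, and give $b_j$ payoff $0$. Then $\phi_*(p)$ is a correlated equilibrium of the merged game, and strong consequentialism gives $p\in f(\hat G^1)$.
\item Let $\hat G^2_j$ vanish on $D\cup(\{b_j\}\times A_{-j})$ and equal $-K$ elsewhere. Then $p\in\ce(\hat G^2)$.
\item For large $K$, $b_j$ is dominant in $\frac13(G+\hat G^1+\hat G^2)$.
\end{itemize}
This re-derives the special case of \Cref{lem:cce-subseteq-f} that is needed, but it is not what you wrote.

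Finally, passing to the homogeneous core runs the wrong way here. Since $\tilde f\subseteq f$, showing $\tilde f\subseteq\cce$ says nothing about $f$. Fortunately, homogeneity is not needed for this argument.
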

\begin{proof}
	Assume for contradiction that $f\not\subseteq \cce$.
	Then, there exist a game $G$ on $A$, $p\in f(G)$, $i\in N$, $a_i^*\in A_i$, and $\eps > 0$ such that
	\begin{align}
		\sum_{a\in A} p(a)\left(G_i(a) - G_i(a_i^*,a_{-i})\right) < -\eps
		\label{eq:f-subseteq-cce1}
	\end{align}
	That is, $i$ can increase its payoff by more than $\eps$ by deviating to $a_i^*$ before observing its signal.
	We use strong consequentialism to construct a game $\tilde G$ on $\tilde A$ and a correlated strategy $\tilde p \in f(\tilde G)\setminus \cce(\tilde G)$ such that for each $a_{-i} \in \tilde A_{-i}$, there is at most one action $a_i\in \tilde A_i$ with $\tilde p(a_i,a_{-i}) > 0$.
	This step is similar to the proof of \Cref{claim:cce-subseteq-f1} in \Cref{lem:cce-subseteq-f}.
	
	Let $B \in\mathcal F(U)$ be disjoint from $\bigcup_{j\in N} A_j$ and $|B| = |A|$.
	Fix a bijection $\sigma\colon B \to A$.
	Let $\tilde A = (A_1\cup B)\times \dots \times (A_n\cup B)$, and let $\tilde G = G\circ \phi$ be the blow-up of $G$ with surjection $\phi\colon \tilde A\to A$ such that for each $j \in N$ and $a_j \in A_j$, $\phi_j^{-1}(a_j) = \{a_j\} \cup \{b \in B\colon a_j = \sigma(b)_j\}$.
	That is, for each action $a_j$ of each player $j$, we add $|A|/|A_j| = |A_{-j}|$ clones of $a_j$ labeled $\sigma^{-1}(a_j,a_{-j})$, where $a_{-j}$ ranges over $A_{-j}$.
	Let $\tilde p\in\Delta(\tilde A)$ such that for each $b\in B$, $\tilde p(b,\dots,b) = p(\sigma(b))$.
	Hence, $\tilde p$ is supported on the diagonal $D = \{(b,\dots,b)\colon b \in B\}$ of $B^n$.
	By construction, $\phi_*(\tilde p) = p$, and so $\tilde p \in f(\tilde G)$ by strong consequentialism.
	By \eqref{eq:f-subseteq-cce1},
	\begin{align}
		\sum_{b\in B} \tilde p(b,\dots,b)\left(\tilde G_i(b,\dots,b) - \tilde G_i(a_i^*,b,\dots,b)\right) < -\eps
		\label{eq:f-subseteq-cce2}
	\end{align}
	
	Let $C > \max_{a,a'\in A} \bigl(G_i(a) - G_i(a')\bigr) + \eps$ be a number larger than the maximal payoff difference for $i$ between any two action profiles plus $\eps$.	
	Let $\hat G$ be the game on $\tilde A$ such that
	\begin{enumerate}[label=(\roman*)]
		\item $\hat G_i(b,\dots,b) = \tilde G_i(a_i^*,b,\dots,b) - \tilde G_i(b,\dots,b) - \eps$ for each $b\in B$,
		\item $\hat G_i(a_i^*,\cdot) \equiv 0$,
		\item $\hat G_i(a) = -C$ for each $a\in \tilde A\setminus D$ with $a_i\neq a_i^*$, and
		\item $\hat G_j\equiv 0$ for each $j\neq i$.
	\end{enumerate}
	Then, $\tilde p\in \cce(\hat G) \subseteq f(\hat G)$ by \eqref{eq:f-subseteq-cce2} and \Cref{lem:cce-subseteq-f}.
	
	Consistency implies that $\tilde p \in f(\frac12 \tilde G + \frac12 \hat G)$.
	Note that $a_i^*$ is a dominant action for $i$ in $\frac12 \tilde G + \frac12 \hat G$, and $\tilde p(a_i^*,\cdot) \equiv 0$.
	This contradicts weak rationality.
\end{proof}

\Cref{thm:cce} is immediate from \Cref{lem:cce-subseteq-f} and \Cref{lem:f-subset-cce}.

\end{document}